\tikzstyle{dot}=[circle,fill,inner sep=0pt,minimum size=3pt]
\newcommand\sig{\textup{sig}}
\newcommand\A{\mathcal{A}}
\renewcommand\epsilon{\varepsilon}
\newtheorem{example}{Example}
\newtheorem{theorem}{Theorem}
\newtheorem{lemma}{Lemma}
\newtheorem{definition}{Definition}
\newtheorem{remark}{Remark}
\title{Interpolants and Explicit Definitions in Extensions of the 
Description Logic $\mathcal{EL}$}
\author{%
Marie Fortin\and
Boris Konev\and
Frank Wolter \\
\affiliations
University of Liverpool\\
\emails
\{mfortin, konev, wolter\}@liverpool.ac.uk
}
\begin{document}

\maketitle

\begin{abstract}
 We show that the vast majority of extensions of the description logic $\mathcal{EL}$
 do not enjoy the Craig interpolation nor the projective Beth definability 
 property. This is the case, for  example, for $\mathcal{EL}$ with nominals, 
 $\mathcal{EL}$ with the universal role, $\mathcal{EL}$ with a role inclusion of the form 
 $r\circ s\sqsubseteq s$, and for $\mathcal{ELI}$. It follows in particular that 
 the existence of an 
 explicit definition of a concept or individual name cannot be reduced to subsumption 
 checking via implicit definability.
 We show that nevertheless the existence of interpolants and explicit definitions
 can be decided in polynomial time for standard tractable extensions of $\mathcal{EL}$ 
 (such as $\mathcal{EL}^{++}$) and in \ExpTime for $\mathcal{ELI}$ and various extensions. 
 It follows that these existence problems are not harder than subsumption which is in sharp contrast 
 to the situation for expressive DLs.
 We also obtain tight bounds for the size of interpolants and explicit definitions and the complexity of computing them: single 
 exponential for tractable standard extensions of $\mathcal{EL}$ and double exponential
 for $\mathcal{ELI}$ and extensions. We close with a discussion of Horn-DLs such as Horn-$\mathcal{ALCI}$.
\end{abstract}

\section{Introduction}

The \emph{projective Beth definability property} (PBDP) of a description logic (DL) $\Lmc$ states that 
a concept or individual name is explicitly definable under an $\Lmc$-ontology $\Omc$ by an $\Lmc$-concept using 
symbols from a signature $\Sigma$ of concept, role, and individual names if, and only if, it is 
implicitly definable using $\Sigma$ under $\Omc$. The importance of the PBDP for DL research stems from the 
fact that it provides a polynomial time reduction of the problem to decide the existence of an explicit definition 
to the well understood problem of subsumption checking. The existence of explicit definitions is important for numerous 
knowledge engineering tasks and applications of description logic ontologies, for example,
the extraction of equivalent acyclic TBoxes from ontologies~\cite{DBLP:conf/kr/CateCMV06,TenEtAl13},
the computation of referring expressions (or definite descriptions) for individuals~\cite{KR21defdes},
the equivalent rewriting of ontology-mediated queries into concepts~\cite{DBLP:conf/ijcai/SeylanFB09,DBLP:journals/lmcs/LutzSW19,DBLP:conf/dlog/TomanW21}, 
the construction of alignments between ontologies~\cite{DBLP:conf/ekaw/GeletaPT16}, and the decomposition of 
ontologies~\cite{DBLP:conf/kr/KonevLPW10}. 
 
The PBDP is often investigated in tandem with the \emph{Craig interpolation property} (CIP) which states that if an $\Lmc$-concept is subsumed by another $\Lmc$-concept under some $\Lmc$-ontology then one finds an interpolating $\Lmc$-concept using the shared symbols of the two input concepts only. In fact, the CIP implies the PBDP and the interpolants obtained using the CIP can serve as explicit definitions.

Many standard Boolean DLs such as $\mathcal{ALC}$, $\mathcal{ALCI}$, and $\mathcal{ALCQI}$ enjoy the CIP and PBDP and sophisticated 
algorithms for computing interpolants and explicit definitions have been developed~\cite{TenEtAl13}. Important exceptions are the extensions of any of the above DLs with nominals and/or role hierarchies. In fact, it has recently been shown that the problem of deciding the existence of an interpolant/explicit definition becomes 2\ExpTime-complete for $\mathcal{ALCO}$ ($\mathcal{ALC}$ with nominals) and for $\mathcal{ALCH}$ ($\mathcal{ALC}$ with role hierarchies). This
result is in sharp contrast to the \ExpTime-completeness of the
same problem for $\mathcal{ALC}$ itself inherited from the \ExpTime-completeness of subsumption under $\mathcal{ALC}$-ontologies~\cite{AJMOW-AAAI21}.

Our aim in this article is threefold: (1) determine which members of the $\mathcal{EL}$-family of DLs enjoy the CIP/PBDP;
(2) investigate the complexity of deciding the existence of interpolants/explicit definitions for those that do not enjoy it;
and (3) establish tight bounds on the size of interpolants/explicit definitions and the complexity of computing them. 

In what follows we discuss our main results. It has been shown in~\cite{DBLP:conf/kr/KonevLPW10,DBLP:journals/lmcs/LutzSW19}
already that $\mathcal{EL}$ and $\mathcal{EL}$ with role hierarchies enjoy the CIP and PBDP. Rather surprisingly, 
it turns out that none of the remaining standard DLs in the $\mathcal{EL}$-family enjoy the CIP nor the PBDP. 
\begin{theorem}\label{posneg}
The following DLs do not enjoy the CIP nor PBDP:
\begin{enumerate}
\item $\mathcal{EL}$ with the universal role, 
\item $\mathcal{EL}$ with nominals, 
\item $\mathcal{EL}$ with a single role inclusion $r\circ s\sqsubseteq s$, 
\item $\mathcal{EL}$ with role hierarchies and a transitive role, 
\item the extension $\mathcal{ELI}$ of $\mathcal{EL}$ with inverse roles.
\end{enumerate}	
In Points~2 to 5, the CIP/PBDP also fails if the universal role
can occur in interpolants/explicit definitions.
\end{theorem}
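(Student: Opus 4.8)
The plan is to refute, for each of the five logics $\mathcal{L}$ listed, the projective Beth definability property; since the CIP implies the PBDP, a failure of the PBDP entails a failure of the CIP as well, so this yields both halves of the theorem. Recall the two notions involved. A concept (or individual) name $A$ is \emph{implicitly} $\Sigma$-definable under an $\mathcal{L}$-ontology $\mathcal{O}$ iff $\mathcal{O}\cup\mathcal{O}'\models A\equiv A'$, where $\mathcal{O}'$ is the copy of $\mathcal{O}$ in which every symbol outside $\Sigma$ — in particular $A$ — is renamed by a fresh copy; equivalently, $A^{\mathcal{I}}$ is determined by the $\Sigma$-reduct of $\mathcal{I}$, uniformly over all models $\mathcal{I}\models\mathcal{O}$. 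It is \emph{explicitly} $\Sigma$-definable iff there is an $\mathcal{L}$-concept $D$ with $\sig(D)\subseteq\Sigma$ and $\mathcal{O}\models A\equiv D$. So it suffices, for each logic, to exhibit one triple $(\mathcal{O},\Sigma,A)$ with $A\in\sig(\mathcal{O})\setminus\Sigma$ and $\mathcal{O}$ using further symbols outside $\Sigma$ — its ``private'' auxiliary vocabulary — such that $A$ is implicitly but not explicitly $\Sigma$-definable.

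For the negative (``not explicitly definable'') direction I would rely on the standard characterisation of $\mathcal{L}$-concept entailment by \emph{$\Sigma$-simulations}: an $\mathcal{EL}_\Sigma$-concept true at $d_1$ in $\mathcal{I}_1$ remains true at $d_2$ in $\mathcal{I}_2$ whenever some $\Sigma$-simulation links $(\mathcal{I}_1,d_1)$ to $(\mathcal{I}_2,d_2)$, and this extends to the extra features: a \emph{left-total} (global) simulation for the universal role, a simulation respecting the denotations of nominals in $\Sigma$, and, for $\mathcal{ELI}$, the two-directional variant imposing the forth condition both along roles and along their inverses (models of $\mathcal{O}$ already validate the role inclusion / transitivity, so no extra closure condition on the simulation is needed in those cases). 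It therefore suffices, per logic, to build two models $\mathcal{I}_1,\mathcal{I}_2\models\mathcal{O}$ and points $d_1\in A^{\mathcal{I}_1}$, $d_2\notin A^{\mathcal{I}_2}$ linked by a $\Sigma$-simulation of the appropriate kind: then no $\mathcal{L}_\Sigma$-concept $D$ can satisfy $\mathcal{O}\models A\equiv D$, since $d_1\in A^{\mathcal{I}_1}=D^{\mathcal{I}_1}$ would force $d_2\in D^{\mathcal{I}_2}=A^{\mathcal{I}_2}$. Making the simulation left-total (after possibly adjusting the two models) additionally handles the strengthenings in Points~2--5, where $\exists u.(\cdot)$ is allowed inside $D$.

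The creative part is the design of the ontologies, and that is where I expect the main obstacle: each $\mathcal{O}$ must at the same time (i) \emph{pin $A$ down} — the private vocabulary together with the special feature forces $A^{\mathcal{I}}$ to be reconstructible from the $\Sigma$-reduct in \emph{every} model — and (ii) be \emph{$\Sigma$-simulation-blind} — two models similar enough on $\Sigma$ to be $\Sigma$-simulation related may still disagree on $A$. These pull against each other: implicit definability wants rigidity, non-definability wants slack on $\Sigma$, so the ``pinning'' must be carried by the private vocabulary (or by the non-$\Sigma$ premise of the special rule), where one-directional, $\Sigma$-label-and-$\Sigma$-step-tracking simulations cannot detect it. The feature-specific leverage is roughly: for the universal role, that $\exists u.C$ is a \emph{global} assertion, so a chain of GCIs $\top\sqsubseteq\exists u.C_0,\ C_0\sqsubseteq\cdots$ can force global configurations that a positive $\Sigma$-concept cannot see; for nominals, that $B\sqsubseteq\{a\}$ makes $B$ a sub-singleton and thereby forces \emph{merges} synchronising otherwise unrelated parts of a model; for $r\circ s\sqsubseteq s$ and for a transitive role, the propagation of $s$-edges up $r$-chains, respectively along the transitive closure; and for $\mathcal{ELI}$, that inverse roles constrain $A$ via predecessors while $\Sigma$-concepts, being positive, still cannot reconstruct this.

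I expect the role-inclusion and transitive-role cases, together with $\mathcal{ELI}$, to be the most delicate. For the first two, the very rule that provides the rigidity for (i) is itself partly visible to $\Sigma$-simulations (it shapes the $s$-relation, which lies in $\Sigma$), so one must arrange that the structure witnessing $d_1\in A^{\mathcal{I}_1}$ is reachable from $d_1$ only through private symbols or through the rule's non-$\Sigma$ premise, and that it collapses under the simulation onto the $A$-free point $d_2$; one should also check whether a single transitive role already mimics the instances of $r\circ s\sqsubseteq s$ actually needed, so that Point~4 follows from Point~3 with minor changes. $\mathcal{ELI}$ is subtle for a different reason: $\mathcal{ALCI}$ \emph{does} enjoy the CIP, so positivity is essential here, and the construction must defeat even $\mathcal{ELI}_\Sigma$-concepts that themselves use inverse roles — which is precisely what the two-directional simulation is there to ensure, provided the two witnessing models can be made two-directionally $\Sigma$-similar while still disagreeing on $A$.
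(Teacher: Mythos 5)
Your overall strategy is exactly the paper's: for each logic exhibit a triple $(\Omc,\Sigma,A)$ with $A$ implicitly but not explicitly $\Sigma$-definable, derive failure of the CIP from failure of the PBDP via Remark~\ref{rem:CIPBDP}, establish non-definability by producing two models of $\Omc$ that disagree on $A$ but are linked by an $\Lmc(\Sigma)$-simulation of the appropriate kind (left-total for the universal role, two-directional for $\mathcal{ELI}$, respecting $\Sigma$-nominals), and invoke the preservation lemma (Lemma~\ref{lem:simulationel}). All of that is correct, and your remarks about where the leverage of each feature lies (globality of $\exists u$, merging via nominals, propagation of $s$ up $r$-chains, predecessor constraints via inverses) accurately describe what the paper's examples exploit.

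However, the proposal stops precisely where the proof begins: no ontology, no signature, no implicit definition, and no pair of witnessing models is actually constructed for any of the five points, and you acknowledge this yourself (``the creative part is the design of the ontologies, and that is where I expect the main obstacle''). The entire mathematical content of the theorem resides in those constructions, and they are not routine. To give a sense of what is missing: for Point~1 the paper uses $\Omc_{u}=\{A\sqsubseteq B,\ D\sqcap\exists u.A\sqsubseteq E,\ B\sqsubseteq\exists r.C,\ C\sqsubseteq D,\ B\sqcap\exists r.(C\sqcap E)\sqsubseteq A\}$ with $\Sigma=\{B,D,E,r\}$ and implicit definition $A\equiv B\sqcap\forall r.(D\rightarrow E)$, together with two small concrete models and the simulation $\{(a,a'),(b,b')\}$; Point~5 is obtained by replacing $\exists u.A$ with $\exists r^{-}.A$. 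For Point~3 the implicit definition is not even an $\mathcal{ALC}$ concept but the genuinely first-order formula $\exists y\,(E(y)\wedge\forall z\,(s(y,z)\rightarrow s(x,z)))$, which your framing (``the $\Sigma$-reduct determines $A^{\Imc}$ via the special rule'') does not anticipate, and verifying that the witnessing models satisfy $r\circ s\sqsubseteq s$ while remaining $\Sigma$-similar requires care. Without these five concrete instantiations the argument establishes only that the standard method \emph{could} work, not that it does; as it stands the proposal is a correct proof schema with an empty core.
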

Theorem~\ref{posneg} also has interesting consequences that are not explicitly stated. For instance, it follows that neither the DL 
$\mathcal{EL}^{++}$ introduced in~\cite{IJCAI05-short} nor 
the extension of $\mathcal{ELI}$ with any combination of nominals,
role hierarchies, or transitive roles enjoy the CIP/PBDP. 
With the exception of the failure of the CIP/PBDP for $\mathcal{EL}$ with nominals (without the universal role in interpolants/explicit definitions)~\cite{KR21defdes}, our results are new. 

It follows from Theorem~\ref{posneg} that the behaviour of extensions of $\mathcal{EL}$ is fundamentally different from extensions of $\mathcal{ALC}$:
adding role hierarchies to $\mathcal{ALC}$ does not preserve the CIP/PBDP~\cite{KonevLWW09} but it does for $\mathcal{EL}$;
on the other hand, adding the universal role or inverse roles to $\mathcal{ALC}$ preserves the CIP/PBDP~\cite{TenEtAl13} but it does not for $\mathcal{EL}$.

Theorem~\ref{posneg} leaves open the behaviour of a few natural DLs between $\mathcal{EL}$ and its extension with
arbitrary role inclusions. For instance, what happens if one only adds transitive roles or, more generally,
role inclusions using a single role name only? To cover these cases we show a general result that implies that these
DLs enjoy the CIP and PBDP. In particular, it follows that in Point~4 of Theorem~\ref{posneg} the combination of
role hierarchies with a transitive role is necessary for failure of the CIP/PBDP.

We next discuss our main result about tractable extensions of $\mathcal{EL}$.
\begin{theorem}\label{inttractable}
For $\mathcal{EL}$ and any extension with any combination of nominals, 
role inclusions, the universal role, or $\bot$, the existence of interpolants and explicit definitions is in \PTime. If an interpolant/explicit definition exists, 
then there exists one of at most exponential size that can be computed in exponential time. 
This bound is optimal.  
\end{theorem}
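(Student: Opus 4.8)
The plan is to prove the \PTime{} upper bound for the existence problem together with a single-exponential construction of interpolants, and then a matching exponential lower bound. Fix an instance consisting of concepts $C,D$, an ontology $\Omc$, and a signature $\Sigma$ of symbols admitted in interpolants (for Craig interpolation, $\Sigma=\sig(C)\cap\sig(D)$); by Theorem~\ref{posneg} the logics in question lack the CIP, so whether an interpolant $E$ with $\sig(E)\subseteq\Sigma$, $C\sqsubseteq_\Omc E$ and $E\sqsubseteq_\Omc D$ exists is a genuine question (for $\mathcal{EL}$ itself it is always answered positively). I would use that for $\mathcal{EL}$ and each extension covered by the theorem the canonical model $\mathcal{I}_{C,\Omc}$ of $C$ under $\Omc$ has polynomially many elements and is computable in polynomial time by the standard consequence-based construction, with nominals contributing singleton elements, the universal role a global accessibility relation, and (regular) role inclusions additional edges. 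Explicit definitions are handled by the usual reduction to interpolation: rename the non-$\Sigma$ symbols of $\Omc$ to obtain a copy $\Omc'$ with $A$ renamed to $A'$, and ask whether $A\sqsubseteq_{\Omc\cup\Omc'}A'$ has an interpolant over $\Sigma$; so it suffices to treat interpolants.

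The first step is a semantic characterization of interpolant existence. Say that a pointed model $(\mathcal{J},e)$ \emph{$\Sigma$-simulates} $(\mathcal{I},d)$ if every $\Lmc$-concept over $\Sigma$ satisfied at $d$ is satisfied at $e$; by the model-theoretic characterization of $\mathcal{EL}$-style logics this is witnessed by an actual $\Sigma$-simulation. As $\Lmc$-concepts over $\Sigma$ are preserved along $\Sigma$-simulations, an interpolant for $C\sqsubseteq_\Omc D$ can exist only if there is no \emph{bad pair}, i.e.\ no models $\mathcal{I}\models\Omc$ with $d\in C^{\mathcal{I}}$ and $\mathcal{J}\models\Omc$ with $e\notin D^{\mathcal{J}}$ such that $(\mathcal{J},e)$ $\Sigma$-simulates $(\mathcal{I},d)$. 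The converse — absence of a bad pair forces an interpolant — is the crux: assuming no bad pair, I would take as candidate interpolant a finite unravelling of $\mathcal{I}_{C,\Omc}$ from its root, read off over $\Sigma$ and truncated once, on each branch, a suitable finite abstraction (essentially a pair of an element of $\mathcal{I}_{C,\Omc}$ together with the set of $\Omc$-consequences already forced) repeats, and prove that this concept entails $D$. Since $\mathcal{I}_{C,\Omc}$ is polynomial, this abstraction ranges over a polynomial set, so the unravelling has polynomial depth and polynomial branching, hence single-exponential size, and is computable in exponential time; a polynomial truncation depth is enough because the canonical model is polynomial and begins to repeat after polynomially many role steps, so that $D$ already holds at the root of the canonical model of the truncated concept.

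For membership in \PTime{} it remains to decide the existence of a bad pair in polynomial time. Here one shows that a bad pair, if any, can be found among polynomial-size structures built from $\mathcal{I}_{C,\Omc}$ and a canonical model of $\Omc$ that is maximal among those omitting the tree pattern of $D$ at the relevant element; deciding whether there is a $\Sigma$-simulation between two such polynomial structures that keeps the $D$-free element on the $\mathcal{J}$-side is a greatest-fixpoint computation carried out by the standard polynomial-time simulation algorithm. The universal role and nominals add global and singleton constraints that the simulation must respect, but only polynomially many edges and product components, so the bound is preserved. Together with the previous step this yields the \PTime{} bound and the single-exponential, exponential-time construction.

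Optimality of the exponential bound is shown by a family of polynomial-size instances $\Omc_n,C_n,D_n,\Sigma_n$ with $C_n\sqsubseteq_{\Omc_n}D_n$ for which $\mathcal{I}_{C_n,\Omc_n}$, unravelled over $\Sigma_n$, has exponentially many distinct $\Sigma_n$-labelled paths, together with $D_n$ and auxiliary non-$\Sigma_n$ symbols in $\Omc_n$ forcing every interpolant to record each of these paths explicitly; a convenient gadget is a polynomially represented branching structure of depth $n$ that unravels to all $n$-bit strings. Via the reduction above the same family gives a $2^{\Omega(n)}$ lower bound for explicit definitions. The main obstacle is the converse direction of the simulation characterization and, inseparably from it, controlling the truncation depth so that the interpolant is \emph{single} rather than double exponential: nominals, the universal role, and role inclusions create global interactions that a naive branch-wise truncation destroys, and dealing with these is where most of the technical work lies.
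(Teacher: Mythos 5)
Your overall architecture matches the paper's: reduce explicit definability to interpolation, characterize interpolant existence semantically via the canonical model and $\Sigma$-simulations (the paper's Lemma~\ref{lem:crit}), decide it in polynomial time by a simulation/entailment check against the $\Sigma$-reduct of the canonical model, and extract an interpolant from a truncated unravelling. However, there is a concrete gap in your size analysis, and it sits exactly at the point you yourself flag as the crux. You claim that ``a polynomial truncation depth is enough because the canonical model is polynomial and begins to repeat after polynomially many role steps,'' and that the abstraction (element of $\Imc_{C,\Omc}$ paired with the set of forced consequences) ``ranges over a polynomial set.'' Both claims fail in the presence of role inclusions. The paper's own Example~\ref{exm:succ} with $\Omc_{p}=\{r_{i}\circ r_{i}\sqsubseteq r_{i+1}\}\cup\{A\sqsubseteq\exists r_0.B,\ B\sqsubseteq\exists r_0.B,\ \exists r_n.B\sqsubseteq A\}$ and $\Sigma=\{r_0,B\}$ is a counterexample: the canonical model has two elements and the $\Sigma$-unravelling repeats its element-abstraction after one step, yet the \emph{only} interpolant is $\exists r_0^{2^n}.B$, a path of exponential \emph{depth}. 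Any branch-wise truncation at polynomial depth destroys it. If instead you enrich the abstraction with the set of already-forced consequences, that set is a subset of a polynomial universe, so the abstraction ranges over an \emph{exponential} set; truncating on its repetition gives exponential depth, and combined with nontrivial branching (as in $\Omc_b$ of the same example) a naive bound is then $(\text{poly})^{\exp}$, i.e.\ double exponential. The whole point of the paper's Theorem~\ref{inttractable} is that these two sources of blow-up cannot compound, and your sketch does not contain an argument for that.

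The paper resolves this with derivation trees (Lemma~\ref{lem:deriv0}): a derivation of $B(\rho_A)$ from $\Omc_1\cup\Omc_2$ and the folded $\Sigma$-ABox has polynomial depth and polynomial outdegree; when lifted to the directed unfolding the depth is preserved and only the \emph{outdegree} may become exponential, namely through the chains $a_1,\ldots,a_{2k}$ used in single applications of the RI rule, whose length is separately bounded by $2^{q(\|\Omc\|)}$. The interpolant is then the restriction of the unfolded ABox to the individuals occurring in the lifted derivation tree, which is of size at most $(2^{q})^{\text{poly}}$, i.e.\ single exponential. So the exponential growth is confined to a horizontal, per-rule phenomenon rather than accumulating along branches. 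To repair your proof you would need either this derivation-tree machinery or some equivalent device (e.g.\ an explicit argument that along any single branch of the unravelling only the RI-induced path length, not the consequence-set abstraction, needs to be tracked, and that branching only occurs at polynomially many abstraction classes). Your \PTime{} argument and the lower-bound family are essentially right, though note the paper needs \emph{two} lower-bound gadgets ($\Omc_b$ for branching, $\Omc_p$ for depth via RIs), and the existence check is cleanest as a single entailment test $\Omc_1\cup\Omc_2,\Amc^{\Sigma}_{\Omc_1\cup\Omc_2,A}\models B(\rho_A)$ against the tractable logic, rather than a search for a maximal $D$-omitting countermodel.
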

It follows that for tractable extensions of $\mathcal{EL}$ the complexity of deciding the existence
of interpolants and explicit definitions does not depend on the CIP/PBDP, in sharp contrast to the behaviour
of $\mathcal{ALCO}$ and $\mathcal{ALCH}$. Moreover, the proof shows how interpolants and explicit definitions
can be computed from the canonical models introduced in~\cite{IJCAI05-short}, if they exist. It applies
derivation trees (first introduced in \cite{DBLP:conf/ijcai/BienvenuLW13} for DLs without nominals and role hierarchies)
to estimate the size of interpolants and provide an exponential time algorithm for computing them. 
\begin{theorem}~\label{thm:elli}
For $\mathcal{ELI}$ and any extension with any combination of nominals,
the universal role, or $\bot$, the existence of interpolants and 
explicit definitions is \ExpTime-complete. If an interpolant/explicit definition exists, then there exists one of at most 
double exponential size that can be computed in double exponential time. This bound is optimal.
\end{theorem}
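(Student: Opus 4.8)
The plan is to prove the two lower bounds and the two matching upper bounds separately, the bulk of the work being an effective model-theoretic characterization of interpolant existence for $\mathcal{ELI}$ in terms of canonical models and $\Sigma$-simulations. Throughout I write $\Sigma$ for the shared signature of the two input concepts, and I reduce the existence of explicit definitions to that of interpolants in the usual way: a concept name $A$ is implicitly definable from $\Sigma$ under $\Omc$ iff $A_1\sqsubseteq A_2$ holds under two copies of $\Omc$ with the non-$\Sigma$ symbols renamed apart, and an explicit definition is precisely a $\Sigma$-interpolant of this inclusion, so it suffices to treat interpolants.

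For the lower bounds, \ExpTime-hardness is immediate: choosing $C_1$ and $C_2$ so that they share all symbols of $\Omc,C_1,C_2$, a $\Sigma$-interpolant of $C_1\sqsubseteq_\Omc C_2$ exists iff $C_1\sqsubseteq_\Omc C_2$ holds (in which case $C_1$ itself is an interpolant), so the existence problem inherits \ExpTime-hardness from subsumption w.r.t.\ general $\mathcal{ELI}$-ontologies. For the double-exponential size lower bound I would exhibit a tailored family $\Omc_n,C_1^n,C_2^n$ over a signature $\Sigma_n$: using concept and role names outside $\Sigma_n$, $\Omc_n$ enforces an $n$-bit binary counter so that $C_1^n\sqsubseteq_{\Omc_n}C_2^n$ holds and an interpolant exists, but every $\Sigma_n$-interpolant is forced to spell out an $\mathcal{ELI}$-path mimicking a full run of the counter, hence has size at least $2^{2^n}$; this also gives the double-exponential time lower bound, and since the construction uses neither nominals, nor the universal role, nor $\bot$, it witnesses optimality uniformly for all logics in the statement.

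For the upper bounds I would first establish the characterization: a $\Sigma$-interpolant of $C_1\sqsubseteq_\Omc C_2$ exists iff (i) $C_1\sqsubseteq_\Omc C_2$, and (ii) there is no pointed model $(\mathcal{I},d)\models\Omc$ with $d\notin C_2^{\mathcal{I}}$ admitting a $\Sigma$-simulation from the canonical model $(\mathcal{U}_{\Omc,C_1},\rho)$ to $(\mathcal{I},d)$, where $\Sigma$-simulations are the bisimulation-like relations characterizing $\mathcal{ELI}$-entailment (role names together with their inverses). The ``only if'' direction is routine: a $\Sigma$-interpolant holds at $\rho$ in $\mathcal{U}_{\Omc,C_1}$ and transfers along the $\Sigma$-simulation to $d$, contradicting $d\notin C_2^{\mathcal{I}}$. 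For ``if'', the interpolant is the $\Sigma$-characteristic $\mathcal{ELI}$-concept of $(\mathcal{U}_{\Omc,C_1},\rho)$ truncated at a suitable finite depth; a limit argument turns failure of the truncated subsumption at every depth into a model witnessing the negation of (ii). Since for $\mathcal{ELI}$ the canonical model is in general infinite, I would work with its standard representation as the unravelling of a structure of single-exponential size, and observe that a countermodel in (ii) may be taken to be a forest model, hence also captured by an exponential ``type'' structure; condition (ii) then amounts to the existence of a bad configuration in a product of two exponential-size structures, decidable in \ExpTime by a least-fixpoint computation (iteratively discard configurations from which $C_2$ is unavoidable), while (i) is a single \ExpTime subsumption check. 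For the size and time bounds, a pumping argument on the product structure shows that if (ii) fails it already fails at truncation depth bounded by the exponential size of the product; unravelling the exponential representation of $\mathcal{U}_{\Omc,C_1}$ to that depth yields an interpolant of double-exponential size, computable in double-exponential time, with the tight size bound obtained by extending the derivation-tree machinery used for Theorem~\ref{inttractable} to inverse roles. For the extensions, nominals, the universal role and $\bot$ only alter the shape of canonical models (forests rooted at named or ``global'' elements, plus handling of inconsistency), which remain of exponential size, so the characterization and all bounds carry over.

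The hard part will be the ``if'' direction of the characterization together with the tight depth bound: one must align the infinite tree-shaped canonical model of $\mathcal{ELI}$, on which $\Sigma$-simulations naturally live, with its finite type representation, and show that a single-exponential truncation depth already suffices so that the resulting interpolant is double- rather than triple-exponential. Making the product construction and the pumping argument work in the presence of the universal role and of nominals -- where the relevant structures are forests, simulations must respect global relationships, and nominals must be interpreted as singletons -- is where most of the technical care will go, and the double-exponential size lower bound likewise requires a counter encoding robust under restriction to $\Sigma_n$.
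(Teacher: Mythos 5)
Your route for the upper bounds is essentially the paper's: reduce explicit definitions to interpolants, characterize interpolant existence via the $\Sigma$-reduct of the (infinite, tree-shaped) canonical model together with $\Sigma$-simulations, decide the characterization in \ExpTime{} over an exponential-size type representation (the paper does this with an intersection of tree automata built from derivation trees rather than your product-plus-least-fixpoint, but these are interchangeable in spirit), and extract a double-exponential interpolant by bounding the depth of a finite witnessing sub-ABox of the unfolding. One substantive caution on that last step: the invariant you pump on cannot be just a node's type, because in the $\Sigma$-reduct a fact entailed at a node may depend on an unbounded back-and-forth interaction between the subtree below that node and the rest of the ABox (derivations descend into the subtree, return, descend again). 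The paper pumps on \emph{transfer sequences} --- the stabilizing alternating sequence of fact sets contributed by the subtree and by its complement --- and explicitly remarks that the derivation-tree machinery from the tractable case, which you propose to ``extend to inverse roles'', does \emph{not} yield the tight size bound here. Your pumping argument survives only once the pumped state is enriched in this way; as written it would be unsound.

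The genuine gap is in the double-exponential size \emph{lower} bound. An $n$-bit counter forces a path of length $2^{n}$, so an interpolant that ``spells out an $\mathcal{ELI}$-path mimicking a full run of the counter'' has single-exponential, not double-exponential, size. Worse, no path-shaped family can possibly witness the lower bound: the very depth bound your upper-bound argument must establish (and which the paper obtains via transfer sequences) shows that whenever an interpolant exists, there is one whose underlying tree has depth at most $2^{q(||\Omc||)}$ for a polynomial $q$; a path of that depth has single-exponential size, so any family of forced double-exponential interpolants must branch. The paper's construction does exactly this: counter concepts $X_{0},\dots,X_{n}$ and their complements drive a \emph{binary} tree over two roles $r,s$ of depth $2^{n}$, with a marker $M$ at the root and $L$ at the leaves, so that the unique minimal $\Sigma$-definition is $M\sqcap C_{2^{n}}$ with $C_{k+1}=\exists r.C_{k}\sqcap\exists s.C_{k}$, of size $2^{2^{n}}$. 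You need to replace your path gadget by such a branching gadget for the optimality claim to go through.
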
 
The proof of Theorem~\ref{thm:elli} shows how an interpolant or explicit definition can be
extracted from a (potentially infinite) tree-shaped canonical model. The \ExpTime complexity bound is proved using
an encoding as an emptiness problem for tree automata that also uses derivation trees. It does not seem possible 
to obtain tight bounds on the size of interpolants using derivation trees; instead we generalize transfer sequences 
for this purpose (also first introduced in~\cite{DBLP:conf/ijcai/BienvenuLW13}).

In the final section, we consider expressive Horn-DLs such as Horn-$\mathcal{ALCI}$.
We first observe that Theorem~\ref{thm:elli} also holds for Horn-$\mathcal{ALCI}$ 
and extensions with nominals and the universal role, provided one asks for interpolants and explicit definitions in 
$\mathcal{ELI}$ (and extensions with nominals and the universal role, respectively). If one admits expressive Horn-concepts as interpolants or 
explicit definitions, then sometimes interpolants and explicit definitions exist that previously did not exist. We show that nevertheless the CIP/PBDP also fail in this case for DLs including Horn-$\mathcal{ALC}$, $\mathcal{ELI}$, and Horn-$\mathcal{ALCI}$. 

Detailed proofs are given in the arxiv version of this article.
%%%%%%%%%%%%%%%%%%%%%%%%%%%%%%%%%%%%%%%%%%%%%%%%%%%%%%%%%%%%%%%%%%%%%%
\section{Related Work}
\label{sec:rw}
The CIP and PBDP have been investigated extensively in databases, with applications to query rewriting under views and query compilation~\cite{DBLP:series/synthesis/2011Toman,DBLP:series/synthesis/2016Benedikt}. The computation of explicit definitions under Horn ontologies can be seen as an instance of query reformulation under constraints~\cite{DBLP:journals/sigmod/DeutschPT06} which has been a major research topic for many years. The Chase and Backchase approach that is central to this research closely resembles our use of canonical models. We do not assume, however, that the chase terminates. In~\cite{DBLP:series/synthesis/2016Benedikt,DBLP:conf/ijcai/BenediktKMT17}, it is shown that the reformulation of CQs into CQs under tgds can be reduced to entailment using Lyndon interpolation of first-order logic. By linking reformulation into CQs and definability using concepts, this approach can potentially be used to obtain alternative proofs of complexity upper bounds for the existence of interpolants and explicit definitions in our languages. Also relevant is the investigation of interpolation in basic modal logic~\cite{GabMaks} and
hybrid modal logic~\cite{DBLP:journals/jsyml/ArecesBM01,DBLP:journals/jsyml/Cate05}.

The main aim of this article is to investigate explicit definability of concept and individual names under ontologies. We have therefore chosen a definition of the CIP and interpolants that generalizes the projective Beth definability property and
explicit definability in a natural and useful way, following~\cite{TenEtAl13}. There are,
however, other notions of Craig interpolation that are of interest. Of particular importance for modularity and various other purposes is the following version: if $\Omc$ is an ontology and $C \sqsubseteq D$ an inclusion such that $\Omc \models C \sqsubseteq D$, then there exists an ontology $\Omc'$ in the shared signature of $\Omc$ and 
$C\sqsubseteq D$ such that $\Omc\models \Omc'\models C \sqsubseteq D$. 
This property has been considered for $\mathcal{EL}$ and various extensions  
in~\cite{DBLP:journals/lmcs/Sofronie-Stokkermans08,DBLP:conf/kr/KonevLPW10}. Currently, it is unknown whether there exists any interesting relationship between this version of the CIP and the version we investigate in this article.

Craig interpolants should not be confused with uniform interpolants (or forgetting)~\cite{DBLP:conf/kr/LutzSW12,DBLP:conf/ijcai/LutzW11,DBLP:journals/ai/NikitinaR14,DBLP:conf/aaai/KoopmannS15}. Uniform interpolants generalize Craig interpolants in the sense that a uniform interpolant is an interpolant for a fixed antecedent and any formula implied by the antecedent and sharing with it a fixed set of symbols.

Interpolant and explicit definition existence have only recently
been investigated for logics that do not enjoy the CIP or PBDP. 
Extending work on Boolean DLs we discussed already,
it is shown that they become harder than validity also in the guarded and two-variable fragment \cite{JW-LICS2021}. 
The interpolant existence problem for linear temporal logic LTL is considered in~\cite{DBLP:journals/corr/PlaceZ14}. 
In the context of referring expressions, explicit definition existence is investigated in~\cite{KR21defdes}, see also~\cite{DBLP:conf/kr/BorgidaTW16}. 
\section{Preliminaries}
Let $\NC$, $\NR$, and $\NI$ be disjoint and countably infinite sets of
\emph{concept}, \emph{role}, and \emph{individual names}. A \emph{role} is a role name $r$
or an \emph{inverse role} $r^{-}$, with $r$ a role name. \emph{Nominals}
take the form $\{a\}$, where $a$ is an individual name. The \emph{universal role} is denoted by $u$. 
\emph{$\mathcal{ELIO}_{u}$-concepts} $C$ are defined by the following syntax rule:
$$
C,C' \quad ::= \quad \top \mid A \mid \{a\} \mid C \sqcap C' \mid  
\exists r . C 
$$
where $A$ ranges over concept names, $a$ over individual names, and $r$ over roles (including the universal role). 
Fragments of $\mathcal{ELIO}_{u}$ are defined as usual. For example, \emph{$\mathcal{ELI}$-concepts} are 
$\mathcal{ELIO}_{u}$-concepts without nominals and the universal role, and \emph{$\mathcal{EL}$-concepts} are 
$\mathcal{ELI}$-concepts without inverse roles. 
Given any of the DLs $\Lmc$ introduced above, an \emph{$\Lmc$-concept inclusion ($\Lmc$-CI)} takes the form 
$C \sqsubseteq D$ with $C,D$ $\Lmc$-concepts. An \emph{$\Lmc$-ontology}
$\Omc$ is a finite set of $\Lmc$-CIs.

We also consider ontologies with \emph{role inclusions (RIs)},
expressions of the form $r_{1} \circ \cdots \circ r_{n} \sqsubseteq r$ with $r_{1},\ldots,r_{n},r$ role names. An $\mathcal{ELO}_{u}$-ontology with RIs is called
an \emph{$\mathcal{ELRO}_{u}$-ontology}. A set of RIs is a \emph{role hierarchy} if all its RIs are of the form $r\sqsubseteq s$ with $r,s$ role names.

A \emph{signature} $\Sigma$ is a set of concept, role, and individual names,
uniformly referred to as \emph{(non-logical) symbols}. We follow common practice and
do not regard the universal role $u$ as a 
non-logical symbol as its interpretation is fixed. We use $\text{sig}(X)$ to
denote the set of symbols used in any syntactic object $X$ such as a
concept or an ontology. If $\Lmc$ is a DL and $\Sigma$ a signature, then 
an \emph{$\Lmc(\Sigma)$-concept} $C$ is an $\Lmc$-concept with $\text{sig}(C) \subseteq \Sigma$. 
The \emph{size} $||X||$ of a syntactic object $X$ is the number of symbols needed to write it down.

The semantics of DLs is given in terms of \emph{interpretations}
$\Imc=(\Delta^\Imc,\cdot^\Imc)$, where $\Delta^\Imc$ is a non-empty
set (the \emph{domain}) and $\cdot^\Imc$ is the \emph{interpretation
	function}, assigning to each $A\in \NC$ a set $A^\Imc \subseteq
\Delta^\Imc$, to each $r\in \NR$ a relation $r^\Imc \subseteq
\Delta^\Imc \times \Delta^\Imc$, and to each $a \in \NI$ an element
$a^\Imc \in \Delta^{\Imc}$.
The interpretation $C^{\Imc}\subseteq \Delta^{\Imc}$ of a concept $C$
in $\Imc$ is defined as usual, see \cite{DBLP:books/daglib/0041477}. An
interpretation $\Imc$ \emph{satisfies} a CI $C \sqsubseteq D$ if
$C^{\Imc} \subseteq D^{\Imc}$ and an RI $r_{1} \circ \cdots \circ r_{n} \sqsubseteq r$ if $r_{1}^{\Imc} \circ \cdots \circ r_{n}^{\Imc} \subseteq r^{\Imc}$. 
We say that $\Imc$ is
a \emph{model of} an ontology $\Omc$ if it satisfies all inclusions in it. If $\alpha$ is a CI or RI, we
write $\Omc \models \alpha$ if all models of \Omc satisfy $\alpha$.
We write $\Omc\models C \equiv D$ if $\Omc\models C \sqsubseteq D$ and $\Omc \models D \sqsubseteq C$.

An ontology is in \emph{normal form} if its CIs
are of the form 
$$
\top \sqsubseteq A, \quad A_{1}\sqcap A_{2} \sqsubseteq B, \quad A \sqsubseteq \{a\}, \quad \{a\} \sqsubseteq A, 
$$
and
$$
A \sqsubseteq \exists r.B, \quad \exists r.B \sqsubseteq A
$$
where $A,A_{1},A_{2},B$ are concept names, $r$ is a role or the universal role, and $a$ is an individual name. It is well known that for any $\mathcal{ELIO}_{u}$-ontology $\Omc$ with or without RIs 
one can construct in polynomial time a conservative extension
$\mathcal{O}'$ using the same constructors as $\Omc$ that is in normal form. 

$\Lmc(\Sigma)$-concepts can be characterized using \emph{$\Lmc(\Sigma)$-simulations} which we define next.
Let $\Imc$ and $\Jmc$ be interpretations. A relation $S\subseteq \Delta^{\Imc}\times \Delta^{\Jmc}$ is called an \emph{$\mathcal{ELO}(\Sigma)$-simulation} between $\Imc$ and $\Jmc$ if the following conditions hold:
\begin{enumerate}
	\item if $d\in A^{\Imc}$ and $(d,e)\in S$, then $e\in A^{\Jmc}$, for all $A\in \NC\cap \Sigma$;
	\item if $d=a^{\Imc}$ and $(d,e)\in S$, then $e=a^{\Jmc}$, for all $a\in \NI\cap \Sigma$;
	\item if $(d,d')\in r^{\Imc}$ and $(d,e)\in S$, then there exists $e'$ with $(e,e')\in r^{\Jmc}$ and $(d',e')\in S$, for all $r\in \NR \cap \Sigma$.
\end{enumerate}
$S$ is called an \emph{$\mathcal{ELO}_{u}(\Sigma)$-simulation} if $\Delta^{\Imc}$ is the domain of $S$ and an \emph{$\mathcal{ELIO}(\Sigma)$-simulation} if Condition~3 also holds for inverse roles from $\Sigma$. Condition~2 is dropped if $\Lmc$ does not use nominals. We write $(\Imc,d)\preceq_{\Lmc,\Sigma}(\Jmc,e)$ if there exists an $\Lmc(\Sigma)$-simulation $S$ between $\Imc$ and $\Jmc$ with $(d,e)\in S$. We write $(\Imc,d)\leq_{\Lmc,\Sigma}(\Jmc,e)$
if $d\in C^{\Imc}$ implies $e\in C^{\Jmc}$ for all $\Lmc(\Sigma)$-concepts $C$. The following characterization is well known~\cite{DBLP:journals/jsc/LutzW10,DBLP:conf/ijcai/LutzPW11}.
\begin{lemma}\label{lem:simulationel}
	Let $\Lmc\in \{\mathcal{EL},\mathcal{EL}_{u},\ELO,\ELOu,\mathcal{ELI},\mathcal{ELI}_{u}\}$. Then 
	$
	(\Imc,d)\preceq_{\Lmc,\Sigma}(\Jmc,e)$ implies $(\Imc,d)\leq_{\Lmc,\Sigma}(\Jmc,e)$.
	The converse direction holds if $\Jmc$ is finite.  
\end{lemma}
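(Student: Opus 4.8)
The plan is to prove the two directions separately. For the first direction, $(\Imc,d)\preceq_{\Lmc,\Sigma}(\Jmc,e)$ implies $(\Imc,d)\leq_{\Lmc,\Sigma}(\Jmc,e)$, I would fix an $\Lmc(\Sigma)$-simulation $S$ with $(d,e)\in S$ and argue by structural induction on the $\Lmc(\Sigma)$-concept $C$ that for all $(d',e')\in S$, if $d'\in C^{\Imc}$ then $e'\in C^{\Jmc}$. The base cases $\top$, $A$ (concept name in $\Sigma$), and $\{a\}$ (nominal with $a\in\Sigma$, only when $\Lmc$ has nominals) are handled directly by Conditions~1 and~2; the case $C=C_1\sqcap C_2$ is immediate from the induction hypothesis applied to both conjuncts at the same pair; and the case $C=\exists r.C_1$ uses Condition~3 (for $r$ a role name, or its inverse-role variant when $\Lmc$ has inverse roles) to produce a witness $e''$ with $(e',e'')\in r^{\Jmc}$ and $(d'',e'')\in S$, where $d''$ is the $r$-successor of $d'$ witnessing $d'\in(\exists r.C_1)^{\Imc}$, and then the induction hypothesis at $(d'',e'')$ finishes it. The case $\Lmc=\Lmc_u$ with the universal role $u$ appearing in $C$ is where the extra requirement ``$\Delta^{\Imc}$ is the domain of $S$'' is needed: if $d'\in(\exists u.C_1)^{\Imc}$ then some $d_1\in\Delta^{\Imc}$ lies in $C_1^{\Imc}$, and since every element of $\Delta^{\Imc}$ is in the domain of $S$, there is $e_1$ with $(d_1,e_1)\in S$, so by the induction hypothesis $e_1\in C_1^{\Jmc}$ and hence $e'\in(\exists u.C_1)^{\Jmc}$ because $u^{\Jmc}=\Delta^{\Jmc}\times\Delta^{\Jmc}$.

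For the converse direction, assuming $\Jmc$ is finite and $(\Imc,d)\leq_{\Lmc,\Sigma}(\Jmc,e)$, the plan is to exhibit an explicit $\Lmc(\Sigma)$-simulation. The standard choice is $S=\{(d',e')\mid (\Imc,d')\leq_{\Lmc,\Sigma}(\Jmc,e')\}$; by hypothesis $(d,e)\in S$. To verify Conditions~1 and~2 one uses the single-symbol concepts $A$ and $\{a\}$ themselves as test concepts. For Condition~3, suppose $(d',e')\in S$ and $(d',d'')\in r^{\Imc}$ with $r\in\Sigma$; I need some $e''$ with $(e',e'')\in r^{\Jmc}$ and $(\Imc,d'')\leq_{\Lmc,\Sigma}(\Jmc,e'')$. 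Here is where finiteness of $\Jmc$ enters: let $E=\{f\mid (e',f)\in r^{\Jmc}\}$, a finite set. If no $f\in E$ satisfied $(\Imc,d'')\leq_{\Lmc,\Sigma}(\Jmc,f)$, then for each $f\in E$ there is an $\Lmc(\Sigma)$-concept $C_f$ with $d''\in C_f^{\Imc}$ but $f\notin C_f^{\Jmc}$; take the finite conjunction $C=\bigsqcap_{f\in E}C_f$, so $d''\in C^{\Imc}$ and $f\notin C^{\Jmc}$ for all $f\in E$, whence $d'\in(\exists r.C)^{\Imc}$ but $e'\notin(\exists r.C)^{\Jmc}$, contradicting $(d',e')\in S$. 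For $\Lmc$ with inverse roles the same argument applies to $r^-$; for $\Lmc_u$ one additionally checks that $\Delta^{\Imc}$ is the domain of $S$, which follows because $\Delta^{\Jmc}\neq\emptyset$ and, using test concepts over $u$ as above (or simply noting that every finite conjunction of $\Lmc(\Sigma)$-concepts true at a given $d'$ must be satisfiable at some point of the finite $\Jmc$ reachable via $u$), each $d'$ is related to at least one element of $\Jmc$.

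The main obstacle is the converse direction, and specifically the interplay between the finiteness of $\Jmc$ and the fact that $\Lmc(\Sigma)$-concepts admit only finite conjunctions: we can separate $d''$ from a single $r$-successor $f$ of $e'$ by one concept, but to rule out \emph{all} of them simultaneously we must aggregate finitely many separating concepts into one, which is exactly what finiteness of the successor set $E$ makes possible. On an infinite $\Jmc$ this step fails, and indeed the converse is false there (the usual Hennessy--Milner-type counterexample). The remaining work — carefully tracking which of Conditions~1--3 are actually required for each $\Lmc$ in the list, in particular dropping Condition~2 when $\Lmc$ has no nominals and adding the inverse-role clause and the domain condition only when appropriate — is routine bookkeeping over the six cases.
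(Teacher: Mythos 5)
Your proof is correct: the paper itself gives no proof of this lemma, citing it as well known, and your argument (structural induction on concepts for the forward direction, and the Hennessy--Milner-style construction taking $S$ to be the relation $\leq_{\Lmc,\Sigma}$ itself, with finiteness of $\Jmc$ used to aggregate the finitely many separating concepts into a single conjunction) is exactly the standard argument the cited references rely on. The handling of the extra clauses for nominals, inverse roles, and the totality condition for the universal role is also as expected.
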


\section{Craig Interpolation Property and Projective Beth Definability Property}
\label{CIPandPBDP}
We introduce the Craig interpolation property (CIP) as defined in~\cite{TenEtAl13}
and the projective Beth definability property (PBDP) and prove Theorem~\ref{posneg} from the introduction to this article.
We observe that the CIP implies the PBDP, but lack a proof of the converse direction. Nevertheless, all 
DLs considered in this paper enjoying the PBDP also enjoy the CIP. 

Set $\text{sig}(\Omc,C)= \text{sig}(\Omc)\cup \text{sig}(C)$, for any ontology $\Omc$ and concept $C$. 
Let $\Omc_{1},\Omc_{2}$ be $\Lmc$-ontologies and let $C_{1},C_{2}$ be $\Lmc$-concepts. 
Then an $\Lmc$-concept $D$ is called an \emph{$\Lmc$-interpolant}\footnote{Important variations of this definition are to drop $\Omc_{2}$ in Point~2 and $\Omc_{1}$ in Point~3, respectively, or to consider only one ontology $\Omc=\Omc_{1}=\Omc_{2}$ and regard the signature $\Sigma$ of the interpolant as an input given independently from $\Omc,C_{1},C_{2}$. This has an effect on the CIP, but our results on interpolant computation and existence are not affected.} for $C_{1}\sqsubseteq C_{2}$ under
$\Omc_{1},\Omc_{2}$ if
\begin{itemize}
	\item $\text{sig}(D) \subseteq \text{sig}(\Omc_{1},C_{1})\cap \text{sig}(\Omc_{2},C_{2})$;
	\item $\Omc_{1}\cup \Omc_{2} \models C_{1} \sqsubseteq D$;
	\item $\Omc_{1}\cup \Omc_{2} \models D \sqsubseteq C_{2}$.
\end{itemize}	
\begin{definition}
	A DL $\Lmc$ has the Craig interpolation property (CIP) if for any
	$\Lmc$-ontologies $\Omc_{1},\Omc_{2}$ and $\Lmc$-concepts $C_{1},C_{2}$ such that $\Omc_{1}\cup \Omc_{2}\models C_{1}\sqsubseteq C_{2}$ there exists an $\Lmc$-interpolant for 
	$C_{1}\sqsubseteq C_{2}$ under $\Omc_{1},\Omc_{2}$.
\end{definition}
We next define the relevant definability notions.  Let $\Omc$ be an
ontology and $A$ a concept name.  Let $\Sigma\subseteq \text{sig}(\Omc)$
be a signature. An $\Lmc(\Sigma)$-concept $C$ is an \emph{explicit
	$\Lmc(\Sigma)$-definition of $A$ under $\Omc$} if $\Omc \models A
\equiv C$. We call $A$ \emph{explicitly definable in $\Lmc(\Sigma)$ under
	$\Omc$} if there is an explicit
$\Lmc(\Sigma)$-definition of $A$ under $\Omc$.
The $\Sigma$-reduct $\Imc_{|\Sigma}$ of an interpretation $\Imc$ coincides with $\Imc$ except that no symbol that is not in $\Sigma$ is interpreted in $\Imc_{|\Sigma}$. A concept $A$ is called 
\emph{implicitly definable using $\Sigma$ under
	$\Omc$} if the $\Sigma$-reduct of any model $\Imc$ of $\Omc$
determines the set $A^{\Imc}$; in other words, if $\Imc$ and $\Jmc$ are both models of $\Omc$ such that $\Imc_{|\Sigma} = \Jmc_{|\Sigma}$, then $A^{\Imc}=A^{\Jmc}$. It is easy to see that implicit definability can be reformulated as a standard reasoning problem as follows: a concept name $A\not\in\Sigma$ is implicitly definable using $\Sigma$ under
$\Omc$ iff $\Omc \cup \Omc_{\Sigma}
\models A \equiv A'$, where $\Omc_{\Sigma}$ is obtained from $\Omc$
by replacing every symbol $X$ not in $\Sigma$ (including $A$)
uniformly by a fresh symbol $X'$.
\begin{definition} A DL $\Lmc$ has the projective Beth definable property (PBDP) if for
	any $\Lmc$-ontology $\Omc$, concept name $A$, and signature
	$\Sigma\subseteq \text{sig}(\Omc)$ the following holds: if $A$ is
	implicitly definable using $\Sigma$ under $\Omc$, then $A$ is
	explicitly $\Lmc(\Sigma)$-definable under $\Omc$. 
\end{definition}
\begin{remark}\label{rem:CIPBDP}{\em
	The CIP implies the PBDP. To see this, assume that an $\Lmc$-ontology $\Omc$, concept name $A$ and a signature $\Sigma$ are given, and that $A$ is implicitly definable from $\Sigma$ under $\Omc$. Then $\Omc\cup \Omc_{\Sigma}\models A \equiv A'$, with $\Omc_{\Sigma}$ defined above. Take
	an $\Lmc$-interpolant $C$ for $A \sqsubseteq A'$ under $\Omc,\Omc_{\Sigma}$. 
	Then $C$ is an explicit $\Lmc(\Sigma)$-definition of $A$
	under $\Omc$. }
\end{remark}
\begin{remark}\label{rem:nom}{\em The PBDP implies that
implicitly definable nominals are explicitly definable and that, more generally, every implicitly definable concept $C$ is explicitly definable. This can be shown by adding $A\equiv C$ to the ontology for a fresh concept name $A$ and
asking for an explicit definition of $A$ in the extended ontology.
}
\end{remark}
\begin{remark}\label{rem:bot1}
{\em
	The CIP and PBDP are invariant under adding $\bot$ (interpreted as the empty set) to the languages introduced above. The straightforward proof is given in the appendix of the full version.
}
\end{remark}  
We next prove that the majority of tractable extensions of $\mathcal{EL}$ does not enjoy the CIP nor PBDP. 

\medskip

\noindent
{\bf Theorem~\ref{posneg}.}
{\em The following DLs do not enjoy the CIP nor PBDP:
\begin{enumerate}
	\item $\mathcal{EL}$ with the universal role, 
	\item $\mathcal{EL}$ with nominals, 
	\item $\mathcal{EL}$ with a single role inclusion $r\circ s\sqsubseteq s$, 
	\item $\mathcal{EL}$ with role hierarchies and a transitive role, 
	\item $\mathcal{EL}$ with inverse roles.
\end{enumerate}	
In Points~2 to 5, the CIP/PBDP also fails if the universal role
can occur in interpolants/explicit definitions.}
\begin{proof}
We first show that $\mathcal{EL}_u$ does not enjoy the PBDP. Point~1 then follows using Remark~\ref{rem:CIPBDP}.
We define an $\mathcal{EL}_u$-ontology $\Omc_{u}$, signature $\Sigma$, and concept
name $A$ such that $A$ is implicitly definable using $\Sigma$ under $\Omc_{u}$ but not
$\mathcal{EL}_{u}(\Sigma)$-explicitly definable under $\Omc_{u}$. 
	Define $\Omc_{u}$ as the following set of CIs:
$$
A \sqsubseteq B, \quad	D \sqcap \exists u.A \sqsubseteq E, \quad B \sqsubseteq \exists r. C
$$
$$
C \sqsubseteq D, \quad B \sqcap \exists r. (C \sqcap E) \sqsubseteq A,
$$
and let $\Sigma = \{B,D,E,r\}$. We have $\Omc_{u} \models A \equiv B \sqcap \forall r.
	(D \rightarrow E)$,\footnote{Here and in what follows we use standard $\mathcal{ALC}$ syntax and semantics and set $C\rightarrow D:=\neg C \sqcup D$~\cite{DBLP:books/daglib/0041477}.} so $A$ is implicitly definable using $\Sigma$ under $\Omc_{u}$.
	The interpretations $\I$ and $\I'$ given in Figure~\ref{figure:fig1} show that
	$A$ is not explicitly $\mathcal{EL}_{u}(\Sigma)$-definable under $\Omc_{u}$.
    %%%%%%%%%%%%%%%%%%%%%%%
    %%%%%%%%%%%%%%%%%%%%%%%
	\begin{figure}[th]
		\begin{center}
			\begin{tikzpicture}[every label/.style={font=\small,inner sep=0pt},node distance = 0.5cm and 0.5cm]
				\node[label=above:{${\color{blue}A},B$}] (a) {$a$};
				\node[below left=of a,label=below:{${\color{blue}C},D,E$}] (b) {$b$};
				
				\node[label=above:{$B$},right=3cm of a] (a') {$a'$};
				\node[below left=of a',label=below:{$D,E$}] (b') {$b'$};
				\node[below right=of a',label=below:{${\color{blue}C},D$}] (d') {$b''$};
				
				\foreach \i/\j in {a/b,a'/b',a'/d'} {
					\draw[->] (\i) -- (\j) ;
				}
				
			\end{tikzpicture}
		\end{center}
		\caption{Interpretations $\Imc$ (left) and $\Imc'$ (right) used for $\Omc_{u}$.}
		\label{figure:fig1}
	\end{figure}
    %%%%%%%%%%%%%%%%%%%%%%%
    %%%%%%%%%%%%%%%%%%%%%%%
	Indeed, $\I$ and $\I'$ are both models of $\Omc_{u}$, $a\in A^{\Imc}$, $a'\not\in
	A^{\Imc'}$, and the relation
	$
	\{(a,a'),(b,b')\} 
	$
	is a $\mathcal{EL}_u(\Sigma)$-simulation between $\Imc$ and $\Imc'$.
	As $\mathcal{EL}_u(\Sigma)$-concepts are preserved under
	$\mathcal{EL}_u(\Sigma)$-simulations (Lemma~\ref{lem:simulationel}), if $\Omc_{u}\models A\equiv F$ for some $\mathcal{EL}_{u}(\Sigma)$-concept
	$F$, then from $a\in A^{\Imc}$ we obtain $a\in F^{\Imc}$. This implies $a'\in F^{\Imc'}$, and so $a'\in A^{\Imc'}$.
        As $a'\not\in A^{\Imc'}$, we obtain a contradiction.

\medskip

We next prove Point~2. An example from~\cite{KR21defdes}
shows that $\mathcal{ELO}$ does not enjoy the CIP/PBDP. 
Here we show that $\mathcal{ELO}$ does not enjoy the CIP/PBDP, even
if interpolants/explicit defintions are from $\mathcal{ELO}_{u}$. Let $\Omc_{n}$ 
contain the following CIs:
	$$
	A \sqsubseteq \exists r.(E \sqcap \{c\}), \quad \top \sqsubseteq \exists s.(Q_{2} \sqcap \exists s.\{c\}) 
	$$
	$$
	\exists s.(Q_{1} \sqcap Q_{2} \sqcap \exists s.\{c\}) \sqsubseteq A,
	\quad \exists s.E \sqsubseteq Q_{1}
	$$
	and let $\Sigma=\{c,s,Q_{1}\}$. Observe that $A$ is implicitly definable  using $\Sigma$ under $\Omc_{n}$ as
	$
	\Omc_{n} \models A \equiv 
	\forall s.(\exists s.\{c\} \rightarrow Q_{1})$.
	The relation $\{(a,a'),(b,b'),(c,c')\}$ is an $\mathcal{ELO}_{u}(\Sigma)$-simulation between the interpretations $\I$ and $\I'$ defined in Figure~\ref{figure:fig4}. Now we can apply the same argument as in Point~1 to show that $A$ is not explicitly $\mathcal{ELO}_{u}(\Sigma)$-definable under $\Omc_{n}$.
 
	\begin{figure}[th]
		\begin{center}
			\begin{tikzpicture}[every label/.style={font=\small,inner sep=0pt},node distance = 1cm and 1.5cm]
				\node[label=above:{${\color{blue}A}$}] (a) {$a$};
				\node[below right=of a,label=right:{${\color{blue}E, A},\{c\}$}] (c) {$c$};
				\node[below=of a, left=of c, label=below:{${\color{blue}A,Q_2}, Q_1$}] (b) {$b$};
				
				\node[right=4.2cm of a] (a') {$a'$};
				\node[below right=of a',label=right:{$\{c\}$}] (c') {$c'$};
				\node[below=of a', left=of c', label=below:{$Q_1$}] (b') {$b'$};
				\node[above=of c',label=above:{${\color{blue}Q_2}$}] (f) {$b''$};
				
				\path[->] 
				(a) edge node[left]  {$s$} (b)
				(a) edge[blue] node[above right] {$r$} (c)
				(b) edge[<->] node[below] {$s$} (c)
				(b) edge[in=-220,out=-180,loop] node[left,above] {$s$} (b)
				(a') edge node[left] {$s$} (b')
				(c') edge[<->] node[below] {$s$} (b')
				(b') edge[in=-220,out=-180,loop] node[left,above] {$s$} (b')
				(f) edge[in=0,out=-40,loop] node[right,below] {$s$} (f)
				(b') edge node[above left] {$s$} (f)
				(a') edge node[above right] {$s$} (f)
				(f) edge[<->] node[right] {$s$} (c')
				;
				
			\end{tikzpicture}
		\end{center}
		\caption{Interpretations $\Imc$ (left) and $\Imc'$ (right) used for $\Omc_{n}$.}
		\label{figure:fig4}
	\end{figure}

For Point~3, let $\Omc_{r}$ contain 
	$$
	A \sqsubseteq \exists r.E, \quad E \sqsubseteq \exists s.B, \quad \exists s.B \sqsubseteq A, \quad
r \circ s \sqsubseteq s,
	$$
	and let $\Sigma= \{s,E\}$. Then $A$ is implicitly definable using $\Sigma$ under $\Omc_{r}$ since
	$$
	\Omc_{r} \models \forall x (A(x) \leftrightarrow \exists y (E(y) \wedge \forall z (s(y,z) \rightarrow s(x,z))).
	$$
	We show that there does not exist any $\mathcal{EL}_{u}(\Sigma)$-explicit
	definition of $A$ under $\Omc_{r}$. 
	\begin{figure}[th]
		\begin{center}
			\begin{tikzpicture}[every label/.style={font=\small,inner sep=0pt},node distance = 1cm and 1cm]
				\node[label=above:{${\color{blue}A}$}] (a) {$a$};
				\node[below right=of a,label=above right:{$E,{\color{blue}A}$}] (c) {$c$};
				\node[below=of a, left=of c, label=below:{${\color{blue}B}$}] (b) {$b$};
				
				\node[right=3.5cm of a] (a') {$a'$};
				\node[below right=of a',label=above right:{$E,{\color{blue}A}$}] (c') {$c'$};
				\node[below=of a', left=of c'] (b') {$b'$};
				\node[below right=of c',label=above right:{$E,{\color{blue}A}$}] (f) {$c''$};
				\node[below=of c', left=of f, label=below:{${\color{blue}B}$}] (g) {$b''$};
				
				\path[->] 
				(a) edge node[left] {$s$} (b)
				(a) edge[blue] node[above right] {$r$} (c)
				(c) edge node[below] {$s$} (b)
				(c) edge[blue,in=-30,out=-60, loop] node[below right] {$r$} (c)
				(a') edge node[left] {$s$} (b')
				(c') edge node[below] {$s$} (b')
				(c') edge node[left] {$s$} (g)
				(c') edge[blue] node[above right] {$r$} (f)
				(f) edge node[below] {$s$} (g)
				(f) edge[blue,in=-30,out=-60, loop] node[below right] {$r$} (f)
				;
				
			\end{tikzpicture}
		\end{center}
		\caption{Interpretations $\Imc$ (left) and $\Imc'$ (right) used for $\Omc_{r}$.}
		\label{figure:fig2}
	\end{figure}
	The interpretations $\I$ and $\I'$ given in Figure~\ref{figure:fig2}
	are both models of $\Omc_{r}$, $a\in A^{\Imc}$, $a'\not\in
	A^{\Imc'}$, and the relation
	$\{(a,a'), (b,b'), (c,c')\}
	$ is an $\mathcal{EL}_u(\Sigma)$-simulation between $\Imc$ and $\Imc'$. One can now show in the same way as in Point~1
        that no $\mathcal{EL}_u(\Sigma)$-definition of $A$ under $\Omc_{r}$ 
        exists.

Point~4 is shown in the appendix of the full version using a modification of the ontology used for Point~3.

To prove Point~5, obtain an $\mathcal{ELI}$-ontology $\Omc_{i}$ from $\Omc_{u}$ defined above by replacing the second CI of $\Omc_{u}$ by $D \sqcap \exists r^-.A \sqsubseteq E$. 
	Let, as before, $\Sigma= \{B,D,E,r\}$. Then $A$ is implicitly definable from $\Sigma$ under $\Omc_{i}$ 
        (the same explicit definition works), but $A$ is not explicitly 
        $\mathcal{ELI}_{u}(\Sigma)$-definable under $\Omc_{i}$ (the same interpretations $\I$ and $\I'$ work).
\end{proof}
We next discuss a general positive result on interpolation and explicit definition existence that shows that Theorem~\ref{posneg} is essentially optimal. A set $\mathcal{R}$ of RIs is \emph{safe} for a signature $\Sigma$ if
for each RI $r_1\circ\dots\circ r_n\sqsubseteq r\in\mathcal{R}$,
$n\geq 1$, if $\{r_1,\dots,r_n,r\}\cap \Sigma\neq \emptyset$
then $\{r_1,\dots,r_n,r\}\subseteq \Sigma$.
\begin{theorem}\label{th:safety}
	Let $\Omc_{1}, \Omc_{2}$ be $\mathcal{EL}$-ontologies with RIs,
	$C_{1},C_{2}$ be $\mathcal{EL}$-concepts, and set $\Sigma=\text{sig}(\Omc_{1},C_1) \cap \text{sig}(\Omc_{2},C_2)$.
	Assume that the set of RIs in $\Omc_1\cup\Omc_2$ is safe for $\Sigma$ and 
	$\Omc_1\cup\Omc_2\models C_1\sqsubseteq C_2$.
	Then an $\mathcal{EL}$-interpolant for $C_1\sqsubseteq C_2$ under
	$\Omc_1$, $\Omc_2$ exists.
\end{theorem}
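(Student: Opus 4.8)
The plan is to prove Theorem~\ref{th:safety} by a canonical-model construction. First I would put $\Omc_1$ and $\Omc_2$ into the normal form described in the preliminaries (this only adds fresh symbols to each side and does not change the shared signature $\Sigma$, since fresh symbols are unshared), and build the canonical model $\Imc$ of $\Omc_1\cup\Omc_2$ generated from a fresh element $d_0$ satisfying $C_1$ — that is, the forest-shaped model obtained by exhaustively applying the CIs and RIs starting from a root labelled by the conjuncts of $C_1$. Since $\Omc_1\cup\Omc_2\models C_1\sqsubseteq C_2$, we have $d_0\in C_2^{\Imc}$. The interpolant will be read off from a suitable ``$\Sigma$-tree unravelling'' of $\Imc$ around $d_0$: intuitively, the $\mathcal{EL}(\Sigma)$-concept describing, up to some finite depth, the $\Sigma$-labelled tree reachable from $d_0$ using only roles in $\Sigma$.

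The crucial point is to show (a) that this $\Sigma$-concept $D$ — for an appropriately chosen finite depth — satisfies $\Omc_1\cup\Omc_2\models D\sqsubseteq C_2$, and (b) that $\Omc_1\cup\Omc_2\models C_1\sqsubseteq D$. Part (b) is immediate: $D$ is by construction satisfied at $d_0$ in the canonical model, hence at any element of any model satisfying $C_1$, by the standard property of canonical models (every model of $\Omc_1\cup\Omc_2$ satisfying $C_1$ at a point admits a simulation from $(\Imc,d_0)$, and $\mathcal{EL}(\Sigma)$-concepts, indeed all $\mathcal{EL}$-concepts, are preserved under simulations by Lemma~\ref{lem:simulationel}). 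For (a), the idea is: the $\Sigma$-reduct of the canonical model $\Imc$ around $d_0$ can itself be re-expanded into a model of $\Omc_1\cup\Omc_2$ that still satisfies $C_1$ — and more usefully, given \emph{any} model $\Jmc$ with $e\in D^{\Jmc}$, one wants to manufacture a model $\Jmc'$ of $\Omc_1\cup\Omc_2$ with $e\in C_1^{\Jmc'}$ and a simulation back, forcing $e\in C_2^{\Jmc}$. The standard way to make this finitary is to bound the depth by (roughly) the number of ``$\Sigma$-types'' realised in the two canonical models, so that the depth-bounded $\Sigma$-concept already determines an infinite $\Sigma$-labelled tree via a back-and-forth / filtration argument.

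The role of the safety hypothesis is exactly here, and this is where the main obstacle lies. When one takes the $\Sigma$-reduct of $\Imc$ and tries to re-saturate it into a model of $\Omc_1\cup\Omc_2$, the RIs must not create new $\Sigma$-role edges that were invisible in the $\Sigma$-reduct — otherwise the simulation back into the canonical model breaks, because an $\mathcal{EL}(\Sigma)$-simulation only needs to respect $\Sigma$-roles, yet an RI like $r\circ s\sqsubseteq s$ with $s\in\Sigma$, $r\notin\Sigma$ would let a model witness an $s$-successor that the $\Sigma$-reduct of $\Imc$ cannot mimic (this is precisely what drives the counterexample in Point~3 of Theorem~\ref{posneg}). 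Safety rules this out: if an RI mentions any $\Sigma$-role it mentions only $\Sigma$-roles, so $\Sigma$-role edges are generated purely from $\Sigma$-role edges, and the ``$\Sigma$-reduct then re-saturate'' operation commutes appropriately with the RI closure. Concretely, I would prove a lemma stating: if $\mathcal{R}$ is safe for $\Sigma$, then for any interpretation $\Jmc$, the $\mathcal{R}$-closure of the $\Sigma$-reduct of $\Jmc$ has the same $\Sigma$-role edges as the $\Sigma$-reduct of $\Jmc$; equivalently, closing under $\mathcal{R}$ and then taking the $\Sigma$-reduct yields a model whose $\Sigma$-part already reflects all the $\Sigma$-roles.

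With that lemma in hand the argument closes as follows. Given any model $\Jmc$ of $\Omc_1\cup\Omc_2$ and $e\in D^{\Jmc}$: $D$ being the depth-$N$ $\Sigma$-tree concept of $(\Imc,d_0)$, there is an $\mathcal{EL}(\Sigma)$-simulation from the depth-$N$ neighbourhood of $(\Imc,d_0)$ (restricted to $\Sigma$) into $(\Jmc,e)$ restricted to $\Sigma$; choosing $N$ larger than the number of $\Sigma$-types lets us fold this into a full $\mathcal{EL}(\Sigma)$-simulation $S$ from $(\Imc,d_0)$ into $(\Jmc\!\restriction_\Sigma, e)$ by a pigeonhole/back-and-forth closure. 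Now expand $\Jmc\!\restriction_\Sigma$ to a model $\Jmc^{+}$ of $\Omc_1\cup\Omc_2$: wherever a non-$\Sigma$ requirement is needed, copy in the relevant tree from the canonical model $\Imc$; the safety lemma guarantees no new $\Sigma$-edges appear, so $S$ remains an $\mathcal{EL}(\Sigma)$-simulation from $(\Imc,d_0)$ into $(\Jmc^{+},e)$ and in fact extends to a simulation witnessing $(\Imc,d_0)\preceq_{\mathcal{EL},\sig(\Omc_1\cup\Omc_2)}(\Jmc^{+},e)$ on the glued-in parts. Since $d_0\in C_2^{\Imc}$ and $\sig(C_2)\subseteq\sig(\Omc_2)\subseteq\sig(\Omc_1\cup\Omc_2)$ while also $\sig(C_2)$ lies within the signature over which the simulation is faithful (here one uses that $C_2$ is an $\mathcal{EL}$-concept, combined with the normal-form bookkeeping), we get $e\in C_2^{\Jmc^{+}}$. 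Finally, because $C_2$ is a positive existential ($\mathcal{EL}$) concept and $\Jmc^{+}$ was obtained from $\Jmc$ only by \emph{adding} domain elements and edges in fresh non-$\Sigma$ symbols plus copies, satisfaction of $C_2$ transfers back: $e\in C_2^{\Jmc}$. Hence $\Omc_1\cup\Omc_2\models D\sqsubseteq C_2$, and together with (b), $D$ is the desired $\mathcal{EL}$-interpolant. I expect the delicate bookkeeping to be (i) the exact choice of depth bound $N$ and the folding argument, and (ii) verifying the safety lemma together with the claim that the re-expansion $\Jmc\mapsto\Jmc^{+}$ preserves $C_2$ and the simulation — this is the heart of the proof.
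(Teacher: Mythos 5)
Your diagnosis of what safety buys --- that closing the re-expanded $\Sigma$-reduct under the RIs cannot create new $\Sigma$-role edges, so the $\mathcal{EL}(\Sigma)$-simulation back into the canonical model survives saturation --- is exactly the heart of the paper's argument; your ``safety lemma'' corresponds to the inductive claim the paper proves about its sequence $\Jmc_0,\Jmc_1,\ldots,\Jmc_N$. The architecture you wrap around it, however, has two genuine gaps. First, the folding step is false as stated: for a finite but cyclic canonical model $\Imc$, no finite depth $N$ makes the depth-$N$ $\Sigma$-unravelling concept $D$ capture the full simulation type of $(\Imc,d_0)$. Take $\Imc$ to be a single point with an $r$-self-loop, $r\in\Sigma$: then $D$ is $\exists r.\cdots\exists r.\top$ ($N$ times), which is satisfied at the start of a finite $r$-path of length $N$, yet there is no $\mathcal{EL}(\Sigma)$-simulation from the loop into a finite path. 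So the full simulation $S$ from $(\Imc,d_0)$ into $(\Jmc_{|\Sigma},e)$, on which both your construction of $\Jmc^{+}$ and the verification $\Jmc^{+}\models\Omc_1$ rely, need not exist for an arbitrary $\Jmc$ with $e\in D^{\Jmc}$.

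Second, the final transfer runs in the wrong direction. You derive $e\in C_2^{\Jmc^{+}}$ and conclude $e\in C_2^{\Jmc}$ because $C_2$ is positive existential and $\Jmc^{+}$ ``only adds'' material; but positive existential concepts are preserved when passing \emph{to} an extension, not back from it. Moreover, $\Jmc^{+}$ is built over $\Jmc_{|\Sigma}$, so its interpretation of the $\Omc_2$-private symbols (which $C_2$ may use) is manufactured rather than inherited from $\Jmc$, and the simulation you invoke to obtain $e\in C_2^{\Jmc^{+}}$ respects only $\Sigma$ on the old part --- neither suffices to pull $C_2$ back into $\Jmc$. The paper avoids both problems by never massaging an arbitrary model: it argues by contradiction, reducing interpolant existence via Theorem~\ref{thm:critele} to the single entailment $\Omc_{1}\cup\Omc_{2},\Amc_{\Omc_{1}\cup\Omc_{2},A}^{\Sigma}\models B(\rho_{A})$, takes the canonical model $\Jmc_0$ of that knowledge base (whose domain already contains $\Delta^{\Imc_{\Omc_{1}\cup\Omc_{2},A}}$, so the required simulation is just the identity extended by universality of the canonical model), re-adds the $\text{sig}(\Omc_1)\setminus\Sigma$ interpretation from $\Imc_{\Omc_{1}\cup\Omc_{2},A}$, closes under RIs using safety, and obtains a model of $\Omc_1\cup\Omc_2$ refuting $A\sqsubseteq B$; finiteness of the interpolant is then delegated entirely to the derivation-tree machinery behind Theorem~\ref{thm:critele} rather than to a depth/pigeonhole bound. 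If you want to keep a direct construction, the fix is to establish $\Omc_1\cup\Omc_2,\Amc_D\models C_2(\rho_D)$ on the canonical model of $\Amc_D$ and transport it to an arbitrary $\Jmc$ via the homomorphism witnessing $e\in D^{\Jmc}$, rather than rebuilding $\Jmc$ itself.
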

The proof technique is based on simulations and similar to~\cite{DBLP:conf/kr/KonevLPW10,DBLP:journals/lmcs/LutzSW19}. Theorem~\ref{th:safety} has a few interesting consequences. For instance, $\mathcal{EL}$ with transitive roles enjoys both the CIP and PBDP since transitivity is expressed by the role inclusion $r\circ r\sqsubseteq r$ which is safe for any signature (as it only uses a single role name).
\section{Interpolant and Explicit Definition Existence}
\label{sec:exist}
We introduce interpolant and explicit definition existence as decision problems and 
establish a polynomial time reduction of the latter to the former. We then show that it suffices to consider 
ontologies in normal form and that the addition of $\bot$ does not affect the complexity of the decision problems.
\begin{definition}
	Let $\Lmc$ be a DL. Then \emph{$\Lmc$-interpolant existence} is the problem to decide for any
	$\Lmc$-ontologies $\Omc_{1},\Omc_{2}$ and $\Lmc$-concepts $C_{1},C_{2}$ whether there exists an $\Lmc$-interpolant for 
	$C_{1}\sqsubseteq C_{2}$ under $\Omc_{1},\Omc_{2}$. 
\end{definition}
Observe that interpolant existence reduces to checking $\Omc_{1}\cup \Omc_{2}\models C_{1}\sqsubseteq C_{2}$ for logics with the CIP but that this is not the case for logics without the CIP. 

\begin{definition}
	Let $\Lmc$ be a DL. Then \emph{$\Lmc$-explicit definition existence} is the problem to decide for any
	$\Lmc$-ontology $\Omc$, signature $\Sigma$, and concept name $A$ whether $A$ is explicitly definable in 
	$\Lmc(\Sigma)$ under $\Omc$.
\end{definition}
\begin{remark}\label{rem:red2}
	{\em There is a polynomial time reduction of $\Lmc$-explicit definition existence to $\Lmc$-interpolant existence.
	Moreover, any algorithm computing $\Lmc$-interpolants also computes $\Lmc$-explicit definitions and any bound
	on the size of $\Lmc$-interpolants provides a bound on the size of $\Lmc$-explicit definitions.
	The proof is similar to the proof of Remark~\ref{rem:CIPBDP}.
       } 
\end{remark}
We next observe that replacing the original ontologies by a 
conservative extension preserves interpolants and explicit definitions. 
Thus, it suffices to consider ontologies in normal form and 
interpolants for inclusions between concept names. 
\begin{lemma}\label{lem:normal} Let $\Omc_{1},\Omc_{2}$ be ontologies and $C_{1},C_{2}$ concepts in any DL $\Lmc$ considered in this paper. Then one can compute in polynomial time $\Lmc$-ontologies $\Omc_{1}',\Omc_{2}'$ in normal form and with fresh concept names $A,B$ such that an $\Lmc$-concept $C$ is an interpolant for
$C_{1}\sqsubseteq C_{2}$ under $\Omc_{1},\Omc_{2}$ iff it is an interpolant for $A \sqsubseteq B$ under $\Omc_{1}',\Omc_{2}'$.
\end{lemma}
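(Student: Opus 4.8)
The plan is to reduce to normal form in two stages: first turn the inclusion $C_1 \sqsubseteq C_2$ into an inclusion $A \sqsubseteq B$ between fresh concept names by adding the definitorial axioms $A \sqsubseteq C_1$ to $\Omc_1$ and $C_2 \sqsubseteq B$ to $\Omc_2$; second, normalize $\Omc_1 \cup \{A \sqsubseteq C_1\}$ and $\Omc_2 \cup \{C_2 \sqsubseteq B\}$ separately using the standard normalization procedure, which is known (as recalled in the Preliminaries) to produce a conservative extension in polynomial time using the same constructors. The care needed is with signatures: the fresh concept name $A$ must be added to $\Omc_1$ only (so that $A \in \sig(\Omc_1') \setminus \sig(\Omc_2')$ unless it already occurred, which we avoid by freshness), and likewise $B$ to $\Omc_2$ only. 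More importantly, the auxiliary concept names introduced by normalization of $\Omc_i'$ must be chosen fresh and \emph{disjoint} across $i = 1, 2$, so that normalization does not accidentally enlarge the shared signature $\Sigma = \sig(\Omc_1',A) \cap \sig(\Omc_2',B)$; with this precaution $\Sigma$ equals $\sig(\Omc_1,C_1) \cap \sig(\Omc_2,C_2)$ together with possibly $\top$-related trivialities, and in particular the signature constraint on interpolants is unchanged.

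The key technical fact I would isolate and prove is the following: if $\Omc_i^\ast$ is a conservative extension of $\Omc_i^{+}$ (where $\Omc_1^{+} = \Omc_1 \cup \{A \sqsubseteq C_1\}$, $\Omc_2^{+} = \Omc_2 \cup \{C_2 \sqsubseteq B\}$) obtained by adding axioms over fresh symbols, and these fresh symbols are disjoint across $i=1,2$, then for every concept $D$ over $\Sigma$ we have $\Omc_1^\ast \cup \Omc_2^\ast \models D \sqsubseteq E$ iff $\Omc_1^{+} \cup \Omc_2^{+} \models D \sqsubseteq E$, for $E \in \{C_2, B, D'\}$ etc. The right-to-left direction is immediate since $\Omc_i^\ast \supseteq \Omc_i^{+}$. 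For left-to-right one takes a model $\Imc$ of $\Omc_1^{+} \cup \Omc_2^{+}$, and expands it to a model of $\Omc_1^\ast \cup \Omc_2^\ast$: by conservativity of $\Omc_1^\ast$ over $\Omc_1^{+}$ one can interpret the $\Omc_1$-side auxiliary symbols on the same domain; by conservativity of $\Omc_2^\ast$ over $\Omc_2^{+}$ one can then interpret the $\Omc_2$-side auxiliary symbols; since the two sets of new symbols are disjoint and neither normalization touches symbols of the other side, these two expansions are compatible and yield a model of the union that agrees with $\Imc$ on $\Sigma$ (indeed on everything in $\sig(\Omc_1^{+}) \cup \sig(\Omc_2^{+})$). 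This is exactly what is needed to transport the two entailments $\Omc_1 \cup \Omc_2 \models C_1 \sqsubseteq D$ and $\Omc_1 \cup \Omc_2 \models D \sqsubseteq C_2$ defining an interpolant between the two settings. Combined with the trivial equivalences $\Omc_1^{+}\cup\Omc_2^{+} \models A \sqsubseteq D \Leftrightarrow \Omc_1\cup\Omc_2 \models C_1 \sqsubseteq D$ and $\Omc_1^{+}\cup\Omc_2^{+}\models D \sqsubseteq B \Leftrightarrow \Omc_1\cup\Omc_2\models D \sqsubseteq C_2$ (using that $A$, $B$ are fresh, so in any model of $\Omc_i$ we may set $A^\Imc = C_1^\Imc$ and $B^\Imc = C_2^\Imc$), we obtain that $C$ interpolates $C_1 \sqsubseteq C_2$ under $\Omc_1,\Omc_2$ iff it interpolates $A \sqsubseteq B$ under $\Omc_1',\Omc_2'$.

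The main obstacle, and the step deserving the most care, is the bookkeeping of signatures through normalization: one must verify that the standard normalization of each $\Omc_i^{+}$ does not introduce symbols that lie in $\sig(\Omc_{3-i}^{+})$, and that it does not drop any symbol of $\sig(\Omc_i^{+})$ (it does not, since conservative extensions only add axioms and symbols). Given that, the signature of an interpolant — constrained to $\sig(\Omc_1',A) \cap \sig(\Omc_2',B)$ — coincides with the original constraint $\sig(\Omc_1,C_1)\cap\sig(\Omc_2,C_2)$ up to the name $u$ (which we do not count as a symbol) and possibly $\top$, and an $\Lmc$-concept over the former signature is an $\Lmc$-concept over the latter and vice versa. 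All remaining verifications — that normalization runs in polynomial time, that it preserves the constructors of $\Lmc$, and the semantic model-expansion arguments above — are routine and follow the pattern already used in the literature cited in the Preliminaries. Finally, Remark~\ref{rem:red2} ensures the analogous statement for explicit definitions, so it suffices to prove the interpolant version.
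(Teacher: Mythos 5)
Your proposal is correct and follows essentially the same route as the paper's (one-line) proof: introduce fresh names $A,B$ tied to $C_{1},C_{2}$ by definitorial axioms and then take normal-form conservative extensions computed in polynomial time, with the signature bookkeeping (fresh auxiliary symbols disjoint across the two sides, so that $\sig(\Omc_{1}',A)\cap\sig(\Omc_{2}',B)=\sig(\Omc_{1},C_{1})\cap\sig(\Omc_{2},C_{2})$) being exactly the point that needs care. The only cosmetic difference is that the paper adds the equivalences $A\equiv C_{1}$ and $B\equiv C_{2}$ whereas you add only the one-directional inclusions $A\sqsubseteq C_{1}$ and $C_{2}\sqsubseteq B$; as your model-expansion argument shows, these directions suffice, and your write-up is in fact more detailed than the paper's.
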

\begin{proof}
	Let $\Omc_{1}'$ and $\Omc_{2}'$ be normal form conservative extensions of $\Omc_{1}\cup \{A \equiv C\}$ and, respectively, $\Omc_{2}\cup \{ B \equiv D\}$, computed in polynomial time. One can show that $\Omc_{1}'$ and $\Omc_{2}'$ are as required.
\end{proof}
\begin{remark}\label{rem:bot2}
{\em Assume that $\Lmc$ is any of the DLs introduced above and let $\Lmc_{\bot}$ denote
	its extension with $\bot$. Then $\Lmc$-interpolant existence and $\Lmc$-explicit definition existence can be reduced in polynomial time to  $\Lmc_{\bot}$-interpolant existence and $\Lmc_{\bot}$-explicit definition existence, respectively. The converse direction also holds modulo an oracle deciding whether $\Omc\models C \sqsubseteq \bot$.}
\end{remark}

\section{Interpolant and Explicit Definition Existence in Tractable $\EL$ Extensions}\label{sec:existence1}

The aim of this section is to analyse interpolants and explicit definitions for extensions of $\mathcal{EL}$ with any combination of nominals, role inclusions, or the universal role. We show the following result from the introduction.

\medskip
\noindent
{\bf Theorem~\ref{inttractable}.}
{\em
For $\mathcal{EL}$ and any extension with any combination of nominals, 
role inclusions, the universal role, or $\bot$, the existence of interpolants and explicit definitions is in \PTime. If an interpolant/explicit definition exists, 
then there exists one of at most exponential size that can be computed in exponential time. 
This bound is optimal.  
}

\medskip

Before we start with a sketch of the proof we give instructive examples
showing that the exponential bound on the size of explicit
definitions is optimal.
\begin{example}\label{exm:succ}
	{\em Variants of the following example have already been used for various succinctness arguments in DL.
	Let 
	\begin{eqnarray*}
		\Omc_{b}  & = & \{A \sqsubseteq M\sqcap \exists r_{1}.B_{1} \sqcap \exists r_{2}.B_{1}\}\cup\\
		&   & \{B_{i} \sqsubseteq \exists r_{1}.B_{i+1} \sqcap \exists r_{2}.B_{i+1}\mid 1\leq i<n\}\cup\\
		&   & \{B_{n} \sqsubseteq B,\exists r_{1}.B \sqcap \exists r_{2}.B\sqsubseteq B, B \sqcap M\sqsubseteq A\}
	\end{eqnarray*}
	and $\Sigma_{0}=\{r_{1},r_{2},B_{n},M\}$. $A$ triggers a marker $M$ and 
	a binary tree of depth $n$ whose leafs are decorated with $B_{n}$. Conversely, if $B_{n}$ is true at all 
	leafs of a binary tree of depth $n$, then $B$ is true at all nodes of the tree and $B$ together with $M$ entail $A$ at its root.
	Let, inductively, $C_{0}:=B_{n}$ and $C_{i+1}=\exists r_{1}.C_{i} \sqcap \exists r_{2}.C_{i}$, for $0<i<n$,
	and $C= M \sqcap C_{n}$. Then $C$ is the smallest explicit $\mathcal{EL}(\Sigma_{0})$-definition of 
	$A$ under $\Omc_{b}$. Next let 
	\begin{eqnarray*}
		\Omc_{p}	& = & \{r_{i}\circ r_{i} \sqsubseteq r_{i+1} \mid 0\leq i <n\}\cup\\
		&   & \{A \sqsubseteq \exists r_{0}.B, B \sqsubseteq \exists r_{0}.B, \exists r_{n}.B \sqsubseteq A\}
	\end{eqnarray*}
	and $\Sigma_{1}= \{r_{0},B\}$. Then $\exists r_{0}^{2^{n}}.B$ is the smallest explicit $\mathcal{EL}(\Sigma_{1})$-definition
	of $A$ under $\Omc_{p}$.}
\end{example}
Observe that using $\Omc_{b}$ one enforces explicit definitions of exponential size by generating a binary tree of linear depth whereas using $\Omc_{p}$ this is achieved by generating a path of exponential length. The latter can only happen if role inclusions 
are used in the ontology. One insight provided by the exponential upper bound on the size of explicit definitions in
Theorem~\ref{inttractable} is that the two examples cannot be combined to enforce a binary tree of exponential depth.

To continue with the proof we introduce ABoxes as a technical tool that allows us to move from interpretations to (potentially incomplete) sets of facts and concepts.
An \emph{ABox $\Amc$} is a (possibly infinite) set of assertions of the form
$A(x)$, $r(x,y)$, $\{a\}(x)$, and $\top(x)$ with $A\in \NC$, $r\in \NR$, $a\in \NI$, and $x,y$ individual variables (we call individuals used in ABoxes variables to distinguish them from individual names used in nominals).
We denote by $\text{ind}(\Amc)$ the set of individual variables in $\Amc$. 
A \emph{$\Sigma$-ABox} is an ABox using symbols from $\Sigma$ only. Models of ABoxes are defined as usual. We do not make the unique name assumption.
  
Every interpretation $\Imc$ defines an ABox $\Amc_{\Imc}$ by identifying every $d\in \Delta^{\Imc}$ 
with a variable $x_{d}$ and taking $A(x_{d})$ if $d\in A^{\Imc}$, $r(x_{c},x_{d})$ if $(c,d)\in r^{\Imc}$,
$\{a\}(x_{d})$ if $a^{\Imc}=d$. Conversely, ABoxes $\Amc$ define interpretations in the obvious way (by identifying variables $x,y$ if $\{a\}(x),\{a\}(y)\in \Amc$). We associate with every ABox $\Amc$ a directed graph $G_{\Amc}=(\text{ind}(\Amc),\bigcup_{r\in \NR} \{(x,y) \mid r(x,y)\in \Amc\})$. Let $\Gamma$ be a set of individual names.
Then $\Amc$ is \emph{ditree-shaped modulo $\Gamma$} if after dropping some facts of the form $r(x,y)$ with $\{a\}(y)\in \Amc$ for some $a\in \Gamma$, it is \emph{ditree-shaped} in the sense that $G_{\Amc}$ is acyclic and $r(x,y)\in \Amc$ and $s(x,y)\in \Amc$ imply $r=s$. A \emph{pointed ABox} is a pair $\Amc,x$ with $x\in \text{ind}(\Amc)$. Then $\ELO_{u}(\Sigma)$-concepts correspond to pointed $\Sigma$-ABoxes $\Amc,x$ such that $\Amc$ is ditree-shaped modulo $\NI \cap \Sigma$ and $\mathcal{ELO}(\Sigma)$-concepts correspond to \emph{rooted} pointed $\Sigma$-ABoxes $\Amc,x$ such that $\Amc$ is ditree-shaped modulo $\NI \cap \Sigma$, where $\Amc,x$ is called rooted if for every $y\in \text{ind}(\Amc)$ there is a path from $x$ to $y$ in $G_{\Amc}$.
We write $\Omc,\Amc\models C(x)$ if $x^{\Imc}\in C^{\Imc}$ for every model
$\Imc$ of $\Omc$ and $\Amc$.

Given an $\mathcal{ELRO}_{u}$-ontology $\Omc$ in normal form and a concept name $A$,
one can construct in polynomial time the \emph{canonical model} 
\emph{$\Imc_{\Omc,A}$ of $\Omc$ and $A$} using the approach introduced 
in~\cite{IJCAI05-short}.
More generally, the canonical model $\Imc_{\Omc,\Amc}$ for an ABox $\Amc$ 
and ontology $\Omc$ can be constructed in polynomial time and 
is a model of both $\Omc$ and $\Amc$ such that for any 
$\mathcal{ELO}_{u}$-concept $C$ using symbols from $\Omc$ only and any 
$x\in \text{ind}(\Amc)$,
\begin{itemize}
\item[($\dagger$)] $\Omc,\Amc\models C(x)$ iff $x\in C^{\Imc_{\Omc,\Amc}}$, 
\end{itemize}
details are given in the appendix of the full version. We let $\Imc_{\Omc,A}=\Imc_{\Omc,\Amc}$ with  $\Amc=\{A(\rho_{A})\}$. Note that in~\cite{IJCAI05-short}
the condition ($\dagger$) is only stated for subconcepts $C$ of the ontology 
$\Omc$, thus ($\dagger$) requires a proof.
\begin{example}\label{exm:can}
  {\em The interpretations $\Imc$ defined in the proof of Theorem~\ref{posneg} define
	canonical models $\Imc_{\Omc,A}$ with $\rho_{A}=a$ for the ontologies $\Omc\in \{\Omc_{u},\Omc_{n},\Omc_{r},\Omc_{i}\}$. 
   The interpretations $\Imc'$ define canonical models $\Imc_{\Omc,\Amc_{\Omc}^{\Sigma}}$ with $\Amc_{\Omc}^{\Sigma}$ the $\Sigma$-reduct of $\Imc_{\Omc,A}$ regarded as an ABox and $\rho_{A}=a'$.
}   	
\end{example}
The \emph{directed unfolding} of a pointed $\Sigma$-ABox $\Amc,x$ into a pointed $\Sigma$-ABox $\Amc^{u},x$ that is ditree-shaped modulo $\Sigma\cap \NI$ is defined in the standard way. In the \emph{rooted directed unfolding}, nodes that cannot be reached from $x$ via role names are dropped.

Assume now that $\Omc$ is in normal form and $A$ a concept name. Let $\Amc_{\Omc,A}^{\Sigma}$ be the $\Sigma$-reduct of the canonical model $\Imc_{\Omc,A}$, regarded as an ABox. Denote by  $\Amc_{\Omc,A}^{\Sigma,u},\rho_{A}$ the directed unfolding of $\Amc_{\Omc,A}^{\Sigma},\rho_{A}$, by $\Amc_{\Omc,A}^{\downarrow \Sigma},\rho_{A}$ the sub-ABox of $\Amc_{\Omc,A}^{\Sigma}$ rooted in $\rho_{A}$, and by $\Amc_{\Omc,A}^{\downarrow\Sigma,u},\rho_{A}$ its rooted directed unfolding. Theorem~\ref{inttractable} is a direct consequence of the following characterization of interpolants.
\begin{theorem}\label{thm:critele}
	 There exists a polynomial $p$ such that the following conditions are equivalent for all $\mathcal{ELRO}_{u}$-ontologies $\Omc_{1}, \Omc_{2}$ in normal form, concept names $A,B$, and $\Sigma=\text{sig}(\Omc_{1},A) \cap \text{sig}(\Omc_{2},B)$:
	\begin{enumerate}
		\item An $\mathcal{ELO}_{u}$-interpolant for $A \sqsubseteq B$ under $\Omc_{1},\Omc_{2}$ exists;
		\item $\Omc_{1}\cup \Omc_{2},\Amc_{\Omc_{1}\cup \Omc_{2},A}^{\Sigma}\models B(\rho_{A})$;
		\item there exists a finite subset $\Amc$ of $\Amc_{\Omc_{1}\cup \Omc_{2},A}^{\Sigma,u}$ with $|\text{ind}(\Amc)| \leq 2^{p(||\Omc_{1}\cup \Omc_{2}||)}$ such that the $\mathcal{ELO}_{u}$-concept corresponding to $\Amc,\rho_{A}$ 
		is an $\mathcal{ELO}_{u}$-interpolant for $A \sqsubseteq B$ under $\Omc_{1},\Omc_{2}$.
	\end{enumerate}
The same equivalences hold if in Points~1 to 3, $\mathcal{ELO}_{u}$ is 
replaced by $\mathcal{ELO}$, $\Amc_{\Omc_{1}\cup \Omc_{2},A}^{\Sigma}$ by $\Amc_{\Omc_{1}\cup \Omc_{2},A}^{\downarrow\Sigma}$,
and $\Amc_{\Omc_{1}\cup \Omc_{2},A}^{\Sigma,u}$ by $\Amc_{\Omc_{1}\cup \Omc_{2},A}^{\downarrow\Sigma,u}$.

In Point~3, $\Amc$ can be computed in exponential time, if it exists.
\end{theorem}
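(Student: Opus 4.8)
The plan is to prove the cycle of implications $1 \Rightarrow 2 \Rightarrow 3 \Rightarrow 1$, the first and last being the conceptual core and the middle being a pumping argument. The key tool throughout is the simulation characterization of $\ELO_u(\Sigma)$-concepts (Lemma~\ref{lem:simulationel}) together with the canonical-model property ($\dagger$), which lets us replace semantic entailment $\Omc,\Amc \models C(x)$ by membership $x \in C^{\Imc_{\Omc,\Amc}}$. Since Lemma~\ref{lem:normal} and Remark~\ref{rem:red2} reduce everything to ontologies in normal form and interpolants for $A \sqsubseteq B$, and Remark~\ref{rem:bot2} handles $\bot$, Theorem~\ref{inttractable} follows once this characterization and the exponential-time computability of $\Amc$ in Point~3 are established.

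For $1 \Rightarrow 2$: suppose $D$ is an $\ELO_u$-interpolant, so $\sig(D) \subseteq \Sigma$, and $\Omc_1 \cup \Omc_2 \models A \sqsubseteq D$ and $\Omc_1 \cup \Omc_2 \models D \sqsubseteq B$. The first containment gives $\rho_A \in D^{\Imc_{\Omc_1 \cup \Omc_2, A}}$, hence, since $D$ is a $\Sigma$-concept and the $\Sigma$-reduct determines membership in $\Sigma$-concepts, $\rho_A \in D^{\Imc'}$ for $\Imc'$ the canonical model $\Imc_{\Omc_1\cup\Omc_2,\, \Amc^{\Sigma}_{\Omc_1\cup\Omc_2,A}}$; by ($\dagger$) applied with $\Amc = \Amc^{\Sigma}_{\Omc_1\cup\Omc_2,A}$ and the interpolant inclusion $D \sqsubseteq B$ one concludes $\Omc_1\cup\Omc_2, \Amc^{\Sigma}_{\Omc_1\cup\Omc_2,A} \models B(\rho_A)$. (The mild subtlety is to check that $\rho_A \in D^{\Imc_{\Omc_1\cup\Omc_2,A}}$ really transfers to the model built over the $\Sigma$-reduct; this is exactly Lemma~\ref{lem:simulationel} applied to the canonical $\Sigma(\Sigma)$-simulation between the two canonical models, cf.\ Example~\ref{exm:can}.)

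For $2 \Rightarrow 3$: starting from $\Omc_1\cup\Omc_2, \Amc^{\Sigma}_{\Omc_1\cup\Omc_2,A} \models B(\rho_A)$, pass to the directed unfolding $\Amc^{\Sigma,u}_{\Omc_1\cup\Omc_2,A}$ — which represents the same $\ELO_u(\Sigma)$-concept and hence still entails $B(\rho_A)$ — and take a \emph{minimal} finite sub-ABox $\Amc$ rooted at $\rho_A$ still witnessing $\Omc_1\cup\Omc_2, \Amc \models B(\rho_A)$; such a finite witness exists because $B$ is entailed already in the finite canonical model over $\Amc$, which is computed by a finitely-branching, terminating saturation. The $\ELO_u$-concept $C$ of $\Amc,\rho_A$ then satisfies $\Omc_1\cup\Omc_2 \models C \sqsubseteq B$ by construction and $\Omc_1\cup\Omc_2 \models A \sqsubseteq C$ because $\Amc \subseteq \Amc^{\Sigma,u}_{\Omc_1\cup\Omc_2,A}$ unfolds a fragment of the canonical model of $A$, and $\sig(C)\subseteq\Sigma$; so $C$ is an interpolant. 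The exponential size bound is the crux: here I would run a \emph{derivation-tree / transfer-sequence} pumping argument (in the style of \cite{DBLP:conf/ijcai/BienvenuLW13}) on the minimal witness $\Amc$ — associate with each variable of $\Amc$ the pair consisting of its "type" in the canonical model over the current ABox and the subset of ontology concept names it is forced to satisfy, and show that along any branch no such pair can repeat in a derivation that still forces $B$ at the root, otherwise the branch could be shortened, contradicting minimality. Since there are at most $2^{p(\|\Omc_1\cup\Omc_2\|)}$ such pairs, $\Amc$ has at most exponentially many individuals; when role inclusions are present the same bound holds because the relevant type still ranges over subsets of the polynomially many ontology symbols (this is the step where the non-combinability of the two phenomena in Example~\ref{exm:succ} is cashed in). This is the step I expect to be the main obstacle, since one must argue that shortening preserves entailment of $B$ in the presence of role inclusions that can create long paths.

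For $3 \Rightarrow 1$ there is nothing to prove: Point~3 explicitly exhibits an interpolant. Finally, for the computability claim, the bound from $2\Rightarrow 3$ means it suffices to search over ditree-shaped $\Sigma$-ABoxes of at most exponential size: enumerate candidate sub-ABoxes of the (exponentially large) truncation of $\Amc^{\Sigma,u}_{\Omc_1\cup\Omc_2,A}$ to that depth and branching, and for each check the two entailments $\Omc_1\cup\Omc_2\models A \sqsubseteq C$ and $\Omc_1\cup\Omc_2\models C \sqsubseteq B$ in polynomial time in $\|C\|$ via the canonical model construction of \cite{IJCAI05-short}; the whole search runs in exponential time. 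The "rooted / $\ELO$" variant is obtained verbatim by replacing directed unfoldings with rooted directed unfoldings and $\Amc^{\Sigma}$ by $\Amc^{\downarrow\Sigma}$ throughout, using that $\ELO(\Sigma)$-concepts correspond to \emph{rooted} ditree-shaped pointed $\Sigma$-ABoxes; no new idea is needed.
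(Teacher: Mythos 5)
Your implications $1 \Rightarrow 2$ and $3 \Rightarrow 1$ match the paper's argument, and your route to a \emph{finite} witness in $2 \Rightarrow 3$ (unraveling tolerance of the directed unfolding plus compactness) is exactly the non-quantitative argument the paper mentions and then sets aside because it yields no size bound. The genuine gap is the exponential size bound, which is the crux of the theorem. Your pumping claim --- that in a minimal witness no type-pair repeats along a branch, else the branch could be shortened --- is refuted by the paper's own Example~\ref{exm:succ}: for $\Omc_{p}$ the minimal sub-ABox of $\Amc^{\Sigma_1,u}_{\Omc_{p},A}$ entailing $A(\rho_A)$ is a path of length $2^{n}$ on which every internal node is forced to satisfy exactly the concept names $\{B\}$ (and even the derivable facts of the form $\exists r_j.B$ coincide for many consecutive nodes), yet no shortening is possible, because the derivation of $A$ at the root goes through $\Omc_{p}\models r_0^{2^n}\sqsubseteq r_n$ and genuinely needs the full path. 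So types do repeat along branches of minimal witnesses; the step you flagged as ``the main obstacle'' is where the argument actually breaks, not a technicality. Moreover, even if branch-repetition could be excluded, you would only bound the \emph{depth} by the (exponential) number of type-pairs, which together with branching degree $\geq 2$ gives a doubly exponential, not exponential, number of individuals --- your sentence ``since there are at most $2^{p(\cdot)}$ such pairs, $\Amc$ has at most exponentially many individuals'' is a non sequitur. Excluding the ``binary tree of exponential depth'' is precisely the content of the proof, not a consequence of counting types; this type-pumping style is in fact what the paper uses for $\mathcal{ELI}$ (transfer sequences), where it deliberately yields only a \emph{double} exponential bound.

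The paper's mechanism for the exponential bound is different: it pumps the \emph{derivation tree} for $B(\rho_A)$ over the original, polynomial-size ABox $\Amc^{\Sigma}_{\Omc_{1}\cup\Omc_{2},A}$ rather than the unfolded witness. There the labels $(a,C)$ range over a polynomial set, so distinctness along a path gives \emph{polynomial} depth; the exponential blow-up is confined to the outdegree of the single rule for role inclusions (the chain $a_1,\ldots,a_{2k}$, whose length is bounded by $2^{q(\cdot)}$ without losing completeness), and polynomial depth times exponential outdegree yields the exponential size after lifting the derivation tree to the unfolding (Lemma~\ref{lem:deriv0}); the interpolant ABox is read off the labels of the lifted tree. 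This also repairs your computability claim: enumerating sub-ABoxes of an exponentially large truncation, as you propose, takes doubly exponential time, whereas the paper constructs the derivation tree bottom-up in exponential time and extracts $\Amc$ from it directly.
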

Note that the polynomial time decidability of interpolant existence follows from Point~2 of Theorem~\ref{thm:critele} (and the tractability of $\mathcal{ELRO}_{u}$~\cite{IJCAI05-short}).
\begin{example}
{\em Our proof of Theorem~\ref{inttractable} can be regarded as an application of 
Theorem~\ref{thm:critele}: by Example~\ref{exm:can}, the interpretations $\Imc$ and $\Imc'$ coincide with the canonical models $\Imc_{\Omc,A}$ and $\Imc_{\Omc,\Amc^{\Sigma}_{\Omc,A}}$ and so $\rho_{A}=a'\not\in A^{\Imc_{\Omc,\Amc^{\Sigma}_{\Omc,A}}}$ is equivalent to $\Omc,\Amc_{\Omc,A}^{\Sigma}\not\models A(\rho_{A})$ (Point~2 in Theorem~\ref{thm:critele}).
}
\end{example}
The following example illustrates the difference between the existence of explicit definitions in $\mathcal{ELO}$ and $\mathcal{ELO}_{u}$ and thus the need for moving to the ABoxes $\Amc_{\Omc,A}^{\downarrow\Sigma}$,
and $\Amc_{\Omc,A}^{\downarrow\Sigma,u}$ if one does not admit the universal 
role in explicit definitions.
\begin{example}
{\em Let $\Omc=\{A\sqsubseteq \{b\},A\sqsubseteq \exists r.B,B \sqsubseteq \exists s.A\}$ and let $\Sigma=\{b,B\}$. Then $A$ is explicitly $\mathcal{ELO}_{u}(\Sigma)$-definable under $\Omc$ since $\Omc\models A \equiv \{b\} \sqcap \exists u.B$ but $A$ is not explicitly $\mathcal{ELO}(\Sigma)$-definable.
Note that in this case $\Amc^{\Sigma}_{\Omc,A}=\{\{b\}(\rho_{A}),B(y)\}$ but
$\Amc^{\downarrow\Sigma}_{\Omc,A}=\{\{b\}(\rho_{A})\}$.
}
\end{example}
We next sketch the proof idea for Theorem~\ref{thm:critele} for the case with universal role in interpolants. 
We show ``1. $\Rightarrow$ 2.'', observe that ``3. $\Rightarrow$ 1.'' is trivial, and then sketch the proof of 
``2. $\Rightarrow$ 3.'' and the exponential time algorithm computing interpolants, details are provided
in the appendix of the full version. For ``1. $\Rightarrow$ 2.'' assume that $C$ is an $\mathcal{ELO}_{u}(\Sigma)$-concept with (i) $\Omc_{1}\cup \Omc_{2}\models A\sqsubseteq C$ and (ii) $\Omc_{1}\cup \Omc_{2}\models C\sqsubseteq B$. By ($\dagger$) and (i), $\Amc^{\Sigma}_{\Omc_{1} \cup \Omc_{2},A}\models C(\rho_{A})$. But then by (ii)  $\Omc_{1}\cup \Omc_{2},\Amc^{\Sigma}_{\Omc_{1} \cup \Omc_{2},A}\models B(\rho_{A})$, as required. 

If one does not impose a bound on the size of $\Amc$ in Point~3, then
one can prove ``2. $\Rightarrow$ 3.'' using compactness and a generalization of \emph{unraveling tolerance} according to
which $\Omc_{1}\cup \Omc_{2},\Amc_{\Omc_{1}\cup \Omc_{2},A}^{\Sigma}$ and $\Omc_{1}\cup \Omc_{2},\Amc_{\Omc_{1}\cup \Omc_{2},A}^{\Sigma,u}$ entail the same $C(\rho_{A})$~\cite{DBLP:journals/lmcs/LutzW17,DBLP:journals/tocl/HernichLPW20}. As we are interested in an exponential bound on the size of 
$\Amc$ (and a deterministic exponential time algorithm computing it) we require a more syntactic approach. 
Our proof of ``2. $\Rightarrow$ 3.'' is based on \emph{derivation trees} which represent a derivation of a fact $C(a)$ from an ontology $\Omc$ and ABox $\Amc$ using a labeled tree. 
Our derivation trees generalize those introduced in~\cite{DBLP:conf/ijcai/BienvenuLW13,DBLP:journals/jair/BaaderBL16} to languages with nominals and role inclusions. Reflecting the use of 
individual names and concept names in the construction of the domain of the 
canonical model~\cite{IJCAI05-short}, we assume 
$a\in \Delta:= \text{ind}(\Amc) \cup ((\NC\cup \NI) \cap \sig(\Omc))$ 
and $C\in \Theta :=\{\top\} \cup 
(\NC \cap \text{sig}(\Omc))\cup \{\{a\}\mid a\in \NI\cap \sig(\Omc)\}$. 
Then a derivation tree $(T,V)$ for 
$(a,C)\in \Delta\times \Theta$ is a tree $T$ with a labeling function 
$V:T\rightarrow \Delta\times \Theta$ such that $V(\epsilon)=(a,C)$ and $(V,T)$ satisfies
rules stating under which conditions the label of $n$ is derived in one 
step from the labels of the successors of $n$. To illustrate, 
the existence of successors $n_{1},n_{2}$ of $n$ with 
$V(n_{1})=(a,C_{1})$ and $V(n_{2})=(a,C_{2})$ justifies $V(n)=(a,C)$ if
$\Omc\models C_{1}\sqcap C_{2}\sqsubseteq C$. The rules are given in 
the appendix of the full version, we only discuss the rule 
used to capture derivations using RIs: $V(n)=(a_{1},C)$ is
justified if there are role names $r_{2},\ldots,r_{2k-2},r$ such
that $(a_{2k},C')$ is a label of a successor of $n$, 
$\Omc\models \exists r.C'\sqsubseteq C$, $\Omc\models r_{2}\circ \cdots \circ r_{2k-2}\sqsubseteq r$,
and the situation depicted in Figure~\ref{figure:dtree} holds,
where the ``dotted lines'' stand for `either $a_{i}=a_{i+1}$ or some $(a_{i},\{c\}),(a_{i+1},\{c\})$ with $c\in \NI$ are 
labels of successors of $n$', and $\hat{r}_i$ 
stands for `either $r(a_{i},a_{i+1})\in \Amc$ or some $(a_{i},C_{i})$ is a label of a successor of $n$ 
and $\Omc \models C_{i}\sqsubseteq \exists r_{i}.\{a_{i+1}\}$ if 
$a_{i+1}\in \NI$ and $\Omc \models C_{i}\sqsubseteq \exists r_{i}.a_{i+1}$ if $a_{i+1}\in \NC$'. Moreover, for all
$a_{i}\not=a_{1}$, $1\leq i \leq 2k$, there exists a successor of $n$ with label $(a_{i},D)$ for some $D$.
The soundness of this rule should be clear, completeness can be shown similarly to the analysis of canonical models.

	\begin{figure}[t]
	\hspace*{-1ex}\begin{tikzpicture}[every label/.style={font=\small,inner sep=0pt},node distance = 0.5cm and 0.85cm]
		\node (a1) {$a_1$};
		\node[above=of a1,label=below:{}] (a2) {$a_2$};
		
		\node[right=of a1,label=below:{}] (a4) {$a_4$};
		\node[above=of a4,label=below:{}] (a3) {$a_3$};
		
		\node[right=of a4,label=below:{}] (a5) {$a_5$};
		\node[above=of a5,label=below:{}] (a6) {$a_6$};
		
		\node[right=of a6,label=below:{}] (a7) {$\cdots$};
		\node[right=of a5,label=below:{}] (a2k-4) {$\cdots$};
		
		\node[right=of a2k-4,label=below:{}] (a2k-3) {$a_{2k-3}$};
		\node[above=of a2k-3,label=below:{}] (a2k-2) {$a_{2k-2}$};
		
		\node[right=of a2k-3,label=below right:{$C'$}] (a2k) {$a_{2k}$};
		\node[above=of a2k,label=below:{}] (a2k-1) {$a_{2k-1}$};

		\foreach \i/\j in {a1/a2,a3/a4,a5/a6,a2k-2/a2k-3,a2k/a2k-1} {
			\draw[=,style=double,style=dashed] (\i) -- (\j) ;
		}
		
		\foreach \i/\j in {2/3,4/5,6/7,2k-4/2k-3,2k-2/2k-1} {
			\draw[->] (a\i) -- (a\j) node[midway, label=below:{$\hat{r}_{\i}$}]{} ;
		}
		
	\end{tikzpicture}
	\caption{Rule for Role Inclusions.
		\label{figure:dtree}}
\end{figure}

The length of the sequence $a_{1},\ldots,a_{2k}$ can be exponential (for instance, in Example~\ref{exm:succ} for the
fact $(\rho_{A},A)$ in $\Omc_{p},\Amc_{\Omc_{p},A}^{\Sigma_{1}}$). One can show, however, that its length 
can be bounded without affecting completeness by $2^{q(||\Omc||+||\Amc||)}$ with $q$ a polynomial. 
The following lemma summarizes the main properties of derivation trees.
\begin{lemma}\label{lem:deriv0}
Let $\Omc$ be an $\mathcal{ELRO}_{u}$-ontology in normal form and 
$\Amc$ a finite $\text{sig}(\Omc)$-ABox. Then 
\begin{enumerate}
	\item $\Omc,\Amc \models A(x)$ if and only if there is a derivation tree for $A(x)$ in $\Omc,\Amc$. Moreover, if a derivation tree exists, then there exists one of depth and outdegree bounded by $(||\Amc||+||\Omc||)\times ||\Omc||$ 
which can be constructed in exponential time in $||\Omc||+||\Amc||$.
	\item If $(T,V)$ is a derivation tree for $A(x)$ in $\Omc,\Amc$ of at most exponential size,
	then one can construct in exponential time (in $||\Amc||+||\Omc||$) a derivation tree $(T',V')$ 
	for $A(x)$ in $\Omc, \Amc^{u}$ with $\Amc^{u}$ the directed unfolding of $\Amc$ modulo 
        $\Sigma=\text{sig}(\Amc)\cap \NI$ and
        $T'$ of the same depth as $T$ and such that the outdegree of $T'$ does not exceed $\max{\{3,3n\}}$ with $n$ the
	length of the longest chain $a_{1}\cdots a_{n}$ used in the rule for RIs in the derivation tree $(T,V)$.
\end{enumerate}
\end{lemma}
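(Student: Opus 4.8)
The plan is to prove the two parts separately, with Part~1 giving a sound-and-complete calculus with a size bound, and Part~2 a transformation lifting a derivation tree over $\Amc$ to one over its unfolding $\Amc^{u}$. First, for Part~1, soundness is the easy direction: one checks that each derivation rule, in particular the rule for role inclusions depicted in Figure~\ref{figure:dtree}, preserves entailment, so the existence of a derivation tree for $A(x)$ implies $\Omc,\Amc\models A(x)$; the RI-rule case is the only one requiring care, and it follows because $\Omc\models r_2\circ\cdots\circ r_{2k-2}\sqsubseteq r$ together with the chain of $\hat r_i$-links forces an $r$-edge from $a_1$ to $a_{2k}$ in every model, and then $\Omc\models \exists r.C'\sqsubseteq C$ finishes it. For completeness, I would run the standard canonical-model construction of \cite{IJCAI05-short} for $\Omc,\Amc$ and argue by induction on the stage at which a pair $(a,C)$ enters the canonical model (equivalently, induction on a well-founded ordering on derivations of $x\in C^{\Imc_{\Omc,\Amc}}$) that a derivation tree for $C(a)$ exists; each construction rule of the canonical model is mirrored by one of the derivation rules, and the RI-rule is exactly what is needed to handle the case where membership in $r^{\Imc_{\Omc,\Amc}}$ arises from a composition of shorter role edges. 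Using \emph{($\dagger$)} this yields the ``only if'' direction of Part~1.

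The size bound in Part~1 is where the real work is. A naive derivation tree can be of unbounded (even infinite) size, since the canonical model may be infinite and the chains $a_1\cdots a_{2k}$ in the RI-rule can have exponential length. The plan is: (a) prune repetitions along root-to-leaf branches — if a pair $(a,C)$ labels two nodes on one branch, splice out the segment between them — which bounds the \emph{depth} by $|\Delta\times\Theta|\le (\|\Amc\|+\|\Omc\|)\times\|\Omc\|$; (b) bound the outdegree: all rules except the RI-rule have boundedly many premises, and for the RI-rule one shows that the witnessing chain $a_1,\ldots,a_{2k}$ may be taken of length at most $2^{q(\|\Omc\|+\|\Amc\|)}$ without loss of completeness, by a pumping/shortcut argument on the chain (if an element of $\Delta$, or more precisely a ``type'' built from which $(a_i,D)$-labels are available, repeats along the chain, cut out the loop — this is sound because $\Omc\models r_2\circ\cdots\circ r_{2k-2}\sqsubseteq r$ is witnessed by a derivation that itself has bounded length in the RI-closure, so only exponentially many distinct ``reachable via admissible role word'' configurations need to appear). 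This gives an exponential bound on the number of nodes, and a straightforward bottom-up saturation computes such a tree in exponential time. The main obstacle throughout is exactly this chain-length bound: one must argue carefully that shortening the $a_1\cdots a_{2k}$ sequence does not destroy the RI-derivability $\Omc\models r_2\circ\cdots\circ r_{2k-2}\sqsubseteq r$, and simultaneously that the ``side conditions'' (every $a_i\ne a_1$ has some successor labelled $(a_i,D)$, and each $\hat r_i$ is realized) remain satisfiable after the cut.

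For Part~2, I start from a derivation tree $(T,V)$ for $A(x)$ in $\Omc,\Amc$ of exponential size and build $(T',V')$ over $\Omc,\Amc^{u}$, where $\Amc^{u}$ is the directed unfolding modulo $\Sigma=\sig(\Amc)\cap\NI$. The idea is to process $(T,V)$ top-down, maintaining for each node of $T'$ a choice of a vertex of $\Amc^{u}$ lying above the corresponding $\Amc$-individual, i.e.\ a path in $G_{\Amc}$ witnessing how that individual is reached; since nominal-individuals are shared in the unfolding, the ``dotted line'' side conditions transfer verbatim, and each genuine role edge used in the tree can be realized in $\Amc^{u}$ by extending the current path. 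Depth is preserved because this is a relabelling, not a restructuring, of $T$. The outdegree bound $\max\{3,3n\}$ comes from inspecting each rule: the binary and existential-introduction rules need at most $3$ successors, while the RI-rule with a chain of length $n$ needs, in the unfolded tree, roughly one successor per chain element to re-witness the $a_i$ (its $(a_i,D)$ label) plus the $\hat r_i$-realizations, totalling at most $3n$. I would present this as an explicit recursive construction and check the rule invariants case by case; the only subtlety is ensuring that when the RI-rule reuses an individual that also appears as a nominal, the unfolding's identification of nominal-copies keeps the chain connected, which is immediate since unfolding is only ``modulo $\Sigma\cap\NI$''. Computing $(T',V')$ in exponential time is then clear, as $\Amc^{u}$ restricted to the depth actually used is of exponential size and the relabelling is polynomial in $|T|$.
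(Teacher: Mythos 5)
Your overall strategy matches the paper's: soundness by rule-wise induction, completeness by saturating $\Omc,\Amc$ and reading a derivation tree off the saturation (the paper does this with six ABox-saturation rules in one-to-one correspondence with the derivation-tree conditions, plus an explicit countermodel construction for the saturated ABox; your appeal to the canonical-model construction is the same idea, but you still owe the verification that the saturation yields a model of $\Omc$ --- this is precisely where the RI rule and the nominal-merging rule need an explicit argument), the depth bound by pumping repeated labels along a branch, and Part~2 by a top-down relabelling that introduces copies of successor nodes in the RI case.

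There is one concrete gap, in your outdegree bound for Part~1. You derive it from a bound of $2^{q(||\Omc||+||\Amc||)}$ on the length of the chain $a_1\cdots a_{2k}$ in the RI rule, but the lemma claims the \emph{polynomial} outdegree bound $(||\Amc||+||\Omc||)\times||\Omc||$, which an exponential chain-length bound cannot deliver. The point you are missing is that the RI rule constrains the successors of a node only through their labels in $\Delta\times\Theta$: even if the chain has exponential length, its entries range over the polynomial-size set $\Delta$, and every required witness (the $(a_i,D)$ labels, the $\hat r_i$- and dotted-line witnesses, and $(a_{2k},C')$) is a label in $\Delta\times\Theta$. Hence one may simply merge successors carrying identical labels, bounding the outdegree by $|\Delta\times\Theta|$ independently of $k$. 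The chain-length bound you work to establish is still needed --- but for a different purpose, namely to make the $\max\{3,3n\}$ outdegree of $T'$ in Part~2 (and hence the size of the resulting interpolant) exponential rather than unbounded; the paper states that bound separately for exactly this reason. With that repair your plan goes through.
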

\begin{proof}
We sketch the idea. For Point~1, the bound on the depth of derivation trees can be proved by 
observing that one can assume (using a standard pumping argument) 
that the labels of distinct nodes on a single path are distinct and the 
bound on the outdegree can be proved by observing that one can trivially 
assume that all successor nodes of a node have distinct labels.
For the construction of derivation trees,
let $F_{n}$ denote the set of facts in $\Delta\times\Theta$
for which there is a derivation tree of depth at most $n$. 
Then one can construct in exponential time derivation trees for all 
facts in any $F_{n}$, $n\leq (||\Amc||+||\Omc||)\times ||\Omc||$ by 
starting with derivation trees of depth $0$ for members of $F_{0}$, 
and then constructing derivation trees of depth~$i+1$ for members of $F_{i+1}$ using the trees for members of 
$F_{0},\ldots,F_{i}$. For Point~2, the transformation of $(T,V)$ into $(T',V')$ is by induction over rule application, 
the only interesting step being the rule for RIs. Using the ontology $\Omc_{p}$ of Example~\ref{exm:succ} one can see that the 
exponential blow-up of the outdegree is unavoidable.
\end{proof}
We are now in the position to complete the sketch of the proof of ``2. $\Rightarrow$ 3.'' Assume that
Point~2 holds. Then $\Omc_{1}\cup \Omc_{2},\Amc_{\Omc_{1}\cup \Omc_{2},A}^{\Sigma}\models B(\rho_{A})$.
By Point~1 of Lemma~\ref{lem:deriv0} we can construct a derivation tree $(T,V)$ for $(\rho_{A},B)$ 
in $\Omc_{1}\cup \Omc_{2},\Amc_{\Omc_{1}\cup \Omc_{2},A}^{\Sigma}$ of polynomial depth and outdegree in exponential time.
By Point~2 of Lemma~\ref{lem:deriv0} we can transform $(T,V)$ into a derivation tree $(T',V')$ for
$(\rho_{A},B)$ in $\Omc_{1}\cup \Omc_{2},\Amc_{\Omc_{1}\cup \Omc_{2},A}^{\Sigma,u}$ in exponential time. Now let
$\Amc$ be the restriction of $\Amc_{\Omc_{1}\cup \Omc_{2},A}^{\Sigma,u}$ to all $x\in \text{ind}(\Amc_{\Omc_{1}\cup \Omc_{2},A}^{\Sigma,u})$ which occur in a label of $V'$. Then $(T',V')$ is also a derivation tree for $(\rho_{A},B)$ 
in $\Omc_{1}\cup \Omc_{2},\Amc$ and so $\Omc_{1}\cup \Omc_{2},\Amc\models B(\rho_{A})$. It follows that the
$\mathcal{ELO}_{u}(\Sigma)$-concept corresponding to $\Amc$ is an interpolant for $A \sqsubseteq B$ under 
$\Omc_{1}\cup \Omc_{2}$. Its size is at most exponential in $||\Omc_{1}\cup \Omc_{2}||$ since $(T',V')$ is at most
exponential in $||\Omc_{1}\cup \Omc_{2}||+||\Amc_{\Omc_{1}\cup \Omc_{2},A}^{\Sigma}||$, and so also in
$||\Omc_{1}\cup \Omc_{2}||$.

\section{Interpolant and Explicit Definition Existence in 
$\ELI$ and Extensions}
\label{sec:eli}

We analyze interpolants and explicit definitions for 
$\mathcal{ELI}$ and its extensions with nominals and universal roles,
and show the following result from the introduction.

\medskip
\noindent
{\bf Theorem~\ref{thm:elli}.}
{\em
For $\mathcal{ELI}$ and any extension with any combination of nominals,
the universal role, or $\bot$, the existence of interpolants and 
explicit definitions is \ExpTime-complete. If an interpolant/explicit definition exists, then there exists one of at most 
double exponential size that can be computed in double exponential time. This bound is optimal.
}

\medskip

The double exponential lower bound on the size of explicit definitions and interpolants is shown in the appendix of the full version. The proof is inspired by similar lower bounds for the size of FO-rewritings and uniform interpolants~\cite{DBLP:journals/jsc/LutzW10,DBLP:journals/ai/NikitinaR14}. To prove the remaining claims of
Theorem~\ref{thm:elli}, we lift Theorem~\ref{thm:critele} to $\mathcal{ELI}$.
The main differences are that (1) we now associate undirected graphs with ABoxes and also unfold along inverse roles;
(2) that canonical models become potentially infinite but tree-shaped; (3) that therefore deciding the new variant of Point~2 of
Theorem~\ref{thm:critele} is not an instance of standard entailment checking in $\mathcal{ELI}$, instead we give a reduction to emptiness checking for tree automata; and (4) that to bound the size of $\Amc$ in Point~3, we employ
transfer sequences (and not derivation trees) to represent how facts are derived.
 
In more detail, associate with every ABox $\Amc$ the undirected graph 
$
G^{u}_{\Amc}=(\text{ind}(\Amc),\bigcup_{r\in \NR} \{\{x,y\} \mid r(x,y)\in \Amc\}).
$
We say that $\Amc$ is \emph{tree-shaped} if $G_{\Amc}^{u}$ is acyclic, 
$r(x,y)\in \Amc$ and $s(x,y)\in \Amc$ imply $r=s$,
and $r(x,y)\in \Amc$ implies $s(y,x)\not\in\Amc$ for any $s$. $\Amc$ is \emph{tree-shaped modulo a set $\Gamma$ of individual names} if after dropping some facts $r(x,y)$ with $\{a\}(x)$ or $\{a\}(y)\in \Amc$ for some 
$a\in \Gamma$ it is tree-shaped.
We observe that $\mathcal{ELIO}_{u}(\Sigma)$-concepts correspond to pointed $\Sigma$-ABoxes $\Amc,x$ such that $\Amc$ is tree-shaped modulo $\NI \cap \Sigma$. $\mathcal{ELIO}(\Sigma)$-concepts correspond to \emph{weakly rooted} pointed $\Sigma$-ABoxes $\Amc,x$ such that $\Amc$ is tree-shaped modulo $\NI \cap \Sigma$, where $\Amc,x$ is called weakly rooted if for every $y\in \text{ind}(\Amc)$ there is a path from $x$ to $y$ in $G^{u}_{\Amc}$.

For every $\mathcal{ELIO}_{u}$-ontology $\Omc$ and concept $A$ there exists 
a (potentially infinite) pointed canonical model $\Imc_{\Omc,A},\rho_{A}$  
such that the ABox $\Amc_{\Omc,A}$ corresponding to $\Imc_{\Omc,A}$ is tree-shaped modulo $\NI\cap\text{sig}(\Omc)$. 
The property ($\dagger$) used in the context of canonical models for tractable extensions of $\mathcal{EL}$ holds here as 
well. We also require the \emph{undirected unfolding} of a pointed $\Sigma$-ABox $\Amc,x$ into a pointed $\Sigma$-ABox 
$\Amc^{\ast},x$ which is tree-shaped modulo $\Sigma\cap\NI$.
In the \emph{rooted undirected unfolding}, nodes that cannot be reached from $x$ via roles are dropped.

Assume now that $\Omc$ is in normal form and $A$ a concept name. 
Let $\Amc_{\Omc,A}^{\Sigma}$ be the $\Sigma$-reduct of the canonical model $\Imc_{\Omc,A}$, regarded as an ABox. 
Denote by  $\Amc_{\Omc,A}^{\Sigma,\ast},\rho_{A}$ the undirected unfolding of $\Amc_{\Omc,A}^{\Sigma},\rho_{A}$, by $\Amc_{\Omc,A}^{\downarrow_{w} \Sigma},\rho_{A}$ the sub-ABox of $\Amc_{\Omc,A}^{\Sigma}$ weakly rooted in $\rho_{A}$, and by $\Amc_{\Omc,A}^{\downarrow_{w}\Sigma,\ast},\rho_{A}$ its rooted undirected unfolding. Then we lift Theorem~\ref{thm:critele} as follows.
\begin{theorem}\label{thm:critele2}
	 There exists a polynomial $p$ such that the following conditions are equivalent for all $\mathcal{ELIO}_{u}$-ontologies $\Omc_{1}, \Omc_{2}$ in normal form, concept names $A,B$, and $\Sigma=\text{sig}(\Omc_{1},A) \cap \text{sig}(\Omc_{2},B)$:
	\begin{enumerate}
		\item An $\mathcal{ELIO}_{u}$-interpolant for $A \sqsubseteq B$ under $\Omc_{1},\Omc_{2}$ exists;
		\item $\Omc_{1}\cup \Omc_{2},\Amc_{\Omc_{1}\cup \Omc_{2},A}^{\Sigma}\models B(\rho_{A})$;
		\item there exists a finite subset $\Amc$ of $\Amc_{\Omc_{1}\cup \Omc_{2},A}^{\Sigma,\ast}$ with $|\text{ind}(\Amc)| \leq 2^{2^{p(||\Omc_{1}\cup \Omc_{2}||)}}$ such that the $\mathcal{ELIO}_{u}$-concept corresponding to $\Amc,\rho_{A}$ 
		is an $\mathcal{ELIO}_{u}$-interpolant for $A \sqsubseteq B$ 
under $\Omc_{1},\Omc_{2}$.
	\end{enumerate}
The same equivalences hold if in Points~1 to 3, $\mathcal{ELIO}_{u}$ is 
replaced by $\mathcal{ELIO}$, $\Amc_{\Omc_{1}\cup \Omc_{2},A}^{\Sigma}$ by $\Amc_{\Omc_{1}\cup \Omc_{2},A}^{\downarrow_{w}\Sigma}$,
and $\Amc_{\Omc_{1}\cup \Omc_{2},A}^{\Sigma,\ast}$ by $\Amc_{\Omc_{1}\cup \Omc_{2},A}^{\downarrow_{w}\Sigma,\ast}$.

In Point~3, $\Amc$ can be computed in double exponential time, if it exists.
\end{theorem}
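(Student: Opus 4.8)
The plan is to mirror the structure of the proof of Theorem~\ref{thm:critele}, carrying out the three implications ``1.$\Rightarrow$2.'', ``2.$\Rightarrow$3.'' and ``3.$\Rightarrow$1.'', but now over the (possibly infinite) tree-shaped canonical model $\Imc_{\Omc_1\cup\Omc_2,A}$ for $\mathcal{ELI}$, and replacing the directed notions (ditree, directed unfolding, derivation trees) by their undirected counterparts. The implication ``3.$\Rightarrow$1.'' is again trivial: the $\mathcal{ELIO}_u$-concept corresponding to a finite tree-shaped-modulo-nominals pointed ABox is a genuine $\mathcal{ELIO}_u(\Sigma)$-concept, and the two entailments it must satisfy are exactly what membership in Point~3 asserts. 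For ``1.$\Rightarrow$2.'', I would argue exactly as in the $\mathcal{EL}$ case: given an $\mathcal{ELIO}_u(\Sigma)$-interpolant $C$ with $\Omc_1\cup\Omc_2\models A\sqsubseteq C$ and $\Omc_1\cup\Omc_2\models C\sqsubseteq B$, property ($\dagger$) (which holds for the $\mathcal{ELI}$ canonical model too) and the first entailment give $\Amc^{\Sigma}_{\Omc_1\cup\Omc_2,A}\models C(\rho_A)$, and then the second entailment transfers to $\Omc_1\cup\Omc_2,\Amc^{\Sigma}_{\Omc_1\cup\Omc_2,A}\models B(\rho_A)$; here one uses that the $\Sigma$-reduct of the canonical model, viewed as an ABox, simulates the canonical model itself so that a $\Sigma$-concept true in the latter is entailed over the former.

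The real work is ``2.$\Rightarrow$3.'', and I expect it to split into three sub-obstacles. First, one must make sense of Point~2 itself: since $\Imc_{\Omc_1\cup\Omc_2,A}$ may be infinite, $\Amc^{\Sigma}_{\Omc_1\cup\Omc_2,A}$ is an infinite ABox and $\Omc_1\cup\Omc_2,\Amc^{\Sigma}_{\Omc_1\cup\Omc_2,A}\models B(\rho_A)$ is not a plain $\mathcal{ELI}$ entailment. By compactness this holds iff it holds for some finite sub-ABox, and one would like the witnessing ABox to live inside the unfolding $\Amc^{\Sigma,\ast}_{\Omc_1\cup\Omc_2,A}$; the generalized unraveling-tolerance statement cited in the text (entailment of $C(\rho_A)$ over $\Amc^{\Sigma}$ and over $\Amc^{\Sigma,\ast}$ coincide) supplies this, but over $\mathcal{ELI}$ this needs the \emph{undirected} unfolding and a corresponding bisimulation-style argument rather than the simulation used for $\mathcal{EL}$. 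Second, to get the stated \emph{double}-exponential bound on $|\text{ind}(\Amc)|$, derivation trees are no longer adequate (as the authors note), so I would introduce \emph{transfer sequences} in the style of \cite{DBLP:conf/ijcai/BienvenuLW13}: for each edge $r(x,y)$ of the canonical (tree) model one records the set of ``types'' — pairs $(C,C')$ of relevant concept names, or more precisely the relation describing which concepts propagate across that edge in each direction — and uses the fact that there are only exponentially many such transfer relations to prune the tree-shaped ABox so that no transfer relation repeats along a branch, yielding a branch length single-exponential in $\|\Omc_1\cup\Omc_2\|$ and hence a total size double-exponential. One has to check that such pruning preserves $\Omc_1\cup\Omc_2,\Amc\models B(\rho_A)$, which is the soundness half of the transfer-sequence machinery. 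Third, for the decision procedure and the \ExpTime upper bound, Point~2 is reduced to an emptiness problem for (two-way, alternating) tree automata over the canonical tree model: one automaton runs over infinite $\Sigma$-labelled trees that are ``consistent'' with being a $\Sigma$-reduct of some $\Imc_{\Omc_1\cup\Omc_2,A}$-like model and checks, via derivation trees encoded as automaton runs, that $B$ is \emph{not} derivable at the root; interpolant existence fails iff this automaton is non-empty, and since such automata can be tested for emptiness in time exponential in their (polynomial) size, we get \ExpTime. The matching \ExpTime lower bound is inherited from subsumption in $\mathcal{ELI}$ (via the easy direction of the CIP-style reduction, or directly).

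The main obstacle, I expect, is the second point: designing transfer sequences for $\mathcal{ELI}$ with nominals and the universal role so that they simultaneously (a) faithfully represent how $B(\rho_A)$ is derived over the unfolded $\Sigma$-ABox, (b) interact correctly with nominal nodes — where the ABox is only tree-shaped \emph{modulo} $\NI\cap\Sigma$, so edges into nominals must be treated specially and ``back-edges'' through nominals can reintroduce cycles — and (c) admit a pruning lemma giving single-exponential branch length. The subtlety is that with inverse roles a single edge must carry transfer information in both directions and the relevant alphabet of transfer relations is over pairs of subsets of $\sig(\Omc_1\cup\Omc_2)$, so the naive bound is already double-exponential per edge; the key observation to recover a single-exponential \emph{branch length} (and thus double-exponential total, not triple) is that what matters along a branch is the sequence of transfer relations, of which there are only exponentially many, so a pumping argument on branches suffices. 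Finally, adapting everything to the $\mathcal{ELIO}$ case (no universal role in interpolants) requires replacing undirected unfolding by \emph{rooted} undirected unfolding and $\Amc^{\Sigma}$ by the weakly-rooted sub-ABox $\Amc^{\downarrow_w\Sigma}$ throughout, checking that ``1.$\Rightarrow$2.'' still goes through because an $\mathcal{ELIO}(\Sigma)$-interpolant is weakly rooted and hence only ``sees'' the weakly-rooted part of the canonical model; and the double-exponential-time computability of $\Amc$ in Point~3 follows by making the transfer-sequence pruning constructive, running it on a double-exponential-size initial fragment of the unfolding extracted from the automaton's witness.
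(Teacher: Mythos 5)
Your overall plan coincides with the paper's proof: the same three implications, with ``1.$\Rightarrow$2.'' via property $(\dagger)$, ``3.$\Rightarrow$1.'' trivial, compactness plus undirected unraveling tolerance to pass to a finite sub-ABox of $\Amc^{\Sigma,\ast}_{\Omc_1\cup\Omc_2,A}$, transfer sequences for the double-exponential size bound, and the two-automaton emptiness reduction (an NTA recognizing prefix-closed finite fragments of the canonical model and a 2ATA simulating derivation trees for $B$) for the \ExpTime upper bound. The treatment of the $\mathcal{ELIO}$ case via weakly rooted sub-ABoxes and rooted undirected unfoldings also matches.

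The one step where your sketch does not close is the counting argument behind the transfer sequences, which you yourself flag as the crux. As you set it up --- a ``transfer relation'' per edge, ``over pairs of subsets of $\sig(\Omc_1\cup\Omc_2)$'' --- there are double-exponentially many possible values, so forbidding repetitions along a branch would only give double-exponential depth and a triple-exponential ABox; your assertion that ``only exponentially many matter'' is precisely what needs proof. The paper resolves this by attaching to each node $w$ not a single relation but a \emph{sequence} $\Xmc_0\subseteq\Xmc_1\subseteq\Xmc_2\subseteq\cdots$ of sets of assertions over $\{w\}$ together with the polynomially many nominal individuals, obtained by alternately saturating the subtree below $w$ and the remainder of the ABox. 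Each $\Xmc_i$ is a subset of a polynomial-size set of assertions and the sequence is monotone, so it stabilizes after polynomially many steps; hence there are only single-exponentially many distinct transfer sequences, and pumping on the pair (tail of $w$, transfer sequence of $w$) gives single-exponential branch length and the stated double-exponential bound. The inclusion of assertions at the nominal individuals in every stage $\Xmc_i$ is also what addresses your worry about back-edges through nominals: the iterated down/up alternation, rather than a one-shot per-edge relation, is needed exactly because a derivation can bounce between a subtree and the rest of the ABox through nominals arbitrarily often.
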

We first sketch how tree automata are used to show that Point~2 entails an exponential time upper 
bound for deciding the existence of an interpolant. To this end we represent finite prefix-closed subsets $\Amc$ of $\Amc_{\Omc_{1}\cup \Omc_{2},A}^{\Sigma}$ as trees and design
\begin{itemize}
\item a non-determistic tree automaton over finite trees (NTA), $\Amf_{1}$, that accepts exactly those trees that 
represent prefix-closed finite subsets of $\Amc_{\Omc_{1}\cup \Omc_{2},A}^{\Sigma}$;
\item a two-way alternating tree automaton over finite trees (2ATA), $\Amf_{2}$, that accepts exactly those trees that 
represent a pointed ABox $\Amc,\rho$ with $\Omc_{1}\cup \Omc_{2},\Amc\models B(\rho)$.
\end{itemize}
Similar tree automata techniques have been used e.g.\ in \cite{JungLMS20}.
$\Amf_{1}$ is constructed using the definition of canonical models; its states
are essentially types occuring in the canonical model and it can be constructed
in exponential time.
The 2ATA $\Amf_{2}$ tries to construct a derivation tree for $B(\rho)$ in
$\Omc_1 \cup \Omc_2,\Amc$, given as input a tree representing $\Amc,\rho$.
It has polynomially many states, and can thus be turned into an equivalent
NTA with exponentially many states \cite{DBLP:conf/icalp/Vardi98}.
By taking the intersection with $\Amf_1$, one can then check in
exponential time whether $L(\Amf_1) \cap L(\Amf_2) \neq \emptyset$,
that is, whether
$\Omc_{1}\cup \Omc_{2},\Amc_{\Omc_{1}\cup \Omc_{2},A}^{\Sigma}\models B(\rho_{A})$.

\medskip

We return to the proof of Theorem~\ref{thm:critele2}. 
The interesting implication is ``2. $\Rightarrow$ 3.''
and the double exponential computation of interpolants.
In this case we use transfer sequences to obtain a bound on the size of the subset $\Amc$ of $\Amc_{\Omc_{1}\cup \Omc_{2},A}^{\Sigma,\ast}$ needed to derive $B(\rho_{A})$ (we note that for $\mathcal{ELI}$ without nominals one can also use the automata encoding above). Transfer sequences describe how facts are derived in a tree-shaped ABox and allow to determine when individuals $a$ and $b$ behave sufficiently similar so that the subtree rooted at $a$ can be replaced by the subtree rooted at $b$~\cite{DBLP:conf/ijcai/BienvenuLW13} without affecting a derivation. This technique can be used to show that one can always choose a prefix closed subset $\Amc$ of $\Amc_{\Omc_{1}\cup \Omc_{2},A}^{\Sigma,\ast}$ of at most exponential depth. This also implies that $\Amc$ can be obtained in double exponential time by constructing the canonical model up to depth $2^{q(||\Omc_{1}\cup \Omc_{2}||)}$ with $q$ a polynomial.
\section{Expressive Horn Description Logics}
\label{sec:final}
We address two questions regarding expressive Horn-DLs. (1) Can our results for $\mathcal{ELI}$ and extensions be lifted to more expressive Horn-DLs? 
(2) In the examples provided in the proof of Theorem~\ref{posneg} we sometimes (for example, for $\mathcal{EL}_{u}$ and $\mathcal{ELI}$) construct explicit Horn-DL definitions to show implicit definability of concept names. Are Horn-DL concepts always sufficient to obtain an explicit definition if an implicit definition exists? We provide a positive answer to (1) if one only admits $\mathcal{ELIO}_{u}$-concepts (or fragments) as interpolants/explicit definitions and a negative answer to (2) in the sense that $\mathcal{ELI}$ and various other Horn-DLs do not enjoy the CIP/PBDP even if one admits Horn-DL concepts as interpolants/explicit definitions.

We introduce expressive Horn DLs~\cite{DBLP:conf/ijcai/HustadtMS05}, 
presented here in the form proposed in~\cite{DBLP:conf/kr/LutzW12}. \emph{Horn-$\mathcal{ALCIO}_{u}$-concepts} $R$ and \emph{Horn-$\mathcal{ALCIO}_{u}$-CIs} $L \sqsubseteq R$ are defined by the syntax rules
\begin{align*} 
	R,R' &::= \top \mid \bot \mid A \mid \neg A \mid \{a\} \mid \neg \{a\} \mid R \sqcap R' \mid 
	L \rightarrow R \mid\\
	  & \hspace*{1cm} \exists r . R \mid
	\forall r . R \\
	L,L' &::= \top \mid \bot \mid A \mid L \sqcap L' \mid L \sqcup L' \mid 
	\exists r . L 
\end{align*}
with $A$ ranging over concept names, $a$ over individual names, and $r$ over roles (including the universal role). 
As usual, the fragment of Horn-$\mathcal{ALCIO}_{u}$ without nominals and the universal role is denoted by Horn-$\mathcal{ALCI}$ and Horn-$\mathcal{ALC}$ denotes the fragment of Horn-$\mathcal{ALCI}$ without inverse roles.
\begin{theorem}\label{thm:7}
	Let $(\Lmc,\Lmc')$ be the pair $($Horn-$\mathcal{ALCI},\mathcal{ELI})$
        or the pair $($Horn-$\mathcal{ALCIO}_{u}$, $\mathcal{ELIO}_{u})$.
	Then 
	\begin{itemize}
		\item deciding the existence of an $\Lmc'$-interpolant for an $\Lmc'$-CI $C \sqsubseteq D$ under 
                      $\Lmc$-ontologies $\Omc_{1},\Omc_{2}$ is \ExpTime-complete; 
		\item deciding the existence of an explicit $\Lmc'(\Sigma)$-definition of a concept name $A$ under an $\Lmc$-ontology $\Omc$ is \ExpTime-complete.
	\end{itemize}
Moreover, if an $\Lmc$-interpolant/explicit definition exists, then there exists one of at most double exponential size that
can be computed in double exponential time.
\end{theorem}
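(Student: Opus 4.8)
The plan is to re-run the proof of Theorem~\ref{thm:elli} --- namely the characterisation Theorem~\ref{thm:critele2} together with its tree-automata and transfer-sequence ingredients --- replacing the $\mathcal{ELI}$- (resp.\ $\mathcal{ELIO}_{u}$-) ontologies $\Omc_{1},\Omc_{2}$ by Horn-$\mathcal{ALCI}$- (resp.\ Horn-$\mathcal{ALCIO}_{u}$-) ontologies, while still asking for interpolants and explicit definitions in $\mathcal{ELI}$ (resp.\ $\mathcal{ELIO}_{u}$). The only genuinely new ingredient is canonical-model theory for Horn DLs. First I would show that, whenever $\Omc$ is a Horn-$\mathcal{ALCIO}_{u}$-ontology consistent with a (possibly infinite) ABox $\Amc$ that is tree-shaped modulo $\NI\cap\sig(\Omc)$, there is a canonical model $\Imc_{\Omc,\Amc}$ which is again tree-shaped modulo $\NI\cap\sig(\Omc)$ and satisfies the analogue of property $(\dagger)$: $\Omc,\Amc\models C(x)$ iff $x\in C^{\Imc_{\Omc,\Amc}}$ for every $\mathcal{ELIO}_{u}$-concept $C$ with $\sig(C)\subseteq\sig(\Omc)$. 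This is the standard least-model/chase construction for Horn DLs (treating $\forall r.C$, $\neg A$, $\bot$, $L\rightarrow R$, nominals and $u$ as usual); $(\dagger)$ holds because $\Imc_{\Omc,\Amc}$ admits an $\mathcal{ELIO}_{u}(\sig(\Omc))$-simulation into every model of $\Omc$ and $\Amc$, and $\mathcal{ELIO}_{u}(\Sigma)$-concepts are preserved under such simulations (Lemma~\ref{lem:simulationel}). The case where $\Omc$ is inconsistent with $\Amc$ is trivial and handled separately (consistency of Horn-$\mathcal{ALCI}$-ontologies being decidable in \ExpTime).

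Given this, and using the normal-form reduction of Lemma~\ref{lem:normal} --- which applies here because $\mathcal{ELI}$-concepts are Horn-$\mathcal{ALCI}$-concepts and Horn-$\mathcal{ALCI}$ admits conservative-extension normal forms, so that we may assume $C\sqsubseteq D$ has the form $A\sqsubseteq B$ with $A,B$ concept names --- the characterisation in Theorem~\ref{thm:critele2} carries over with essentially the same proof. The implication $1\Rightarrow 2$ uses only $(\dagger)$ and the correspondence between $\mathcal{ELIO}_{u}(\Sigma)$-concepts and pointed $\Sigma$-ABoxes that are tree-shaped modulo $\NI\cap\Sigma$; $3\Rightarrow 1$ is trivial; and $2\Rightarrow 3$ goes through with transfer sequences defined over the Horn-$\mathcal{ALCI}$ canonical model, where the messages exchanged between adjacent individuals are \emph{types}, i.e.\ subsets of the concept closure of $\Omc_{1}\cup\Omc_{2}$. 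As there are only exponentially many types, the same pumping argument as for $\mathcal{ELI}$ bounds the depth of the required prefix-closed sub-ABox $\Amc$ of $\Amc^{\Sigma,\ast}_{\Omc_{1}\cup\Omc_{2},A}$ by $2^{q(\|\Omc_{1}\cup\Omc_{2}\|)}$ for a polynomial $q$; this yields the double exponential bound on $|\text{ind}(\Amc)|$, hence on the size of the corresponding $\mathcal{ELIO}_{u}(\Sigma)$-interpolant, and shows the interpolant can be computed in double exponential time by generating the canonical model down to that depth. The polynomial-time reduction of explicit-definition existence to interpolant existence (Remark~\ref{rem:red2}), and the size/computation transfer it provides, do not depend on the ontology language, so the statements about explicit definitions and the ``moreover'' clause follow from the interpolant case.

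For the \ExpTime upper bound I would use the automata argument sketched after Theorem~\ref{thm:critele2}, with two adjustments. The NTA $\Amf_{1}$ recognising encodings of prefix-closed finite subsets of $\Amc^{\Sigma}_{\Omc_{1}\cup\Omc_{2},A}$ now has as states the types realised in the Horn-$\mathcal{ALCI}$ canonical model; these states and the transitions are computable in exponential time from the canonical-model construction, which itself invokes \ExpTime Horn reasoning. The 2ATA $\Amf_{2}$ checking ``$\Omc_{1}\cup\Omc_{2},\Amc\models B(\rho)$'' on the encoded input ABox is replaced by a two-way alternating automaton implementing the standard \ExpTime instance-checking procedure for Horn-$\mathcal{ALCI}$ over tree-shaped ABoxes (guess a type at each individual; verify it is closed under the CIs and under propagation along roles; verify minimality by alternation; accept iff $B$ belongs to the type at the root), which has polynomially many states. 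Converting $\Amf_{2}$ into an equivalent NTA with exponentially many states and testing $L(\Amf_{1})\cap L(\Amf_{2})\neq\emptyset$ then decides Point~2, hence interpolant existence, in exponential time. The matching \ExpTime lower bound is immediate: $\mathcal{ELI}$ is a fragment of Horn-$\mathcal{ALCI}$ and $\mathcal{ELIO}_{u}$ of Horn-$\mathcal{ALCIO}_{u}$, and $\mathcal{ELI}$-CIs are $\mathcal{ELI}$-CIs, so every (already \ExpTime-hard, by Theorem~\ref{thm:elli}) instance of interpolant or explicit-definition existence for $\mathcal{ELI}$ (resp.\ $\mathcal{ELIO}_{u}$) is an instance of the problem at hand.

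The main obstacle is the canonical-model development of the first paragraph: setting up canonical models for Horn-$\mathcal{ALCIO}_{u}$ over infinite tree-shaped ABoxes, correctly treating $\forall r.C$, $\neg A$, $\bot$, $L\rightarrow R$, nominals and the universal role --- in particular the interaction of nominals with the tree-shape-modulo-$\NI$ requirement and with inconsistency --- and then checking that transfer sequences and the Horn instance-checking automaton behave exactly as in the $\mathcal{ELI}$ case. Once this machinery is in place, Theorem~\ref{thm:7} is essentially a corollary of Theorem~\ref{thm:critele2} and its proof.
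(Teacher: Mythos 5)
Your proposal is correct in outline, but it takes a far more laborious route than the paper, and you come within a hair of the short argument without noticing it. The paper's entire proof is a reduction: for any Horn-$\mathcal{ALCI}$ (resp.\ Horn-$\mathcal{ALCIO}_{u}$) ontology one can compute in polynomial time a conservative extension in \emph{normal form}, and that normal form already lies inside $\mathcal{ELI}_{\bot}$ (resp.\ $\mathcal{ELIO}_{u,\bot}$) --- e.g.\ $A\sqsubseteq \forall r.B$ becomes $\exists r^{-}.A\sqsubseteq B$, implications $L\rightarrow R$ and negated atoms are absorbed using fresh names and $\bot$. Since interpolants and explicit definitions are invariant under passing to conservative extensions (Lemma~\ref{lem:normal}) and Theorem~\ref{thm:elli} already covers the extensions with $\bot$ (via Remark~\ref{rem:bot2}), the theorem is an immediate corollary; no new canonical-model theory, transfer sequences, or automata are needed. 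You explicitly invoke the fact that ``Horn-$\mathcal{ALCI}$ admits conservative-extension normal forms'' but only to reduce $C\sqsubseteq D$ to $A\sqsubseteq B$; had you observed that the normalised ontology is itself an $\Lmc'_{\bot}$-ontology, the remaining two and a half paragraphs --- redeveloping the chase for Horn-$\mathcal{ALCIO}_{u}$ over infinite tree-shaped ABoxes, re-proving $(\dagger)$, and adapting $\Amf_{1}$, $\Amf_{2}$ and the transfer-sequence pumping argument --- would be unnecessary. Your direct approach does work in principle (the least-model construction and the simulation argument for $(\dagger)$ go through, and the lower bound via the $\mathcal{ELI}$ fragment is fine), and it buys a self-contained treatment that does not route through the normal-form translation; but it duplicates essentially all of Section~\ref{sec:proofs_eli}, and the inconsistency case you defer as ``trivial'' in fact needs the same care as Remark~\ref{rem:bot2}, which the reduction handles for free.
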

Theorem~\ref{thm:7} follows from Theorem~\ref{thm:elli} and the fact that for any $\Lmc$-ontology one can construct in polynomial
time an $\Lmc'$-ontology in normal form that is a conservative extension of $\Lmc$ (see~\cite{DBLP:conf/ijcai/Bienvenu0LW16} for a similar result).
We next show that despite the fact that Horn-$\mathcal{ALCI}$-concepts sometimes provide explicit definitions if none exist
in $\mathcal{ELI}$ (proof of Theorem~\ref{posneg}), they are not sufficient to prove the CIP/PBDP.
\begin{theorem}\label{thm:horn}
There exists an ontology $\Omc$ in Horn-$\mathcal{ALC}$ (and in $\mathcal{ELI}$),
a signature $\Sigma$, and a concept name $A$ such that $A$ is implicitly definable using $\Sigma$
under $\Omc$ but does not have an explicit Horn-$\mathcal{ALCI}_{u}(\Sigma)$-definition.
\end{theorem}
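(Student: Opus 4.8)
The plan is to exhibit a concrete witness and argue by the standard product/bisimulation method for Horn logics. I will rely on two model-theoretic facts about Horn-$\mathcal{ALCI}_{u}$-concepts: (a) every Horn-$\mathcal{ALCI}_{u}$-concept is invariant under $\mathcal{ALCI}_{u}$-bisimulations (inherited from $\mathcal{ALCI}_{u}$, no Horn-ness needed); and (b) every Horn-$\mathcal{ALCI}_{u}$-concept is preserved under direct products of interpretations, i.e.\ if $d_{i}\in C^{\Imc_{i}}$ for $i=1,2$ then $(d_{1},d_{2})\in C^{\Imc_{1}\times\Imc_{2}}$ (the DL form of the Horn preservation theorem; one checks it by induction on $C$, the relevant point being that a disjunction $L\sqcup L'$ occurs only on the left of an implication). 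I also use that the model class of an $\mathcal{ELI}$-ontology or a Horn-$\mathcal{ALC}$-ontology is closed under direct products, so a product of models of $\Omc$ is again a model of $\Omc$.

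First I would construct the ontology. One version is an $\mathcal{ELI}$-ontology and a second, incomparable, version a Horn-$\mathcal{ALC}$-ontology; both realise the same idea: to force, over models of $\Omc$, an equivalence $\Omc\models A\equiv\varphi$ for a $\Sigma$-concept $\varphi$ that is genuinely \emph{not} preserved under products, concretely $\varphi$ behaving like $\exists r.(P\sqcup Q)$ (resp.\ $\forall r.(P\sqcup Q)$) on the $\Sigma$-reducts of models of $\Omc$, with $P,Q,r\in\Sigma$. Such a $\varphi$ holds at a point $d_{1}$ of a $\Sigma$-structure in which $d_{1}$ has an $r$-successor in $P$, holds at $d_{2}$ when $d_{2}$ has an $r$-successor in $Q$, but fails at $(d_{1},d_{2})$ in the product, where every $r$-successor is a pair one component of which lies in $P$ and the other in $Q$. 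Auxiliary (non-$\Sigma$) names are then used to make $A$ implicitly definable: in the spirit of $\Omc_{u}$ in the proof of Theorem~\ref{posneg}, an implication of the form ``$A$ holds somewhere $\Rightarrow\dots$'' (realised via inverse roles in the $\mathcal{ELI}$-version and via $\forall$ and $\rightarrow$ in the Horn-$\mathcal{ALC}$-version) pins down the extension of $A$ from the $\Sigma$-reduct. Implicit definability is verified by exhibiting the first-order definition of $A$ explicitly, or equivalently by checking $\Omc\cup\Omc_{\Sigma}\models A\equiv A'$ on minimal models, which exist since $\Omc$ is Horn.

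For non-definability I would argue by contradiction: suppose $C$ is a Horn-$\mathcal{ALCI}_{u}(\Sigma)$-concept with $\Omc\models A\equiv C$. Choose models $\Imc_{1},\Imc_{2}$ of $\Omc$ with $d_{i}\in A^{\Imc_{i}}$ realising the two ``disjuncts'', and choose $\Sigma$-structures $\Jmc_{1},\Jmc_{2}$ that are $\mathcal{ALCI}_{u}(\Sigma)$-bisimilar (at $d_{i}$) to the $\Sigma$-reducts of $\Imc_{1},\Imc_{2}$, yet whose product $\Jmc_{1}\times\Jmc_{2}$ is the $\Sigma$-reduct of a model $\Kmc$ of $\Omc$ in which $(d_{1},d_{2})\notin A^{\Kmc}$ — this is exactly where the construction of $\Omc$ is exploited: one may deform the $\Jmc_{i}$ without leaving the $\mathcal{ALCI}_{u}(\Sigma)$-bisimulation class, but the product falls outside the ``disjunctive'' region and so completes to a model of $\Omc$ falsifying $A$ there. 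Now $d_{i}\in A^{\Imc_{i}}=C^{\Imc_{i}}$ and $C$ uses only $\Sigma$-symbols, so $d_{i}\in C^{\Jmc_{i}}$ by (a); hence $(d_{1},d_{2})\in C^{\Jmc_{1}\times\Jmc_{2}}=C^{\Kmc}$ by (b); but $\Kmc\models\Omc$ and $\Omc\models A\equiv C$ give $(d_{1},d_{2})\in A^{\Kmc}$, a contradiction. A genuinely non-Horn $\mathcal{ALCI}_{u}(\Sigma)$-definition of $A$ (which does exist, by the CIP of $\mathcal{ALCI}_{u}$) escapes this argument precisely because it is not product-preserved.

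The hard part will be the construction sketched in the second paragraph: a single Horn ($\mathcal{ELI}$, resp.\ Horn-$\mathcal{ALC}$) ontology $\Omc$ that simultaneously (i) makes $A$ implicitly $\Sigma$-definable, (ii) forces $A$ to coincide, on $\Sigma$-reducts of its models, with a non-product-preserved $\Sigma$-concept, and (iii) admits the bisimilar deformations $\Jmc_{1},\Jmc_{2}$ and the completing model $\Kmc$. Requirements (i) and (ii) pull in opposite directions — implicit definability makes the extension of $A$ rigidly determined by the $\Sigma$-reduct, whereas non-Horn behaviour needs a ``disjunctive'' dependence on it — and reconciling them is the crux; it is achieved by exploiting the gap between a $\Sigma$-structure being $\mathcal{ALCI}_{u}(\Sigma)$-bisimilar to the $\Sigma$-reduct of a model of $\Omc$ and its actually being such a reduct. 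The remaining bookkeeping (inverse roles for $\mathcal{ELI}$, and admitting the universal role inside candidate interpolants) is routine once the gadget is in place.
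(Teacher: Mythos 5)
There is a genuine gap: Theorem~\ref{thm:horn} is an existence statement, and your proposal never actually produces the witness. You give a correct and reusable \emph{template} for the non-definability half --- fact (a) (bisimulation invariance) and fact (b) (preservation under direct products, which indeed holds for Horn-$\mathcal{ALCI}_{u}$-concepts and for model classes of Horn ontologies) combine soundly: if $d_i\in A^{\Imc_i}=C^{\Imc_i}$, $C$ is a $\Sigma$-concept, the $\Jmc_i$ are $\Sigma$-bisimilar deformations, and $\Jmc_1\times\Jmc_2$ extends to a model of $\Omc$ falsifying $A$ at $(d_1,d_2)$, then no Horn definition $C$ can exist. But every load-bearing object in that sentence --- the ontology $\Omc$, the models $\Imc_i$, the deformations $\Jmc_i$, and the completing model --- is left unconstructed, and you yourself flag the reconciliation of (i) implicit definability with (ii) a non-product-preserved dependence on the $\Sigma$-reduct as ``the crux.'' That reconciliation is precisely where the work lies, and it is not routine: the natural implicit definition has the shape $B\sqcap\forall r.(\forall r_1.D_1\rightarrow E)$, whose ``disjunctive'' part cannot be written directly as a Horn CI, so the ontology must encode it projectively (the paper does this with an auxiliary marker concept $M$ and the CI $A\sqsubseteq\forall r.((F\sqcap\exists r_1.(D_1\sqcap M))\rightarrow E)$ together with $F\sqsubseteq\exists r_1.D_1\sqcap\exists r_1.M$). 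Without exhibiting such CIs and verifying on them that the two ``disjunct'' models exist and that the deformed product really completes to a model of $\Omc$ falsifying $A$, the proof does not go through.

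For comparison, the paper's route avoids global products altogether: it writes down the ontology explicitly and then shows non-definability by exhibiting a Horn-$\mathcal{ALCI}(\Sigma)$-simulation $Z\subseteq\mathcal{P}(\Delta^{\Imc})\times\Delta^{\Imc'}$ (in the sense of Jung et al.) between two concrete finite models $\Imc,\Imc'$ with $(\{a\},a')\in Z$, $a\in A^{\Imc}$, $a'\notin A^{\Imc'}$. These Horn-simulations act as a local, asymmetric analogue of your bisimulation-plus-product argument (the pairs $(\{b,c\},b'')$ in $Z$ play the role of product points), and they let one certify non-definability by inspecting two small interpretations rather than by constructing deformations and completing a product to a model. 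Your approach could in principle be made to work, but only after the gadget is supplied; as written, the theorem's witness is missing.
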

\begin{proof}
	We modify the ontology used in the proof of Point~1 of Theorem~\ref{posneg}. 
	Let $\Sigma = \{B,D_{1},E,r,r_{1}\}$ and let $\Omc$ contain $B \sqcap \exists r. (C \sqcap E) \sqsubseteq A$ and the following CIs:
$$	
A\sqsubseteq B, \quad B \sqsubseteq \forall r.F,	\quad B \sqsubseteq \exists r. C,
\quad C  \sqsubseteq F \sqcap \forall r_{1}.D_{1}, 
$$
\vspace*{-0.7cm}%
	\begin{align*}
		F & \sqsubseteq \exists r_{1}.D_{1} \sqcap \exists r_{1}.M, \\
		A & \sqsubseteq \forall r.((F \sqcap \exists r_{1}.(D_{1}\sqcap M)) \rightarrow E) \,. 	
	\end{align*}
Intuitively, the final two CIs should be read as
	\begin{align*}
		F & \sqsubseteq \exists r_{1}.D_{1}  \\
		A & \sqsubseteq \forall r.((F \sqcap \forall r_{1}.D_{1}) \rightarrow E) 
	\end{align*}	
	and the concept name $M$ is introduced to achieve this in a projective way as the latter CI is not in Horn-$\mathcal{ALCI}$.
	
	$A$ is implicitly definable using $\Sigma$ under $\Omc$ since 
	$$
	\Omc \models A \equiv B \sqcap \forall r. (\forall r_{1}.D_{1} \rightarrow E).
	$$
	To show that $A$ is not explicitly Horn-$\mathcal{ALCI}_{u}(\Sigma)$-definable under $\Omc$ consider 
        the interpretations $\Imc$ and $\Imc'$ in Figure~\ref{fig:hornsim}. The claim follows from the facts that 
        $\Imc$ and $\Imc'$ are models of 
        $\Omc$, $a\in A^{\Imc}$, $a'\not\in A^{\Imc'}$, but $a\in F^{\Imc}$ implies $a'\in F^{\Imc'}$ holds for every Horn-$\mathcal{ALCI}_{u}(\Sigma)$-concept $F$. The latter can be proved by observing that there exists a Horn-$\mathcal{ALCI}_{u}(\Sigma)$-simulation between $\Imc$ and $\Imc'$ \cite{DBLP:conf/lics/JungPWZ19} containing $(\{a\},a)$, we refer the reader to the appendix of the full version. To obtain an example in $\mathcal{ELI}$, it suffices to take a conservative extension of $\mathcal{O}$ in $\mathcal{ELI}$.
\end{proof}        
	\begin{figure}
		\begin{tikzpicture}[auto,->,
                  every label/.style={font=\small,inner sep=0pt},
                  el/.style={},
			pl/.style={font=\small,inner sep=2pt},
			r1/.style={},r2/.style={},yscale=0.85]

			\node[el,label=above:{$A,B$}] (1) {$a$} ;
			\node[el,label={[yshift=0.1cm]below right:{$C,E,F$}}] at ($(1)-(1,1.5)$) (2) {$b$} ;
			\path (1) edge[above left] node[pl] {$r$} (2) ;
			\node[el,label={[yshift=0.1cm]below right:{$F$}}] at ($(1)+(1,-1.5)$) (3) {$c$} ;
			\path (1) edge node[pos=0.25,pl] {$r$} (3) ;
			
			\node[el,label=below:{$D_1,M$}] at ($(2)-(0,1.5)$) (4) {$d$} ;
			\path[r1] (2) edge[left] node[pl] {$r_1$} (4) ;
			
			\node[el,label=below:{$D_1$}] at ($(3)+(-0.5,-1.5)$) (5) {$e$} ;
			\path[r1] (3) edge[above left] node[pl,pos=0.7] {$r_1$} (5) ;
			\node[el,label=below:{$M$}] at ($(3)+(0.5,-1.5)$) (6) {$f$} ;
			\path[r1] (3) edge node[pl,pos=0.7] {$r_1$} (6) ;

			\node[el,label=above:{$B$}] at ($(1)+(4.5,0)$) (1') {$a'$} ;
			\node[el,label={[yshift=0.1cm]below right:{$E,F$}}] at ($(1')-(1.5,1.5)$) (2') {$b'$} ;
			\path (1') edge[above left] node[pl] {$r$} (2') ;
			\node[el,label={[yshift=0.1cm]below right:{$C,F$}}] at ($(1')-(0,1.5)$) (25') {$b''$} ;
			\path (1') edge[above left] node[pl] {$r$} (25') ;
			\node[el,label={[yshift=0.1cm]below right:{$F$}}] at ($(1')+(1.5,-1.5)$) (3') {$c'$} ;
			\path (1') edge node[pos=0.25,pl] {$r$} (3') ;
			
			\node[el,label=below:{$D_1,M$}] at ($(2')-(0,1.5)$) (4') {$d'$} ;
			\path[r1] (2') edge[left] node[pl] {$r_1$} (4') ;
			
			\node[el,label=below:{$D_1,M$}] at ($(25')-(0,1.5)$) (45') {$d''$} ;
			\path[r1] (25') edge[left] node[pl] {$r_1$} (45') ;
			
			\node[el,label=below:{$D_1$}] at ($(3')+(-0.5,-1.5)$) (5') {$e'$} ;
			\path[r1] (3') edge[above left] node[pl,pos=0.7] {$r_1$} (5') ;
			\node[el,label=below:{$M$}] at ($(3')+(0.5,-1.5)$) (6') {$f'$} ;
			\path[r1] (3') edge node[pl,pos=0.75] {$r_1$} (6') ;
			
                      \end{tikzpicture}
\caption{Interpretations $\Imc$ (left) and $\Imc'$ (right).}
\label{fig:hornsim}		
\end{figure}       

\section{Discussion}
\label{sec:discussion}
For a few important extensions of $\mathcal{EL}/\mathcal{ELI}$ the complexity of interpolant and explicit definition existence remains to be investigated. Examples include extensions of $\mathcal{ELI}$ with role inclusions, and extensions of $\mathcal{EL}$ or $\mathcal{ELI}$ with functional roles or more general number restrictions.
It would also be of interest to investigate interpolant existence if Horn-concepts are admitted as interpolants (using, for example, the games introduced in  \cite{DBLP:conf/lics/JungPWZ19}). Finally, the question arises whether there exists at all a decidable Horn language extending, say, Horn-$\mathcal{ALCI}$, with the CIP/PBDP.  
We note that Horn-FO enjoys the CIP (Exercise 6.2.6 in~\cite{modeltheory}) but is undecidable and that we show in the appendix of the full version that the Horn fragment
of the guarded fragment does not enjoy the CIP/PBDP. 
\section*{Acknowledgments}
This research was supported by the EPSRC UK grant EP/S032207/1.

\bibliographystyle{kr}
\bibliography{lics21}

\newpage

\newpage

\appendix

\section{Further Prelimaries}
We call an ontology $\Omc'$ a \emph{conservative extension} of an ontology $\Omc$ if $\Omc'\models \alpha$ for all $\alpha\in \Omc$ and every model $\Imc$ of $\Omc$ can be expanded to a model $\Jmc$ of $\Omc'$ by modifying the interpretation of symbols in $\text{sig}(\Omc')\setminus\text{sig}(\Omc)$. In other words, the $\text{sig}(\Omc)$-reducts of $\Imc$ and $\Jmc$ coincide.
The following result is folklore~\cite{IJCAI05-short}.
\begin{lemma}\label{lem:normalform}
	Let $\Lmc$ be any DL from $\EL,\ELI,\ELO,\mathcal{ELRO},\mathcal{ELIO}$ or an extension with the universal role,
	and let $\Omc$ be an $\Lmc$-ontology. Then one can construct in polynomial time an $\Lmc$-ontology $\Omc'$ in normal form such that $\Omc'$ is a  conservative extension of $\Omc$.
\end{lemma}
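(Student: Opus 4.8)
The plan is a routine renaming-and-flattening argument, organised as a terminating rewriting procedure on the CIs of $\Omc$; the RIs of $\Omc$, being already of the admissible shape $r_1\circ\cdots\circ r_n\sqsubseteq r$ with $r_1,\dots,r_n,r$ role names, are left untouched, since the normal form of the Preliminaries constrains only CIs. First I would eliminate complex subconcepts by \emph{naming}: pick a CI containing a subconcept $D$ that occurs as an argument of $\sqcap$ or of some $\exists r.(\cdot)$ and is not itself a concept name, $\top$, or (in the DLs with nominals) a nominal; replace that occurrence of $D$ by a fresh concept name $X_D$ and add the CIs $X_D\sqsubseteq D$ and $D\sqsubseteq X_D$, which re-enter the pool of CIs to be processed. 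In the DLs with nominals I would in addition replace every nominal $\{a\}$ that does not already sit in one of the permitted shapes $A\sqsubseteq\{a\}$, $\{a\}\sqsubseteq A$ by a fresh concept name $A_a$ together with $A_a\sqsubseteq\{a\}$ and $\{a\}\sqsubseteq A_a$, so that afterwards nominals occur only in CIs that are already normal. Because naming always abbreviates a strictly smaller subconcept, this phase terminates.

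After the naming phase every CI is a conjunction of concept names, $\top$, and existential restrictions $\exists r.A$ with $A$ a concept name, on each side of $\sqsubseteq$. The remaining steps split these: $A\sqsubseteq B_1\sqcap B_2$ is replaced by $A\sqsubseteq B_1$ and $A\sqsubseteq B_2$; a left-hand side $A_1\sqcap\cdots\sqcap A_k$ with $k\geq 3$ is shortened to $A_1\sqcap A_2\sqsubseteq X$ and $X\sqcap A_3\sqcap\cdots\sqcap A_k\sqsubseteq B$ for a fresh $X$; a CI $\exists r.A\sqsubseteq D$ with $D$ still compound is split via $\exists r.A\sqsubseteq X$ (already normal) and $X\sqsubseteq D$; occurrences of $\top$ are deleted from conjunctions, an empty antecedent becoming $\top\sqsubseteq\cdots$. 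Iterating the naming and splitting rules until none applies yields an ontology $\Omc'$ all of whose CIs match one of the patterns listed in the Preliminaries. Since only constructors already present in $\Lmc$ are used, and the universal role, when present, is handled exactly like a role name in the patterns $A\sqsubseteq\exists r.B$ and $\exists r.B\sqsubseteq A$, $\Omc'$ is an $\Lmc$-ontology. For the complexity bound I would note that each rule strictly decreases a suitable measure (for instance the multiset of sizes of compound subconcepts, or, at fixed such sizes, the number of conjuncts), that the number of fresh names introduced is linear in $||\Omc||$, and that each fresh name contributes $O(1)$ CIs of $O(1)$ size; hence $||\Omc'||=O(||\Omc||)$ and the whole procedure runs in polynomial time.

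It remains to check that $\Omc'$ is a conservative extension of $\Omc$. That $\Omc'\models\alpha$ for every $\alpha\in\Omc$ follows by induction on the number of rewriting steps: each step replaces a CI by one or more CIs from which, together with the accompanying defining CIs, the replaced CI is re-derivable by substituting every fresh name back by its definition. For the converse, given any model $\Imc$ of $\Omc$ I would expand it to a model $\Jmc$ of $\Omc'$ with the same $\text{sig}(\Omc)$-reduct by interpreting the fresh names in the order in which they were introduced: the definition of the $i$-th fresh name mentions only symbols of $\Omc$ and fresh names introduced earlier, with no circularity since each names a strictly smaller subconcept, so $D^{\Jmc}$ is already fixed and we put $X_D^{\Jmc}:=D^{\Jmc}$, and $A_a^{\Jmc}:=\{a^{\Imc}\}$ for the nominal-naming step; by construction $\Jmc$ then satisfies all defining CIs and all CIs produced by splitting. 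The step I expect to require the most care is exactly this conservativity argument — in particular making the introduction order explicit so that every fresh name's definition is evaluable, and checking rule by rule, including $\top$-elimination and the nominal rule, that model-expandability and entailment of the original CIs are both preserved; everything else is bookkeeping.
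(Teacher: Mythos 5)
The paper gives no proof of this lemma—it is stated as folklore with a citation to \cite{IJCAI05-short}—and your renaming-and-flattening construction, with conservativity shown by expanding a model of $\Omc$ via $X_D^{\Jmc}:=D^{\Jmc}$ in introduction order, is precisely the standard argument being alluded to; it is correct. The only cosmetic point worth noting is that a residual CI $A\sqsubseteq B$ between concept names is absorbed into the pattern $A_1\sqcap A_2\sqsubseteq B$ by taking $A_1=A_2=A$.
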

We next give a more detailed introduction to ABoxes and how they relate to concepts. Recall that an \emph{ABox $\Amc$} is a (possibly infinite) set of assertions of the form
$A(x)$, $r(x,y)$, $\{a\}(x)$, and $\top(x)$ with $A\in \NC$, $r\in \NR$, $a\in \NI$, and $x,y$ individual variables.
An ABox is \emph{factorized} if $\{a\}(x),\{a\}(y)\in \Amc$ imply $x=y$.

ABox assertions are interpreted in an interpretation $\Imc$ using a \emph{variable assignment} $v$ that maps individual variables to elements of $\Delta^{\Imc}$. 
Then $\Imc,v$ satisfies an assertion $A(x)$ if $v(x)\in A^{\Imc}$,
$r(x,y)$ if $(v(x),v(y))\in r^{\Imc}$, $\{a\}(x)$ if $a^{\Imc}=v(x)$, and $\top(x)$ is always satisfied.
$\Imc,v$ satisfies an ABox if it satisfies all assertions in it. 
We write $\Imc\models \Amc[x\mapsto d]$ if there exists an assignment
$v$ with $v(x)=d$ such that $\Imc,v$ satisfies $\Amc$. We say that an assertion \emph{$A_{0}(x_0)$ is entailed by an ontology $\Omc$ and ABox $\Amc$}, in symbols $\Omc,\Amc\models A_{0}(x_{0})$, if
$v(x)\in A_{0}^{\Imc}$ for all models $\Imc$ of $\Omc$ and assignments $v$ such that $\Imc,v$ satisfy $\Amc$.
This is the standard notion of entailment from a knowledge base consisting of an ontology and an ABox.
Deciding entailment is in \PTime for the DLs between $\mathcal{EL}$ and $\mathcal{EL}^{++}_{u}$~\cite{IJCAI05-short}
and \ExpTime-complete for the DLs between $\mathcal{ELI}$ and $\mathcal{ELIO}_{u}$~\cite{DBLP:books/daglib/0041477}.    

Every interpretation $\Imc$ defines a factorized ABox $\Amc_{\Imc}$ by identifying every $d\in \Delta^{\Imc}$ 
with a variable $x_{d}$ and taking $A(x_{d})$ if $d\in A^{\Imc}$, $r(x_{c},x_{d})$ if $(c,d)\in r^{\Imc}$,
$\{a\}(x_{d})$ if $a^{\Imc}=d$. Conversely, factorized ABoxes define interpretations in the obvious way.

The following lemma provides a formal description of the relationship between
ABoxes that are ditree-shaped modulo some set of individual names and $\mathcal{ELO}$-concepts.
  
\begin{lemma}\label{lem:dirunfold}
	For any $\ELO_{u}(\Sigma)$-concept $C$ one can construct in polynomial time
	a pointed $\Sigma$-ABox $\Amc,x$ such that $\Amc$ is ditree-shaped modulo $\NI \cap \Sigma$ and $d\in C^{\Imc}$ iff $\Imc\models \Amc[x\mapsto d]$,
	for all interpretations $\Imc$ and $d\in \Delta^{\Imc}$.
	
	Conversely, for any pointed $\Sigma$-ABox $\Amc,x$ such that $\Amc$ is a ditree-shaped ABox modulo $\Gamma$, one can construct in polynomial time an $\ELO_{u}(\Sigma)$-concept $C$ such that $\Gamma=\NI \cap \Sigma$ and $d\in C^{\Imc}$ iff $\Imc\models \Amc_{C}[x\mapsto d]$,
	for all interpretations $\Imc$ and $d\in \Delta^{\Imc}$.
	
	The above also holds if one replaces $\ELO_{u}(\Sigma)$-concepts by $\mathcal{ELO}(\Sigma)$-concepts and requires the pointed ABoxes to be rooted.
\end{lemma}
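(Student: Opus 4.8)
The plan is to prove the two directions by the standard translation between $\mathcal{ELO}_{u}(\Sigma)$-concepts viewed through their syntax trees and pointed tree-shaped $\Sigma$-ABoxes, with two points requiring care: nominals (the source of the qualifier ``modulo $\NI\cap\Sigma$'') and the universal role, which is not an admissible ABox symbol and so must be simulated by the shape of the ABox rather than by an edge.

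\emph{From concepts to ABoxes.} I would define a pointed $\Sigma$-ABox $(\Amc_{C},x_{C})$ by recursion on $C$: $\Amc_{\top}=\{\top(x_{C})\}$, $\Amc_{A}=\{A(x_{C})\}$, $\Amc_{\{a\}}=\{\{a\}(x_{C})\}$ (here $a\in\Sigma$ since $C$ is a $\Sigma$-concept), $\Amc_{C_{1}\sqcap C_{2}}$ the union of variable-disjoint copies of $\Amc_{C_{1}}$ and $\Amc_{C_{2}}$ with roots identified, $\Amc_{\exists r.C'}=\Amc_{C'}\cup\{r(x_{C},x_{C'})\}$ for a role name $r$, and $\Amc_{\exists u.C'}=\{\top(x_{C})\}\cup\Amc_{C'}$ with \emph{no} edge linking $x_{C}$ to $x_{C'}$, so that an existential over the universal role is witnessed by a disconnected component rather than a $u$-edge. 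Since distinct occurrences of subconcepts receive distinct variables and identification happens only at roots (which have in-degree $0$), every node of $\Amc_{C}$ ends up with in-degree at most one, so $G_{\Amc_{C}}$ is a directed forest; hence $\Amc_{C}$ is ditree-shaped modulo $\NI\cap\Sigma$ with no edge even needing to be dropped. The equivalence $d\in C^{\Imc}$ iff $\Imc\models\Amc_{C}[x_{C}\mapsto d]$ is then a routine induction on $C$, the only nonobvious clauses being $\sqcap$ (satisfying assignments of the two sub-ABoxes compose because their non-root variables are disjoint) and $\exists u$ (the witness $x_{C'}$ is constrained only to satisfy $\Amc_{C'}$, exactly the meaning of $\exists u.C'$). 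The construction is linear, hence polynomial; when $C$ is an $\mathcal{ELO}(\Sigma)$-concept it has no $\exists u$ and $\Amc_{C}$ is a single tree reachable from $x_{C}$ along role edges, i.e.\ rooted.

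\emph{From ABoxes to concepts.} Let $(\Amc,x)$ be a finite $\Sigma$-ABox that is ditree-shaped modulo $\Gamma=\NI\cap\Sigma$, so $\Amc=\Amc'\cup E$ where $G_{\Amc'}$ is a directed forest and every $r(v,v')\in E$ has $\{a\}(v')\in\Amc$ for some $a\in\Gamma$; fix one such name $a(v')$ for each. In any interpretation, requiring the images of $v$ and $v'$ to form an $r$-edge while $v'$ is mapped to $a(v')^{\Imc}$ is equivalent to the image of $v$ lying in $(\exists r.\{a(v')\})^{\Imc}$, so each dropped edge can be recorded as a conjunct $\exists r.\{a(v')\}$ at $v$. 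Now read a concept off the forest: for a node $v$ with $\Amc'$-children $w_{1},\dots,w_{k}$ via roles $s_{1},\dots,s_{k}$, set $C_{v}=\top\sqcap\bigsqcap_{A(v)\in\Amc}A\sqcap\bigsqcap_{\{a\}(v)\in\Amc}\{a\}\sqcap\bigsqcap_{i\le k}\exists s_{i}.C_{w_{i}}\sqcap\bigsqcap_{r(v,v')\in E}\exists r.\{a(v')\}$, and, writing $x=y_{0},y_{1},\dots,y_{m}$ for the roots of $\Amc'$, set $C=C_{y_{0}}\sqcap\bigsqcap_{1\le j\le m}\exists u.C_{y_{j}}$. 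This is an $\mathcal{ELO}_{u}(\Sigma)$-concept whose symbols are those of $\Amc$ (its nominals lying in $\Gamma$), and $d\in C^{\Imc}$ iff $\Imc\models\Amc[x\mapsto d]$ follows by induction over the forest: for ``$\Leftarrow$'' one reads the conjuncts of each $C_{v}$ off a satisfying assignment, and for ``$\Rightarrow$'' one builds a satisfying assignment recursively, noting that a dropped edge $r(v,v')$ is satisfied because the recursion forces $v'$ to the element $a(v')^{\Imc}$, and that each $y_{j}$-component with $j\ge 1$ may be placed anywhere in the domain. Linearity here relies on the tree structure, in particular on in-degree being at most one; if one only assumed a single-edge acyclic graph, shared subtrees would be duplicated. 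In the rooted case $\Amc'$ is a single tree rooted at $x$, so $m=0$, $C=C_{x}$ has no $\exists u$, and $C$ is an $\mathcal{ELO}(\Sigma)$-concept.

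The main obstacle I anticipate is exactly this nominal bookkeeping behind the qualifier ``modulo $\Gamma$'': in the first direction one must verify that the construction never forces a non-forest shape except at nominal leaves (the in-degree argument above), and in the second the crux is that deleting an edge into a nominal-labelled node and re-encoding it as an $\exists r.\{a\}$ conjunct is sound \emph{and} complete in every model, which is what makes the back-and-forth translation exact. Once this is settled, the universal role and the rooted/unrooted dichotomy are handled uniformly by the disconnected-component device, and the remaining equivalences are plain structural inductions.
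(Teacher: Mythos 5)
Your two translations are the standard ones (the paper omits the proof as routine), and the unrooted $\mathcal{ELO}_{u}$ halves are correct: the recursion on concepts, the simulation of $\exists u$ by a disconnected component, and the re-encoding of a dropped edge $r(v,v')$ with $\{a\}(v')\in\Amc$ as a conjunct $\exists r.\{a\}$ at $v$ are exactly what is needed, and your soundness/completeness arguments for these steps are right (the nominal pins the image of $v'$ to $a^{\Imc}$, so the dropped edge and the conjunct are interchangeable in every model).

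There is, however, a genuine gap in your ABox-to-concept direction for the rooted/$\mathcal{ELO}$ case. You claim that if $\Amc,x$ is rooted then the forest $\Amc'$ obtained after dropping the edges in $E$ is a single tree rooted at $x$, so that $m=0$ and no $\exists u$ is needed. This does not follow: rootedness is defined via reachability in $G_{\Amc}$, which still contains the dropped edges, whereas the components of $\Amc'$ are determined by $G_{\Amc'}$ alone. A nominal-labelled node $y$ all of whose incoming edges lie in $E$ (for instance because each of them closes a cycle through $y$, or because the unfolding places $y$ as the root of its own tree with only non-tree edges entering it, as happens in the rooted directed unfoldings $\Amc^{\downarrow\Sigma,u}_{\Omc,A}$ that the lemma is applied to) is reachable from $x$ in $G_{\Amc}$ yet is the root of a separate component of $\Amc'$; your construction then emits $\exists u.C_{y}$ and the output is not an $\mathcal{ELO}(\Sigma)$-concept. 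The repair is either to argue that the drop-set can be chosen so that each reachable node keeps exactly one incoming edge and $\Amc'$ becomes a spanning ditree rooted at $x$ (which needs the observation that every surplus or cycle-closing edge targets a $\Gamma$-labelled node, plus the assumption, implicit in the lemma, that $x$ itself can be taken as a root), or to keep your decomposition but inline each extra component $C_{y_j}$ exactly once underneath one chosen dropped edge $r(v,y_j)$ as $\exists r.(\{a\}\sqcap C_{y_j})$, with all other dropped edges into $y_j$ contributing only $\exists r.\{a\}$; correctness of the latter again rests on the nominal identifying all these occurrences, and linearity on inlining each component once. Without one of these arguments the rooted half of the lemma is not established.
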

We define a \emph{canonical model} $\Imc_{\Omc,A_{0}}$ for an $\mathcal{ELRO}_{u}$-ontology $\Omc$ in normal form and a concept name $A_{0}$. 
This has been done in~\cite{IJCAI05-short}, but as we do not use canonical models for subsumption or instance checking we give a succinct model-theoretic construction. 

Assume $\Omc$ and $A_{0}$ are given and $\Omc$ is in normal form.
Define an equivalence relation $\sim$ on the set of individual names $a$ in $\text{sig}(\Omc)$ by setting $a\sim b$ if $\Omc \models \exists u.A_{0} \sqcap \{a\} \sqsubseteq \{b\}$. Let $[a] = \{ b \in \text{sig}(\Omc) \mid a\sim b\}$ and set $\Delta_{I}=\{ [a]\mid a\in \text{sig}(\Omc)\}$. Say that a concept name $A$ is \emph{absorbed by} an individual name $a$ if $\Omc \models \exists u.A_{0} \sqcap A \sqsubseteq \{a\}$ and let
$\Delta_{C}$ denote the set of concept names $A$ in $\Omc$ such that $\Omc\models A_{0} \sqsubseteq \exists u.A$ and $A$ is not absorbed by any individual name.

Now let $\Delta^{\Imc_{\Omc,A_{0}}}= \Delta_{I} \cup \Delta_{C}$ and let
\begin{eqnarray*}
	A^{\Imc_{\Omc,A_{0}}} & = & \{ [a] \in \Delta^{\Imc_{\Omc,A_{0}}} \mid \Omc \models \exists u.A_{0} \sqcap \{a\} \sqsubseteq A \} \cup \\
	& & \{ B \in \Delta^{\Imc_{\Omc,A_{0}}} \mid \Omc\models\exists u.A_{0} \sqcap B \sqsubseteq A \}\\
	a^{\Imc_{\Omc,A_{0}}} & = & [a]\\
	r^{\Imc_{\Omc,A_{0}}} & = & \{([a],[b]) \in \Delta^{\Imc_{\Omc,A_{0}}}\times \Delta^{\Imc_{\Omc,A_{0}}}\mid \\
	& & \hspace*{1cm} \Omc \models \exists u.A_{0} \sqcap \{a\} \sqsubseteq \exists r.\{b\} \}\cup\\
	&  &  \{([a],B) \in \Delta^{\Imc_{\Omc,A_{0}}}\times \Delta^{\Imc_{\Omc,A_{0}}}\mid \\
	&  & \hspace*{1cm}\Omc \models \exists u.A_{0} \sqcap \{a\} \sqsubseteq \exists r.B \}\cup  \\
	&  &  \{(B,[a])\in \Delta^{\Imc_{\Omc,A_{0}}}\times \Delta^{\Imc_{\Omc,A_{0}}} \mid \\
	&  & \hspace*{1cm} \Omc \models \exists u.A_{0} \sqcap B \sqsubseteq \exists r.\{a\} \} \cup \\
	&  &  \{(A,B)\in \Delta^{\Imc_{\Omc,A_{0}}}\times \Delta^{\Imc_{\Omc,A_{0}}} \mid \\
	& & \hspace*{1cm} \Omc \models \exists u.A_{0} \sqcap A \sqsubseteq \exists r.B \}
\end{eqnarray*}
for every concept name $A\in\NC$, $a\in \text{sig}(\Omc)\cap \NI$, and $r\in \NR$. We often denote the nodes $[a]$ and $A$ by $\rho_{[a]}$ or, for simplicity, $\rho_{a}$ and, respectively, $\rho_{A}$. If $A_{0}$ is absorbed by an individual $a$ we still often denote
$\rho_{[a]}$ by $\rho_{A_{0}}$.

\begin{lemma}\label{lem:canonicalel}
	The canonical model $\Imc_{\Omc,A_{0}}$ is a model of $\Omc$ and for every model $\Jmc$ of $\Omc$ and any $d\in \Delta^{\Jmc}$ with $d\in A_{0}^{\Jmc}$, $(\Imc_{\Omc,A_{0}},\rho_{A_{0}}) \preceq_{\mathcal{ELO}_{u},\Sigma} (\Jmc,d)$, where $\Sigma$ is any signature.
\end{lemma}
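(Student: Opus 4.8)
The plan is to verify the two assertions separately. For the first claim — that $\Imc_{\Omc,A_{0}}$ is a model of $\Omc$ — I would go through the normal-form CIs and RIs of $\Omc$ one by one, using the fact that the truth of concept memberships and role edges in $\Imc_{\Omc,A_0}$ is defined purely in terms of entailments $\Omc\models \exists u.A_0\sqcap X\sqsubseteq Y$. The key observation is that such entailments are closed under the syntactic consequences forced by each type of CI. For instance, if $A_1\sqcap A_2\sqsubseteq B\in\Omc$ and a domain element $\delta$ (of either kind, $[a]$ or a concept name $A$) lies in $A_1^{\Imc_{\Omc,A_0}}\cap A_2^{\Imc_{\Omc,A_0}}$, then $\Omc\models \exists u.A_0\sqcap X_\delta\sqsubseteq A_1$ and $\Omc\models\exists u.A_0\sqcap X_\delta\sqsubseteq A_2$ (where $X_\delta$ is $\{a\}$ or $A$), hence $\Omc\models\exists u.A_0\sqcap X_\delta\sqsubseteq B$, so $\delta\in B^{\Imc_{\Omc,A_0}}$. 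The cases $A\sqsubseteq\exists r.B$ and $\exists r.B\sqsubseteq A$ are handled similarly, the latter by case analysis on which of the four clauses in the definition of $r^{\Imc_{\Omc,A_0}}$ produced the $r$-edge into a witness; the nominal CIs $A\sqsubseteq\{a\}$, $\{a\}\sqsubseteq A$ interact with the absorption definition and the equivalence relation $\sim$, and here one must check that $\Delta_I$ together with $a\mapsto[a]$ makes these hold, using that absorbed concept names were deliberately removed from $\Delta_C$. For RIs $r_1\circ\dots\circ r_n\sqsubseteq r$ one chains the entailments $\Omc\models\exists u.A_0\sqcap X\sqsubseteq\exists r_1.\dots\exists r_n.Y$ along a path and uses that $\Omc$ entails the composed existential, again splitting on the clause generating each edge. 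Throughout, the only subtlety is that $A_0$ must actually be satisfiable (otherwise $\Delta^{\Imc_{\Omc,A_0}}$ could be empty); if $\Omc\models A_0\sqsubseteq\bot$ one argues separately, or notes $\rho_{A_0}$ is always present by construction when $\Delta_I\cup\Delta_C$ is built for a consistent $A_0$.

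For the second claim, given a model $\Jmc$ of $\Omc$ and $d\in A_0^{\Jmc}$, I would exhibit an explicit $\mathcal{ELO}_{u}(\Sigma)$-simulation $S$ with $(\rho_{A_0},d)\in S$. The natural candidate is
\[
S=\{([a],e)\mid e\in\Delta^{\Jmc},\ e\text{ satisfies }\{a\}\text{ and }A_0\text{ somewhere, i.e. }d\in A_0^{\Jmc}\Rightarrow a^{\Jmc}=e\}\ \cup\ \{(A,e)\mid d\in A_0^{\Jmc}\Rightarrow e\in A^{\Jmc}\},
\]
but it is cleaner to define $S$ relative to the fixed $d$: put $([a],e)\in S$ iff $e=a^{\Jmc}$, and $(A,e)\in S$ iff $e\in A^{\Jmc}$ — and then verify the simulation conditions. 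Condition~1 (concept names in $\Sigma$): if $\delta\in A^{\Imc_{\Omc,A_0}}$ then $\Omc\models\exists u.A_0\sqcap X_\delta\sqsubseteq A$; since $\Jmc\models\Omc$, $d\in A_0^{\Jmc}$, and the relevant element realises $X_\delta$, we get $e\in A^{\Jmc}$. Condition~2 (nominals): immediate from the definition $([a],e)\in S\Rightarrow e=a^{\Jmc}$. Condition~3 (role successors, for $r\in\NR\cap\Sigma$): if $(\delta,\delta')\in r^{\Imc_{\Omc,A_0}}$, then by the clause that generated this edge, $\Omc\models\exists u.A_0\sqcap X_\delta\sqsubseteq\exists r.X_{\delta'}$, so in $\Jmc$ the image $e$ of $\delta$ has an $r$-successor $e'$ realising $X_{\delta'}$, and one checks $(\delta',e')\in S$ — for $\delta'=[b]$ using that $e'$ satisfies $\{b\}$ forces $e'=b^{\Jmc}$. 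The universal role condition ($\Delta^{\Imc_{\Omc,A_0}}$ is the domain of $S$) holds because every $\delta$ is either some $[a]$, with image $a^{\Jmc}$, or some concept name $A$ with $\Omc\models A_0\sqsubseteq\exists u.A$, so $\Jmc\models\Omc$ and $d\in A_0^{\Jmc}$ gives a witness $e\in A^{\Jmc}$; pick one such $e$ for each $A$. Finally $(\rho_{A_0},d)\in S$ holds by construction (if $A_0\in\Delta_C$ take $([A_0],\cdot)$-style membership $d\in A_0^{\Jmc}$; if $A_0$ is absorbed by $a$, then $\Omc\models\exists u.A_0\sqcap A_0\sqsubseteq\{a\}$ gives $d=a^{\Jmc}=\rho_{[a]}^{\Jmc}$, matching the convention $\rho_{A_0}=\rho_{[a]}$).

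The main obstacle I anticipate is the RI case in the model-check, specifically making the chaining argument clean when the path of $r$-edges in $\Imc_{\Omc,A_0}$ mixes the four kinds of clauses (individual-to-individual, individual-to-concept, concept-to-individual, concept-to-concept): one has to argue that from the local entailments $\Omc\models\exists u.A_0\sqcap X_{\delta_i}\sqsubseteq\exists r_i.X_{\delta_{i+1}}$ one can build the single entailment $\Omc\models\exists u.A_0\sqcap X_{\delta_1}\sqsubseteq\exists r.X_{\delta_{2k}}$ using $\Omc\models r_1\circ\dots\circ r_n\sqsubseteq r$ — this is exactly the kind of bookkeeping that the derivation-tree rule for RIs (Figure~\ref{figure:dtree}) is designed to handle, and I would invoke or mirror that reasoning rather than redo it from scratch. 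A secondary nuisance is the interplay of $\sim$, absorption, and the nominal CIs, where one must be careful that identifying $a$ with $[a]$ is well-defined and that removing absorbed concept names from $\Delta_C$ does not destroy any needed successor; this is routine but requires stating precisely which entailments justify each identification.
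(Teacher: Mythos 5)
Your proposal is correct and follows essentially the same route as the paper's proof: the model property is verified CI-by-CI (and RI-by-RI, by chaining the entailments $\Omc\models\exists u.A_{0}\sqcap C_{i}\sqsubseteq\exists r_{i+1}.C_{i+1}$ along the path), and the simulation is exactly the relation $(\rho_{C},e)\in S$ iff $e\in C^{\Jmc}$, which the paper also uses. The RI case is less delicate than you anticipate, since all four edge-generating clauses yield entailments of the same uniform shape, so no derivation-tree machinery is needed.
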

\begin{proof}
	We first show that $\Imc_{\Omc,A_{0}}$ is a model of $\Omc$. It is straightforward to show that $\Imc_{\Omc,A_{0}}$ satisfies the CIs of the 
	form $\top \sqsubseteq A, A_{1} \sqcap A_{2} \sqsubseteq A$, $A\sqsubseteq \{a\}$, $\{a\} \sqsubseteq A$. 
	
	Assume now that $A \sqsubseteq \exists r.B\in\Omc$ and
	$\rho_{C}\in A^{\Imc_{\Omc,A_{0}}}$ with $C$ of the form $a$ or $A$.
	We have $\Omc\models \exists u.A_{0} \sqsubseteq \exists u.C$, $\Omc\models \exists u.A_{0} \sqcap C
	\sqsubseteq A$. Thus $\Omc\models \exists u.A_{0} \sqcap C
	\sqsubseteq \exists r. B$. But then $(\rho_{C},\rho_{B})\in r^{\Imc_{\Omc,A_{0}}}$ and $\rho_{B}\in B^{\Imc_{\Omc,A_{0}}}$.
	Thus $\rho_{C}\in (\exists r,B)^{\Imc_{\Omc,A_{0}}}$, as required. 
	
	Assume now that $\exists r.A \sqsubseteq B\in \Omc$ and $\rho_{C}\in (\exists r.A)^{\Imc_{\Omc,A_{0}}}$. Then there exists $\rho_{D}$ such that
	$(\rho_{C},\rho_{D})\in r^{\Imc_{\Omc,A_{0}}}$ and $\rho_{D}\in A^{\Imc_{\Omc,A_{0}}}$. Hence $\Omc\models \exists u.A_{0} \sqcap C \sqsubseteq \exists r.D$
	and $\Omc\models \exists u.A_{0} \sqcap D
	\sqsubseteq A$. Thus, $\Omc\models \exists u.A_{0} \sqcap C \sqsubseteq \exists r.A$.
	Hence since $\exists r.A\sqsubseteq B\in \Omc$, $\Omc\models \exists u.A_{0} \sqcap C \sqsubseteq B$. But then $\rho_{C}\in B^{\Imc_{\Omc,A_{0}}}$, as required.
	
	Finally, assume that $r_{1}\circ \cdots \circ r_{n} \sqsubseteq r\in \Omc$
	and $(\rho_{C},\rho_{D})\in r_{1}^{\Imc_{\Omc,A_{0}}}\circ \cdots \circ r_{n}^{\Imc_{\Omc,A_{0}}}$. Then there are $\rho_{C_{0}},\ldots,\rho_{C_{n}}$
	with $(\rho_{C_{i}},\rho_{C_{i+1}})\in r_{i+1}^{\Imc_{\Omc,A_{0}}}$
	for all $i<n$, where $C_{0}=C$ and $C_{n}=D$. We obtain
	$\Omc\models \exists u.A_{0} \sqcap C_{i}
	\sqsubseteq \exists r_{i+1}.C_{i+1}$ for all $i<n$. Thus
	$\Omc\models \exists u.A_{0} \sqcap C
	\sqsubseteq \exists r_{1} \cdots \exists r_{n}.D$.
	Hence $\Omc\models \exists u.A_{0} \sqcap C
	\sqsubseteq \exists r.D$. Hence $(\rho_{C},\rho_{D})\in r^{\Imc_{\Omc,A_{0}}}$,
	as required.
	
	Let $\Jmc$ be a model of $\Omc$ with $A_{0}^{\Jmc}\not=\emptyset$. Define a relation between $\Delta^{\Imc_{\Omc,A_{0}}}$ and $\Delta^{\Jmc}$ as follows: for any $\rho_{C}\in \Delta^{\Imc_{\Omc,A_{0}}}$ and $d\in \Delta^{\Jmc}$, let $(\rho_{C},d)\in S$ if $d\in C^{\Jmc}$. One can now show that this is well-defined and that for any $\rho_{C}$ there exists a $d\in \Delta^{\Jmc}$ with $(\rho_{C},d)\in S$. It is straightforward to show that $S$ is a $\ELOu(\Sigma)$-simulation, as required.	
\end{proof}
The following observation is a consequence of Lemma~\ref{lem:simulationel} and Lemma~\ref{lem:canonicalel}.	
\begin{lemma}\label{lem:can2}
	Let $\Omc$ be an $\mathcal{ELRO}_{u}$-ontology in normal form, $A_{0}$ a concept name, and $C$ an $\ELOu$-concept. Then the following conditions are equivalent:
	\begin{enumerate}
		\item $\rho_{A_{0}}\in C^{\Imc_{\Omc,A_{0}}}$;
		\item $\Omc\models A_{0} \sqsubseteq C$.
	\end{enumerate}
\end{lemma}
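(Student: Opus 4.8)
The plan is to read the statement directly off Lemmas~\ref{lem:simulationel} and~\ref{lem:canonicalel}: the canonical model $\Imc_{\Omc,A_{0}}$ is itself a model of $\Omc$, and, pointed at $\rho_{A_{0}}$, it $\mathcal{ELO}_{u}(\Sigma)$-simulates into every pointed model $(\Jmc,d)$ of $\Omc$ with $d\in A_{0}^{\Jmc}$; so $\rho_{A_{0}}$ is, with respect to $\mathcal{ELO}_{u}$-concepts, the most general instance of $A_{0}$. Before the two directions I would record the bookkeeping fact $\rho_{A_{0}}\in A_{0}^{\Imc_{\Omc,A_{0}}}$. This is a one-line check against the construction of $\Imc_{\Omc,A_{0}}$: if $A_{0}$ is absorbed by no individual name then $\rho_{A_{0}}=A_{0}\in\Delta_{C}$ and the second clause of the definition of $A_{0}^{\Imc_{\Omc,A_{0}}}$ applies since $\Omc\models\exists u.A_{0}\sqcap A_{0}\sqsubseteq A_{0}$; if $A_{0}$ is absorbed by some $a$ then $\rho_{A_{0}}=[a]$ and the first clause applies, using that absorption together with the fact that $\exists u.A_{0}$ forces $A_{0}$ to be non-empty yields $\Omc\models\exists u.A_{0}\sqcap\{a\}\sqsubseteq A_{0}$.

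For ``2.\ $\Rightarrow$\ 1.'', assume $\Omc\models A_{0}\sqsubseteq C$. By Lemma~\ref{lem:canonicalel} the canonical model $\Imc_{\Omc,A_{0}}$ is a model of $\Omc$, and by the fact just recorded $\rho_{A_{0}}\in A_{0}^{\Imc_{\Omc,A_{0}}}$; hence $\rho_{A_{0}}\in C^{\Imc_{\Omc,A_{0}}}$. For ``1.\ $\Rightarrow$\ 2.'', assume $\rho_{A_{0}}\in C^{\Imc_{\Omc,A_{0}}}$, let $\Jmc$ be an arbitrary model of $\Omc$ and $d$ an arbitrary element of $A_{0}^{\Jmc}$, and put $\Sigma=\text{sig}(C)$ (Lemma~\ref{lem:canonicalel} allows any signature, so this is harmless). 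By Lemma~\ref{lem:canonicalel}, $(\Imc_{\Omc,A_{0}},\rho_{A_{0}})\preceq_{\mathcal{ELO}_{u},\Sigma}(\Jmc,d)$, and by the ``$\preceq$ implies $\leq$'' direction of Lemma~\ref{lem:simulationel} we obtain $(\Imc_{\Omc,A_{0}},\rho_{A_{0}})\leq_{\mathcal{ELO}_{u},\Sigma}(\Jmc,d)$. Since $C$ is an $\mathcal{ELO}_{u}(\Sigma)$-concept, $\rho_{A_{0}}\in C^{\Imc_{\Omc,A_{0}}}$ entails $d\in C^{\Jmc}$; as $\Jmc$ and $d$ were arbitrary, $\Omc\models A_{0}\sqsubseteq C$ (the case $A_{0}^{\Jmc}=\emptyset$ being vacuous).

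I do not expect a genuine obstacle here, since the proof is merely the composition of the two cited lemmas. The one point requiring attention is that Lemma~\ref{lem:simulationel} must be invoked only in the direction ``$\preceq$ implies $\leq$'', which holds unconditionally; its converse needs $\Jmc$ finite, whereas here $\Jmc$ ranges over all, possibly infinite, models of $\Omc$. Apart from that, only the bookkeeping fact $\rho_{A_{0}}\in A_{0}^{\Imc_{\Omc,A_{0}}}$ has to be verified, and this is routine from the definition of $\Imc_{\Omc,A_{0}}$.
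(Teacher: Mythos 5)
Your proof is correct and follows exactly the route the paper intends: the paper states Lemma~\ref{lem:can2} as an immediate consequence of Lemmas~\ref{lem:simulationel} and~\ref{lem:canonicalel}, which is precisely the composition you carry out, and your verification of the bookkeeping fact $\rho_{A_{0}}\in A_{0}^{\Imc_{\Omc,A_{0}}}$ (including the absorbed case) is sound. You also correctly note that only the unconditional direction of Lemma~\ref{lem:simulationel} is needed, so there is no issue with infinite models $\Jmc$.
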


Next assume that $\Omc$ and an ABox $\Amc$ are given. Assume $\Omc$ is in normal form. Then one can construct in polynomial time a canonical model $\Imc_{\Omc,\Amc}$ of $\Omc$
that satisfies $\Amc$ via an assignment $v_{\Omc,\Amc}$. The details are straightforward, and we only give the main properties of $\Imc_{\Omc,\Amc}$.

\begin{lemma}\label{lem:canonicalelabox}
	Given an $\mathcal{ELRO}_{u}$-ontology $\Omc$ in normal form and an ABox $\Amc$ one can construct in polynomial time a model $\Imc_{\Omc,\Amc}$ of $\Omc$ and an assignment $v_{\Omc,\Amc}$ such that for all $x\in \text{ind}(\Amc)$ and all $\ELOu$-concepts $C$ the following conditions
	are equivalent:
	\begin{enumerate}
		\item $v_{\Omc,\Amc}(x)\in C^{\Imc_{\Omc,\Amc}}$;
		\item $\Omc,\Amc\models C(x)$.
	\end{enumerate}
\end{lemma}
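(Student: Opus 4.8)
The plan is to mimic the construction of $\Imc_{\Omc,A_{0}}$ from Lemma~\ref{lem:canonicalel}, using the whole ABox $\Amc$ as the ``seed'' rather than the single assumption that $A_{0}$ holds somewhere. First I would normalise $\Amc$: factorise it and merge any two $x,y\in\text{ind}(\Amc)$ for which $\Omc,\Amc\models\{a\}(x)$ and $\Omc,\Amc\models\{a\}(y)$ for some $a$ (this relation is already an equivalence, and the identifications are detected by polynomially many instance checks). The domain of $\Imc_{\Omc,\Amc}$ is then $\text{ind}(\Amc)\cup\Delta_{I}\cup\Delta_{C}$, where $\Delta_{I},\Delta_{C}$ are as in Lemma~\ref{lem:canonicalel} but with ``reachable from $\Amc$'' in place of ``reachable from $A_{0}$'': $\Delta_{I}$ holds the classes $[a]$ of individual names in $\sig(\Omc)$, and $\Delta_{C}$ the concept names $A\in\sig(\Omc)$ that are forced as an anonymous successor of some ABox individual and not absorbed by any individual name. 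Membership of ABox individuals in concept names is defined by entailment ($x\in A^{\Imc_{\Omc,\Amc}}$ iff $\Omc,\Amc\models A(x)$), the anonymous part is interpreted exactly as in Lemma~\ref{lem:canonicalel} with the context ``$\exists u.A_{0}$'' replaced by the satisfiable statement ``$\Amc$ holds somewhere'', and the role relations comprise the ABox edges $r(x,y)\in\Amc$, the anonymous edges of Lemma~\ref{lem:canonicalel}, the ``bridge'' edges $(x,\rho_{B})$ whenever $\Omc,\Amc\models(\exists r.B)(x)$ is not already witnessed in $\Amc$, and finally the closure of all of this under the RIs of $\Omc$. Since the domain has polynomial size and every defining condition is an instance- or subsumption-check in $\mathcal{ELRO}_{u}$, which is in \PTime~\cite{IJCAI05-short}, this yields $\Imc_{\Omc,\Amc}$ and $v_{\Omc,\Amc}$ in polynomial time, with $v_{\Omc,\Amc}$ mapping $x$ to its class.

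Second I would verify $\Imc_{\Omc,\Amc}\models\Omc$ and $\Imc_{\Omc,\Amc},v_{\Omc,\Amc}\models\Amc$ by the same case analysis over normal-form CIs and RIs as in the proof of Lemma~\ref{lem:canonicalel}. The invariant to maintain is: the labels of ABox individuals are closed under $\Omc,\Amc$-entailment, the labels of anonymous nodes are closed under $\Omc$-entailment in the $\Amc$-context, and every forced $r$-edge is present. One checks that this invariant is preserved by the RI-closure, and that it makes every CI of the form $\exists r.B\sqsubseteq A$ true while the bridge edges (together with the anonymous part) make those of the form $A\sqsubseteq\exists r.B$ true. From ``$\Imc_{\Omc,\Amc}$ is a model of $\Omc$ and satisfies $\Amc$ via $v_{\Omc,\Amc}$'' the implication $2\Rightarrow 1$ is then immediate.

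Third, for $1\Rightarrow 2$ I would show, for every model $\Jmc$ of $\Omc$ with $\Jmc,w\models\Amc$, every $x\in\text{ind}(\Amc)$, and every signature $\Sigma$, that $(\Imc_{\Omc,\Amc},v_{\Omc,\Amc}(x))\preceq_{\ELOu,\Sigma}(\Jmc,w(x))$; together with Lemma~\ref{lem:simulationel} this gives $v_{\Omc,\Amc}(x)\in C^{\Imc_{\Omc,\Amc}}\Rightarrow w(x)\in C^{\Jmc}$, hence $\Omc,\Amc\models C(x)$. The witnessing relation is $S=\{(v_{\Omc,\Amc}(x),w(x))\mid x\in\text{ind}(\Amc)\}\cup\{(\rho_{D},d)\mid d\in D^{\Jmc}\}$ (with $D$ the nominal $\{a\}$ for a class $[a]$ and a concept name otherwise), which is total on the domain (each $\rho_{D}$ kept in the domain is reachable, so $D^{\Jmc}\neq\emptyset$ as $\Jmc\models\Omc$ and $\Jmc,w\models\Amc$) and well defined (the identifications performed are forced in $\Jmc$ as well). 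Checking the three simulation conditions is verbatim as in Lemma~\ref{lem:canonicalel}: concept- and nominal-membership transfer via $\Omc,\Amc\models A(x)$ on the ABox part and via the $\Amc$-context on the anonymous part; an edge $r(x,y)\in\Amc$ is matched by $(w(x),w(y))\in r^{\Jmc}$; a bridge edge $(x,\rho_{B})$ is matched by an $r$-successor of $w(x)$ in $B^{\Jmc}$; anonymous and RI-closure edges are matched because $\Jmc$ itself satisfies the corresponding CIs and RIs.

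The step I expect to be the main obstacle is making the gluing of the anonymous part to the ABox individuals interact correctly with nominals and with RI-closure: the bridge edges and the identifications must be chosen so that the above invariant survives RI-closure, so that no $\exists r.B\sqsubseteq A$ consequence is lost, and so that $S$ remains total and ``functional enough'' on the ABox part to be a simulation into every model. Once this is arranged the argument is a routine extension of Lemma~\ref{lem:canonicalel}; the case of $\mathcal{ELRO}_{u}$ with $\bot$ reduces to the above via Remark~\ref{rem:bot2} after deleting the individuals $x$ with $\Omc,\Amc\models\bot(x)$.
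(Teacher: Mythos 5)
The paper itself does not write out a proof of this lemma (it declares the details ``straightforward'' and only records the stated properties), so the relevant comparison is with the construction it clearly intends: generalizing the canonical model $\Imc_{\Omc,A_{0}}$ and the simulation argument of Lemma~\ref{lem:canonicalel} from the one-assertion seed $\{A_{0}(\rho_{A_{0}})\}$ to an arbitrary ABox. Your proposal is exactly that generalization --- domain built from $\text{ind}(\Amc)$ together with the anonymous part $\Delta_{I}\cup\Delta_{C}$, labels defined by entailment, direction $2\Rightarrow 1$ via model-hood, direction $1\Rightarrow 2$ by exhibiting an $\ELOu$-simulation into an arbitrary model of $\Omc$ and $\Amc$ and invoking Lemma~\ref{lem:simulationel} --- so it follows the intended route and the overall architecture is sound.

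One detail, as written, would actually falsify the lemma and must be repaired rather than merely flagged as ``the main obstacle''. You factorize $\text{ind}(\Amc)$ only among itself, while keeping the classes $[a]\in\Delta_{I}$ as separate domain elements. If $\Omc,\Amc\models\{a\}(x)$ for some $x\in\text{ind}(\Amc)$ and $a\in\sig(\Omc)$, then $a^{\Imc_{\Omc,\Amc}}$ is a single element that must coincide with $v_{\Omc,\Amc}(x)$; with $x$ and $[a]$ kept distinct, either $a^{\Imc_{\Omc,\Amc}}=[a]$ and then $v_{\Omc,\Amc}(x)\notin\{a\}^{\Imc_{\Omc,\Amc}}$ even though $\Omc,\Amc\models\{a\}(x)$ (refuting $2\Rightarrow 1$ for $C=\{a\}$), or the interpretation of $a$ is ill-defined. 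The quotient therefore has to be taken on $\text{ind}(\Amc)\cup(\NI\cap\sig(\Omc))$ jointly, identifying $x$ with $[a]$ whenever $\Omc,\Amc\models\{a\}(x)$; the relation ``is entailed to carry some common nominal'' remains an equivalence and is still decidable by polynomially many instance checks, and the simulation $S$ stays well defined because these identifications are forced in every model $\Jmc$. Separately, your worry about the interaction of bridge edges and RI-closure dissolves if you define $r^{\Imc_{\Omc,\Amc}}$ directly by entailment (put an edge exactly when the corresponding $\exists r$-fact is entailed by $\Omc,\Amc$), as Lemma~\ref{lem:canonicalel} does with conditions of the form $\Omc\models\exists u.A_{0}\sqcap C\sqsubseteq\exists r.D$: then satisfaction of the RIs and of the CIs $\exists r.B\sqsubseteq A$ is immediate and no separate closure-plus-invariant argument is needed.
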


The following lemma provides a formal description of the relationship between
ABoxes that are tree-shaped modulo some set of individual names and $\mathcal{ELIO}$-concepts.

\begin{lemma}
	For any $\mathcal{ELIO}_{u}(\Sigma)$-concept $C$ one can construct in polynomial time
	a pointed $\Sigma$-ABox $\Amc,x$ such that $\Amc$ is tree-shaped modulo $\NI \cap \Sigma$ and $d\in C^{\Imc}$ iff $\Imc\models \Amc[x\mapsto d]$,
	for all interpretations $\Imc$ and $d\in \Delta^{\Imc}$.
	
	Conversely, for any pointed $\Sigma$-ABox $\Amc,x$ such that $\Amc$ is a tree-shaped ABox modulo $\Gamma$, one can construct in polynomial time an $\mathcal{ELIO}_{u}(\Sigma)$-concept $C$ such that $\Gamma=\NI \cap \Sigma$ and $d\in C^{\Imc}$ iff $\Imc\models \Amc_{C}[x\mapsto d]$,
	for all interpretations $\Imc$ and $d\in \Delta^{\Imc}$.
	
	The above also holds if one replaces $\mathcal{ELIO}_{u}(\Sigma)$-concepts by $\mathcal{ELIO}(\Sigma)$-concepts and requires the pointed ABoxes to be weakly rooted.
\end{lemma}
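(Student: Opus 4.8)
I would prove the correspondence by giving explicit syntactic translations in both directions and checking that each preserves the intended meaning by a routine induction; this is essentially the folklore dictionary between $\mathcal{ELI}$-concepts and tree-shaped ABoxes, so the only genuine work lies in the bookkeeping for nominals and for the universal role.

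\textbf{Concepts to ABoxes.} Given an $\mathcal{ELIO}_{u}(\Sigma)$-concept $C$, I would define a pointed ABox $\Amc_{C},x_{C}$ by recursion on $C$, always using fresh variables so that the ABoxes produced for distinct subconcepts are variable-disjoint. For $C\in\{\top,A,\{a\}\}$ take a single variable carrying the obvious unary assertion (the symbol involved lies in $\Sigma$ because $C$ is a $\Sigma$-concept). For $C=C_{1}\sqcap C_{2}$ take the disjoint union of $\Amc_{C_{1}}$ and $\Amc_{C_{2}}$ and identify the two roots. For $C=\exists r.C'$ with $r$ a role name add a fresh root $x_{C}$ and the edge $r(x_{C},x_{C'})$; for $C=\exists s^{-}.C'$ add the edge $s(x_{C'},x_{C})$ instead; for $C=\exists u.C'$ add no edge at all, keep $\Amc_{C'}$ as a separate component and let $x_{C}$ be a fresh variable carrying $\top$. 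An easy induction shows that the only nodes of $\Amc_{C}$ that can have in-degree greater than $1$, lie on an undirected cycle, or carry two different roles on one ordered pair, are those labelled by a nominal --- and those are exactly the nodes whose incident edges we may remove --- so $\Amc_{C}$ is tree-shaped modulo $\NI\cap\Sigma$; in the $u$-free fragment $\mathcal{ELIO}$ there is no $\exists u$ case, so $\Amc_{C}$ is connected from $x_{C}$ in $G^{u}_{\Amc_{C}}$, i.e.\ weakly rooted. Since $\Amc_{C}$ has one variable per occurrence of $\top$, a concept name, a nominal, or an existential in $C$, it is of linear size and computable in polynomial time.

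\textbf{ABoxes to concepts.} Let $\Amc,x$ be a pointed $\Sigma$-ABox tree-shaped modulo $\Gamma$; since $\Gamma\subseteq\NI\cap\Sigma$ and enlarging $\Gamma$ preserves the property, assume $\Gamma=\NI\cap\Sigma$. Choose a set $W$ of edges, each incident with a nominal-labelled node, whose removal turns $\Amc$ into a forest; in the $\mathcal{ELIO}$ case one can moreover pick $W$ so that the weakly-connected component of $x$ becomes a single spanning tree rooted at $x$ (this uses weak rootedness), and in the $\mathcal{ELIO}_{u}$ case root each remaining component at one of its nodes, preferring a nominal-labelled one. For a node $y$ define $\mathrm{conc}(y)$ as the conjunction of all unary labels $A(y),\{a\}(y)\in\Amc$, of $\exists r.\mathrm{conc}(z)$, resp.\ $\exists s^{-}.\mathrm{conc}(z)$, for each tree-child $z$ of $y$ joined to it by $r(y,z)$, resp.\ $s(z,y)$, and, for each removed edge of $W$ at $y$ whose other endpoint is labelled $\{a\}$, a conjunct $\exists r.\{a\}$, resp.\ $\exists r^{-}.\{a\}$, according to that edge's direction. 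In $\mathcal{ELIO}$ put $C:=\mathrm{conc}(x)$; in $\mathcal{ELIO}_{u}$ put $C:=\mathrm{conc}(x)$ conjoined with one conjunct $\exists u.\mathrm{conc}(w)$ for the root $w$ of every component other than that of $x$ --- when $w$ carries $\{a\}$ this pins the component to $a^{\Imc}$. Each node contributes a bounded amount, so $C$ is of linear size.

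\textbf{Correctness and the main obstacle.} The equivalence $d\in C^{\Imc}$ iff $\Imc\models\Amc[x\mapsto d]$ would be proved by induction on $C$ for the first translation --- the $\sqcap$-step uses that the two sub-ABoxes are variable-disjoint so their satisfying assignments merge, the $\exists r$- and $\exists s^{-}$-steps are immediate, and the $\exists u$-step uses that $u^{\Imc}=\Delta^{\Imc}\times\Delta^{\Imc}$, so a disconnected component is satisfiable exactly when the corresponding $\exists u$-conjunct holds somewhere --- and by induction over the forest obtained from $\Amc$ for the second translation. The single delicate point, on which I expect to spend most of the effort, is the handling of nominals in the second translation: because $\{a\}$ is rigid, in every model all nodes carrying $\{a\}(\cdot)$ are interpreted as $a^{\Imc}$, so cutting the edges in $W$ and re-expressing the severed subtrees via the top-level $\exists u$-conjuncts (together with the local $\exists r.\{a\}$ and $\exists r^{-}.\{a\}$ conjuncts that merely record the existence of the cut edge) is both sound and lossless; one must check carefully that the various $\exists u$-conjuncts coming from the several distinct $\{a\}$-labelled nodes together impose precisely the constraints that $\Amc$ imposes on $a^{\Imc}$, and, in the $\mathcal{ELIO}$ case, that a spanning-tree cut of the required form always exists. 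Everything else is routine, and the polynomial-time bounds are immediate since both translations touch each subconcept, respectively each node, a bounded number of times.
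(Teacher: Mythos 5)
Your construction is correct and is exactly the standard concept/tree-shaped-ABox dictionary that the paper relies on: the lemma is stated in the appendix without proof, so there is no paper argument to diverge from. The two delicate points you single out -- attaching the $\exists r.\{a\}$ / $\exists r^{-}.\{a\}$ conjuncts at the non-nominal endpoint of each cut edge, and the existence of a nominal-incident cut set that leaves the component of $x$ connected in the weakly rooted case (obtained by adding back edges of the given cut set until a spanning tree is reached) -- are indeed the only places where anything needs checking, and your treatment of both is sound.
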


\section{Proof for Section~\ref{CIPandPBDP}}

We start by proving Remark~\ref{rem:bot1}.

\medskip
\noindent
{\bf Proof of Remark~\ref{rem:bot1}.}
	We have to show that the CIP and PBDP are invariant under adding $\bot$ (interpreted as the empty set) to the languages introduced in this paper. 
    Assume that $\Lmc$ is any such language and let $\Lmc_{\bot}$ denote
	its extension with $\bot$. We claim that $\Lmc$ enjoys the CIP/PBDP iff $\Lmc_{\bot}$ does. We show this for the CIP, the proof for the PBDP is similar. Assume first that $C\sqsubseteq D$ and $\Omc_{1},\Omc_{2}$ are a counterexample to the CIP of $\Lmc$. Then they are also a counterexample to the CIP of
	$\Lmc_{\bot}$. Conversely, assume that $C\sqsubseteq D$ and $\Omc_{1},\Omc_{2}$ are a counterexample to the CIP of $\Lmc_{\bot}$.  We may assume that no CI in $\Omc_{1}\cup \Omc_{2}$ uses $\bot$ in the concept on its left hand side (if it does, the CI is redundant). 
	Let $B$ be a fresh concept name and replace $\bot$ by $B$
	in $\Omc_{1}$ and $\Omc_{2}$. Also add to $\Omc_{i}$ the CIs
	$$
	\exists r.B \sqsubseteq B, \quad B \sqsubseteq A \sqcap \exists r.B
	$$
	for all role names $r$ in $\text{sig}(\Omc_{i}$) and $A\in \text{sig}(\Omc_{i})$.
	We also let $r$ range over inverse roles in $\text{sig}(\Omc_{i}$) if $\Lmc$ admits inverse roles, the universal role if $\Lmc$ admits the universal role, and $A$ over nominals in $\text{sig}(\Omc_{i}$) if $\Lmc$ admits nominals. Let $\Omc_{i}'$ denote the resulting ontology. Then it is easy to see that $C\sqsubseteq D$ and $\Omc_{1}',\Omc_{2}'$
	are a counterexample to the CIP of $\Lmc$.

\medskip

We continue with a few comments and missing proofs for Theorem~\ref{posneg}.

\medskip
\noindent
{\bf Theorem~\ref{posneg}.}
{\em The following DLs do not enjoy the CIP nor PBDP:
	\begin{enumerate}
		\item $\mathcal{EL}$ with the universal role, 
		\item $\mathcal{EL}$ with nominals, 
		\item $\mathcal{EL}$ with a single role inclusion $r\circ s\sqsubseteq s$, 
		\item $\mathcal{EL}$ with role hierarchies and a transitive role, 
		\item $\mathcal{EL}$ with inverse roles.
	\end{enumerate}	
In Points~2 to 5, the CIP/PBDP also fails if the universal role
can occur in interpolants/explicit definitions.}
\begin{proof}
We first supply a proof for Point~4.
Let $\Omc_{rs}$ contain
$$
A \sqsubseteq \exists s.E, \quad E \sqsubseteq \exists s_1.B, \quad \exists s_2.B \sqsubseteq A, 
$$
$$
s_1 \sqsubseteq s, \quad s\sqsubseteq s_2, \quad s\circ s\sqsubseteq s,
$$
and let $\Sigma= \{s_1, s_2,E\}$. Then $A$ is implicitly definable using $\Sigma$ under $\Omc_{rs}$ since
	$$
	\Omc_{rs} \models \forall x (A(x) \leftrightarrow \exists y (E(y) \wedge \forall z (s_1(y,z) \rightarrow s_2(x,z))).
	$$
	In the same way as above, the interpretations $\I$ and $\I'$ given in Figure~\ref{figure:fig3} show that $A$ has no $\mathcal{EL}_u(\Sigma)$-definition under $\Omc_{rs}$.
	\begin{figure}[th]
		\begin{center}
			\begin{tikzpicture}[every label/.style={font=\small,inner sep=0pt},node distance = 1cm and 1.5cm]
				\node[label=above:{${\color{blue}A}$}] (a) {$a$};
				\node[below right=of a,label=above right:{$E,{\color{blue}A}$}] (c) {$c$};
				\node[below=of a, left=of c, label=below:{${\color{blue}B}$}] (b) {$b$};
				
				\node[below=2.5cm of a] (a') {$a'$};
				\node[below right=of a',label=above right:{$E,{\color{blue}A}$}] (c') {$c'$};
				\node[below=of a', left=of c'] (b') {$b'$};
				\node[below right=of c',label=above right:{$E,{\color{blue}A}$}] (f) {$c''$};
				\node[below=of c', left=of f, label=below:{${\color{blue}B}$}] (g) {$b''$};
				
				\path[->] 
				(a) edge node[left]  {${\color{blue}s},s_2$} (b)
				(a) edge node[above right] {${\color{blue}s},s_2$} (c)
				(c) edge node[below] {$s_1,{\color{blue}s},s_2$} (b)
				(c) edge[in=-30,out=-60, loop] node[below right] {${\color{blue}s},s_2$} (c)
				(a') edge node[left] {$s_2$} (b')
				(a') edge node[above right] {$s_2$} (c')
				(c') edge node[below] {$s_1,{\color{blue}s},s_2$} (b')
				(c') edge node[left] {${\color{blue}s},s_2$} (g)
				(c') edge node[above right] {${\color{blue}s},s_2$} (f)
				(c') edge[in=10,out=-10, loop] node[right] {$s_2$} (c')
				(f) edge node[below] {$s_1, {\color{blue}s}, s_2$} (g)
				(f) edge[in=-30,out=-60, loop] node[below right] {${\color{blue}s},s_2$} (f)
				;
				
				\node[left=0.5cm of a] {$\I$:} ;
				\node[left=0.5cm of a'] {$\I'$:} ;
			\end{tikzpicture}
		\end{center}
		\caption{Interpretations $\Imc$  and $\Imc'$  used for $\Omc_{rs}$.}
		\label{figure:fig3}
	\end{figure}

We next observe that Point~5 can easily be strengthened. The concept name $A$ does not only have no explicit 
$\mathcal{ELI}_u(\Sigma)$-definition, but no such definition exists in the positive fragment of $\mathcal{ALCI}_u$. 
To see this, consider the interpretations given in Figure~\ref{figure:fig5}.
	\begin{figure}[th]
		\begin{center}
			\begin{tikzpicture}[every label/.style={font=\small,inner sep=0pt},node distance = 0.5cm and 0.5cm]
				\node[label=above:{${\color{blue}A},B$}] (a) {$a$};
				\node[below left=of a,label=below:{${\color{blue}C},D,E$}] (b) {$b$};
				\node[below right=of a] (c) {$c$};
				
				\node[label=above:{$B$},right=4cm of a] (a') {$a'$};
				\node[below left=of a',label=below:{$D,E$}] (b') {$b'$};
				\node[below=of a'] (c') {$c'$};
				\node[below right=of a',label=below:{${\color{blue}C},D$}] (d') {$c''$};
				
				\foreach \i/\j in {a/b,a/c,a'/b',a'/c',a'/d'} {
					\draw[->] (\i) -- (\j) ;
				}
				
			\end{tikzpicture}
		\end{center}
        \caption{Interpretations $\Imc$ (left) and $\Imc'$ (right) for $\Omc_{i}$.
        \label{figure:fig5}}
	\end{figure}
Observe that the interpretations $\Imc,\Imc'$ show that $A$ is not definable under
$\Omc_{i}$ using any concept constructed from $\Sigma$ using $\sqcap,\sqcup,
\exists, \forall$ since for any such concept $F$ we have for 
$(x,x') \in \{(a,a'),(b,b'),(c,c'),(c,c'')\}$ that $x \in F^\I$ implies $x' \in F^\I$.
Of course, the interpretations $\Imc$ and $\Imc'$ given in Figure~\ref{figure:fig5}.
also demonstrate that concepts with implicit definitions in $\EL_u$ may not have
explicit definitions in positive $\ALC_u$. The interpretations depicted 
in Figure~\ref{figure:fig5} differ from the interpretations constructed previously
in that they are not the canonical models. The nodes $c$ and $c'$ are not enforced 
by the ontology but are needed to ensure $\forall r.E$ does not distinguish $a$ and $a'$.
\end{proof}

We defer the proof of Theorem~\ref{th:safety} to the end of
Section~\ref{sec:proofs_existence1} as we need the canonical model and ABox
unfolding machinery developed in that section.

\section{Proofs for Section~\ref{sec:exist}}

We give a proof for Remark~\ref{rem:bot2}.

\medskip
\noindent
{\bf Proof of Remark~\ref{rem:bot2}.}
Assume that $\Lmc$ is any DL introduced in this paper and let $\Lmc_{\bot}$ denote
its extension with $\bot$. The polynomial time reductions of 
$\Lmc$-interpolant existence and $\Lmc$-explicit definition existence to 
$\Lmc_{\bot}$-interpolant existence and $\Lmc_{\bot}$-explicit definition existence, 
respectively, are trivial. For the converse direction, we consider the CIP, the reduction for the PBDP is similar. 
The idea is the same as in Remark~\ref{rem:bot1}. 
Assume that $C\sqsubseteq D$ and $\Omc_{1},\Omc_{2}$ are in $\Lmc_{\bot}$.
If $\Omc_{1}\cup \Omc_{2} \models C \sqsubseteq \bot$, then an interpolant exists and we are done. Assume $\Omc_{1}\cup \Omc_{2} \not\models C \sqsubseteq \bot$. We may assume that no CI in $\Omc_{1}\cup \Omc_{2}$ uses $\bot$ in the concept on its left hand side (if it does, the CI is redundant). Now let $B$ be a fresh concept name and replace $\bot$ by $B$
in $C$, $D$, $\Omc_{1}$, and $\Omc_{2}$. Also add to $\Omc_{i}$ the CIs
$$
\exists r.B \sqsubseteq B, \quad B \sqsubseteq A \sqcap \exists r.B
$$
for all role names $r$ in $\text{sig}(\Omc_{i}$) and $A\in \text{sig}(\Omc_{i})$.
We also let $r$ range over inverse roles in $\text{sig}(\Omc_{i}$) if $\Lmc$ admits inverse roles, the universal role if $\Lmc$ admits the universal role, and $A$ over nominals in $\text{sig}(\Omc_{i}$) if $\Lmc$ admits nominals. Let $\Omc_{i}'$ denote the resulting ontology. Then there exists an $\Lmc_{\bot}$-interpolant for $C \sqsubseteq D$ under $\Omc_{1},\Omc_{2}$ iff there exists an $\Lmc$-interpolant for $C \sqsubseteq D$ under $\Omc_{1}',\Omc_{2}'$. 

\section{Proofs for Section~\ref{sec:existence1}}
\label{sec:proofs_existence1}
We first give a proof of the polynomial time decidability of interpolant existence that has not been
discussed in the main paper. Then we provide the missing proofs from the main paper.

The following complexity upper bound proof does not
provide an upper bound on the size of interpolants/explicit definitions, but is
more elementary than the one we sketched in the main paper.

We start by proving a characterization for the existence of interpolants using canonical models and simulations. 
\begin{lemma}\label{lem:crit}
	Let $\Omc_{1},\Omc_{2}$ be $\mathcal{ELRO}_{u}$-ontologies in normal form, $A,B$ concept names, 
        and $\Lmc\in \{\mathcal{ELO}, \mathcal{ELO}_{u}\}$.
	Let $\Sigma= \text{sig}(\Omc_{1},A) \cap \text{sig}(\Omc_{2},B)$. Then
	there does not exist an $\Lmc$-interpolant for $A\sqsubseteq B$ under $\Omc_{1},\Omc_{2}$ iff
	there exists a model $\Jmc$ of $\Omc_{1} \cup \Omc_{2}$
	and $d\in \Delta^{\Jmc}$ such that 
	\begin{enumerate}
		\item $d\not\in B^{\Jmc}$;
		\item $(\Imc_{\Omc_{1}\cup \Omc_{2},A},\rho_{A}) \preceq_{\Lmc,\Sigma} (\Jmc,d)$.
	\end{enumerate}
\end{lemma}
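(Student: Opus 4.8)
The plan is to prove both directions of the characterization, using the universal property of the canonical model $\Imc_{\Omc_1\cup\Omc_2,A}$ established in Lemma~\ref{lem:canonicalel} together with the simulation/definability correspondence of Lemma~\ref{lem:simulationel}.

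For the direction ``no interpolant $\Rightarrow$ bad model exists'', I would argue contrapositively: assume that for every model $\Jmc$ of $\Omc_1\cup\Omc_2$ and every $d\in\Delta^{\Jmc}$ with $(\Imc_{\Omc_1\cup\Omc_2,A},\rho_A)\preceq_{\Lmc,\Sigma}(\Jmc,d)$ we have $d\in B^{\Jmc}$, and build an interpolant. The natural candidate is the $\Lmc(\Sigma)$-concept $C$ describing a suitably truncated piece of $\Imc_{\Omc_1\cup\Omc_2,A}$ around $\rho_A$ — but since the canonical model may contain cycles (via nominals and role inclusions) one cannot literally read off a concept. Instead I would first note $\Omc_1\cup\Omc_2\models A\sqsubseteq C$ should hold for the interpolant, and $\Omc_1\cup\Omc_2\models C\sqsubseteq B$. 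So I would take the set $\Gamma$ of all $\Lmc(\Sigma)$-concepts $C$ with $\rho_A\in C^{\Imc_{\Omc_1\cup\Omc_2,A}}$; by Lemma~\ref{lem:canonicalel} each such $C$ satisfies $\Omc_1\cup\Omc_2\models A\sqsubseteq C$ (using that $\Imc_{\Omc_1\cup\Omc_2,A}$ simulates into any model of $A$, hence $\rho_A\in C^{\Imc}$ transfers). The assumption then forces: in any model $\Jmc$ of $\Omc_1\cup\Omc_2$, any $d$ satisfying all concepts in $\Gamma$ lies in $B^{\Jmc}$, because $(\Imc_{\Omc_1\cup\Omc_2,A},\rho_A)\leq_{\Lmc,\Sigma}(\Jmc,d)$ and — since $\Imc_{\Omc_1\cup\Omc_2,A}$ need not be finite — I must pass through simulations: I would show $d$ satisfying all of $\Gamma$ gives an $\Lmc(\Sigma)$-simulation back into $\Imc_{\Omc_1\cup\Omc_2,A}$ is not quite what is needed; rather I want a simulation \emph{from} $\Imc_{\Omc_1\cup\Omc_2,A}$ into $\Jmc$. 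The cleanest route is a compactness argument: if no finite conjunction from $\Gamma$ forces $B$ over $\Omc_1\cup\Omc_2$, then by compactness there is a model $\Jmc$ and $d$ satisfying all of $\Gamma$ with $d\notin B^{\Jmc}$; then the relation $S=\{(e,d')\in\Delta^{\Imc_{\Omc_1\cup\Omc_2,A}}\times\Delta^{\Jmc}\mid (\Imc_{\Omc_1\cup\Omc_2,A},e)\leq_{\Lmc,\Sigma}(\Jmc,d')\}$ restricted along the tree/graph structure, together with a standard unravelling of $\Jmc$ to make simulations total, yields $(\Imc_{\Omc_1\cup\Omc_2,A},\rho_A)\preceq_{\Lmc,\Sigma}(\Jmc',d)$ for a suitable model $\Jmc'$, contradicting the assumption. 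Hence some finite conjunction $C\in\Gamma$ satisfies $\Omc_1\cup\Omc_2\models C\sqsubseteq B$, and since $C\in\Gamma$ gives $\Omc_1\cup\Omc_2\models A\sqsubseteq C$ and $\sig(C)\subseteq\Sigma$, $C$ is an $\Lmc$-interpolant.

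For the converse direction ``bad model exists $\Rightarrow$ no interpolant'', suppose $\Jmc\models\Omc_1\cup\Omc_2$, $d\notin B^{\Jmc}$, and $(\Imc_{\Omc_1\cup\Omc_2,A},\rho_A)\preceq_{\Lmc,\Sigma}(\Jmc,d)$, and suppose for contradiction that $C$ is an $\Lmc$-interpolant. Then $\Omc_1\cup\Omc_2\models A\sqsubseteq C$, so by Lemma~\ref{lem:canonicalel} (the canonical model is a model of $\Omc_1\cup\Omc_2$ with $\rho_A\in A^{\Imc_{\Omc_1\cup\Omc_2,A}}$) we get $\rho_A\in C^{\Imc_{\Omc_1\cup\Omc_2,A}}$. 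Since $\sig(C)\subseteq\Sigma$ and $\Lmc(\Sigma)$-concepts are preserved along $\Lmc(\Sigma)$-simulations (Lemma~\ref{lem:simulationel}), the simulation in Condition~2 transfers $C$ to $d$, i.e.\ $d\in C^{\Jmc}$. But $\Omc_1\cup\Omc_2\models C\sqsubseteq B$ and $\Jmc\models\Omc_1\cup\Omc_2$ give $d\in B^{\Jmc}$, contradicting $d\notin B^{\Jmc}$.

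The main obstacle is the first direction, specifically handling the possible infinity of $\Imc_{\Omc_1\cup\Omc_2,A}$ and the fact that $\Lmc(\Sigma)$-simulations need not reduce to $\Lmc(\Sigma)$-concept-inclusion unless the target is finite (Lemma~\ref{lem:simulationel} only gives $\preceq\Rightarrow\leq$ in general). I expect to need a compactness argument over first-order translations of $\Omc_1\cup\Omc_2$ together with the $\Lmc(\Sigma)$-type of $\rho_A$, plus a disjoint-union/unravelling trick to turn the resulting simulation-\emph{like} relation into a genuine $\Lmc_u(\Sigma)$- or $\Lmc(\Sigma)$-simulation satisfying the domain/totality requirement of Condition~2 (for $\Lmc=\mathcal{ELO}_u$ the simulation must have all of $\Delta^{\Imc_{\Omc_1\cup\Omc_2,A}}$ as domain, which requires care). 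I would also invoke Remark~\ref{rem:red2}/Lemma~\ref{lem:normal} implicitly for the shape of inputs, and note the construction is uniform in whether or not the universal role is admitted, the only difference being the totality clause in the definition of $\preceq_{\Lmc,\Sigma}$.
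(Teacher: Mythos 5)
Your overall architecture coincides with the paper's proof. The ``bad model $\Rightarrow$ no interpolant'' direction is exactly the paper's argument: $\Omc_1\cup\Omc_2\models A\sqsubseteq C$ forces $\rho_A\in C^{\Imc_{\Omc_1\cup\Omc_2,A}}$ (Lemma~\ref{lem:can2}, via Lemma~\ref{lem:canonicalel}), the simulation transfers $C$ to $d$ by Lemma~\ref{lem:simulationel}, and $\Omc_1\cup\Omc_2\models C\sqsubseteq B$ then contradicts $d\notin B^{\Jmc}$. For the other direction the paper likewise takes $\Gamma=\{C\in\Lmc(\Sigma)\mid\rho_A\in C^{\Imc_{\Omc_1\cup\Omc_2,A}}\}$, notes that non-existence of an interpolant means no (conjunction of) member(s) of $\Gamma$ entails $B$, and applies compactness to get $\Jmc$ and $d$ realizing $\Gamma$ with $d\notin B^{\Jmc}$. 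Incidentally, your worry about $\Imc_{\Omc_1\cup\Omc_2,A}$ being infinite is misplaced in this lemma: for $\mathcal{ELRO}_u$ the canonical model is finite and polynomial-size. The real obstacle is that the $\Jmc$ delivered by compactness is infinite, so the converse direction of Lemma~\ref{lem:simulationel} is not directly available.

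That obstacle is the one step your sketch does not resolve. The relation $S=\{(e,d')\mid(\Imc_{\Omc_1\cup\Omc_2,A},e)\leq_{\Lmc,\Sigma}(\Jmc,d')\}$ need not satisfy the forth condition in an arbitrary $\Jmc$: if $(e,e')\in r^{\Imc_{\Omc_1\cup\Omc_2,A}}$ then $d'$ has, for each \emph{finite} conjunction $C$ of $\Lmc(\Sigma)$-concepts true at $e'$, some $r$-successor in $C^{\Jmc}$, but these witnesses can differ for different $C$, and neither unravelling nor disjoint unions manufacture a single successor realizing the entire (infinite) type of $e'$. The paper's fix is to assume, as compactness permits, that $\Jmc$ is $\omega$-saturated; for $\omega$-saturated targets the converse of Lemma~\ref{lem:simulationel} does hold, so $\leq_{\Lmc,\Sigma}$ yields $\preceq_{\Lmc,\Sigma}$ (and the totality requirement in the $\mathcal{ELO}_u$ case is handled by the $\exists u.C$ concepts in $\Gamma$). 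Replacing your ``unravelling trick'' by passage to an $\omega$-saturated elementary extension makes your argument agree with the paper's.
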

\begin{proof}
	Assume an $\Lmc$-interpolant $F$ exists, but there exists a model $\Jmc$ of $\Omc_{1} \cup \Omc_{2}$
	and $d\in \Delta^{\Jmc}$ satisfying the conditions of the lemma.
	As $\Omc_{1}\cup \Omc_{2} \models A \sqsubseteq F$, by Lemma~\ref{lem:canonicalel}, we obtain $\rho_{A}\in F^{\Imc_{\Omc_{1}\cup \Omc_{2},A}}$. By Lemma~\ref{lem:simulationel}, $d\in F^{\Jmc}$.
	We have derived a contradiction to the condition that $d\not\in B^{\Jmc}$,
	$\Jmc$ is a model of $\Omc_{1}\cup\Omc_{2}$, and $\Omc_{1}\cup \Omc_{2} \models F \sqsubseteq B$.
	
	Assume no $\Lmc$-interpolant exists. 
	Let
	$$
	\Gamma = \{ C \in\Lmc(\Sigma) \mid \rho_{A} \in C^{\Imc_{\Omc_{1}\cup \Omc_{2},A}}\}
	$$
	By Lemma~\ref{lem:can2} and compactness, there exists a model $\Jmc$ of $\Omc_{1}\cup \Omc_{2}$ and $d\in \Delta^{\Jmc}$ such that $d\in C^{\Jmc}$ for all $C\in \Gamma$ but
	$d\not\in B^{\Jmc}$. We may assume that $\Jmc$ is $\omega$-saturated.\footnote{See~\cite{modeltheory} for an introduction to $\omega$-saturated interpretations and their properties.} Thus,
	by a straightforward gneralization of Lemma~\ref{lem:simulationel} from finite to $\omega$-saturated interpretations, $(\Imc_{\Omc_{1}\cup \Omc_{2},A},\rho_{A}) \preceq_{\Lmc,\Sigma} (\Jmc,d)$, and $\Jmc$ satisfies the conditions of the lemma.
\end{proof}
The characterization provided in Lemma~\ref{lem:crit} can be checked in polynomial time. Consider a fresh concept name $X_{d}$ for each $d \in \Delta^{\Imc}$ for $\Imc= \Imc_{\Omc_{1}\cup \Omc_{2},A}$.
We define the \emph{$\ELO_{u}(\Sigma)$ diagram $\Dmc(\Imc)$} of $\Imc$ as the ontology consisting of the following CIs:
\begin{itemize}
	\item $X_{d} \sqsubseteq A$, for every $A\in \Sigma$ and $d \in A^{\Imc}$;
	\item $X_{b^{\Imc}}  \sqsubseteq \{b\}$, for every $b\in \Sigma$;
	\item $X_{d} \sqsubseteq \exists r.X_{d'}$, for every $r\in \Sigma$ and $(d,d')\in r^{\Imc}$;
	\item $X_{d} \sqsubseteq \exists u.X_{d'}$, for every $d, d' \in \Delta^{\Imc}$.
\end{itemize}

Denote by $\Imc_{|\Sigma}$ the $\Sigma$-reduct of the interpretation $\Imc$.
Now it is straightforward to show that there exists a model $\Jmc$ of $\Omc_{1}\cup \Omc_{2}$ and $d\in \Delta^{\Jmc}$ such that the conditions of Lemma~\ref{lem:crit} hold for $\Lmc=\ELOu$ iff 
$\Omc_{1}\cup \Omc_{2} \cup \Dmc((\Imc_{\Omc_{1}\cup\Omc_{2},A})_{|\Sigma})\not\models X_{\rho_{A}}\sqsubseteq B$. The latter condition can be checked in polynomial time. If we aim at interpolants without the universal role we simply remove the CIs of the final item from the definition of $\Dmc(\Imc)$, denote the resulting set of inclusions by $\Dmc'(\Imc)$ and have that there exists a model $\Jmc$ of $\Omc_{1}\cup \Omc_{2}$ and $d\in \Delta^{\Jmc}$ such that the conditions of Lemma~\ref{lem:crit} hold for $\Lmc=\mathcal{ELO}$ iff 
$\Omc_{1}\cup \Omc_{2} \cup \Dmc'((\Imc_{\Omc_{1}\cup\Omc_{2},A})_{|\Sigma})\not\models X_{\rho_{A}}\sqsubseteq B$.

\bigskip

\paragraph{Directed Unfolding of ABox.}
We give a precise definition of the directed unfolding of an ABox. 
Let $\Amc$ be a factorized $\Sigma$-ABox and $\Gamma=\NI \cap \Sigma$. The \emph{directed unfolding of $\Amc$} into a ditree-shaped ABox $\Amc^{u}$ modulo $\Gamma$ is defined as follows. The individuals of $\Amc^{u}$ are 
the words $w=x_{0}r_{1}\cdots r_{n}x_{n}$ with $r_{1},\ldots,r_{n}$ role names and 
$x_{0},\ldots x_{n}\in \text{ind}(\Amc)$ such that $\{a\}(x_{i})\not\in\Amc$ for any $i\not=0$ and $a\in \Gamma$ and 
$r_{i+1}(x_{i},x_{i+1})\in \Amc$ for all $i<n$. We set $\text{tail}(w)=x_{n}$ and define 
\begin{itemize}
	\item $A(w)\in \Amc^{u}$ if $A(\text{tail}(w))\in \Amc$, for $A\in \NC$;
	\item $r(w,wrx)\in \Amc^{u}$ if $r(\text{tail}(w),x)\in \Amc$ and 
	$r(w,x)\in \Amc^{u}$ if $\{a\}(x)\in \Amc$ for some $a\in \Gamma$ and $r(\text{tail}(w),x)\in \Amc$, for $r\in \NR$;
	\item $\{a\}(x)\in \Amc^{u}$ if $\{a\}(x)\in \Amc$, for $a\in \Gamma$ and $x\in \text{ind}(\Amc)$.
\end{itemize}

\paragraph{Derivation Trees.}
Fix an $\mathcal{ELRO}_{u}$-ontology $\Omc$ in normal form, a $\text{sig}(\Omc)$-ABox $\Amc$, and recall the definition of $\Delta$ and $\Theta$. Let $(a,C)\in \Delta \times \Theta$.
A \emph{derivation tree} for the assertion $(a,C)$ in $\Omc,\Amc$ is a finite $\Delta \times \Theta$-labeled
tree $(T,V)$, where $T$ is a set of nodes and $V : T \to \Delta\times \Theta$ the labeling function, 
such that
\begin{itemize}	
	\item $V(\varepsilon)=(a,C)$;	
	\item if $V(n)=(a,C)$, then (i) $a\in \text{ind}(\Amc)$ and $C=\top$ or (ii) $C(a)\in \Amc$ or (iii) $a\in \NI$ and 
              $C = \{a\}$ or 
	\begin{enumerate}  
		\item $a=C=A$ for a concept name $A$ and $n$ has a successor $n'$ with $V(n') = (b,A)$; or
		\item $a=C=A$ for a concept name $A$ and $n$ has a successor $n'$ such that $V(n')= (b,C')$ and 
		$\Omc\models C' \sqsubseteq \exists u.A$; or
		\item $n$ has successors $n_1,n_{2}$ with $V(n_i) = (a,C_i)$ for $i=1,2$ and 
		and $\Omc\models C_{1} \sqcap C_{2} \sqsubseteq C$; or
		\item $n$ has successors $n_1,n_2,n_3$ with $V(n_1) = (b,C)$,
		$V(n_2) = (a,\{c\})$, and $V(n_3) = (b,\{c\})$; or
		\item the conditions of the rule for RIs discussed in the main paper hold:
                there are role names $r_{2},\ldots,r_{2k-2},r$ and members $a=a_{1},\ldots, a_{2k}$
                of $\Delta$ such that $(a_{2k},C')$ is a label of a successor of $n$, 
                $\Omc\models \exists r.C'\sqsubseteq C$, $\Omc\models r_{2}\circ \cdots \circ r_{2k-2}\sqsubseteq r$,
                and the situation depicted in Figure~\ref{figure:dtree} holds,
                where the ``dotted lines'' stand for `either $a_{i}=a_{i+1}$ or some 
                $(a_{i},\{c\}),(a_{i+1},\{c\})$ with $c\in \NI$ are labels of successors of $n$', and $\hat{r}_i$ 
                stands for `either $r(a_{i},a_{i+1})\in \Amc$ or some $(a_{i},C_{i})$ is a label of a successor 
                of $n$ and $\Omc \models C_{i}\sqsubseteq \exists r_{i}.\{a_{i+1}\}$ if $a_{i+1}\in \NI$ and 
                $\Omc \models C_{i}\sqsubseteq \exists r_{i}.a_{i+1}$ if $a_{i+1}\in \NC$'. Moreover, for all
                $a_{i}\not=a$, $1< i \leq 2k$, there exists a successor of $n$ with label $(a_{i},D)$ for some $D$; or
                \item $n$ has a successor $n'$ with $V(n')=(b,C')$ and $\Omc\models \exists u.C'\sqsubseteq C$.
	\end{enumerate}
\end{itemize}
The purpose of Conditions~1 and 2 is to establish that it follows from $\Omc$ and $\Amc$ that $A$ is not empty. 
In this case $(A,A)$ is derived. The purpose of the remaining rules should be clear.
\begin{example}\label{exm:deriv}
	We use the ontology from Example~\ref{exm:succ}. Recall that 
	\begin{eqnarray*}
		\Omc_{p}	& = & \{r_{i}\circ r_{i} \sqsubseteq r_{i+1} \mid 0\leq i <n\}\cup\\
		&   & \{A \sqsubseteq \exists r_{0}.B, B \sqsubseteq \exists r_{0}.B, \exists r_{n}.B \sqsubseteq A\}
	\end{eqnarray*}	
	Then $\Imc_{\Omc_{p},A}$ is defined by setting 
	\begin{eqnarray*}
		\Delta^{\Imc_{\Omc_{p},A}} & = & \{\rho_{A},y\}\\
		A^{\Imc_{\Omc_{p},A}}   & = & \{\rho_{A}\} \\
		B^{\Imc_{\Omc_p,A}}   & = & \{y\} \\
		r_{i}^{\Imc_{\Omc_{p},A}} & = & \{(\rho_{A},y),(y,y)\}, \text{ for $0\leq i \leq n$.}
	\end{eqnarray*}
	Recall that $\Sigma=\{r_{0},B\}$ and that $\exists r_{0}^{2^{n}}.B$ is an explicit definition of $A$ using $\Sigma$ 
        under $\Omc_{p}$. 
        Consider the ABox $\Amc_{|\Sigma}$ corresponding
	to the $\Sigma$-reduct of $\Imc_{\Omc_{p},A}$. Then a derivation tree $(T,V)$ for $(\rho_{A},A)$
	in $\Omc_{p},\Amc_{|\Sigma}$ is defined by setting $V(\epsilon)=(\rho_{A},A)$ and taking a single successor 
        $n$ of $\epsilon$ with $V(n)=(y,B)$. In the notation of Rule 5, we have $a_{1}=a_{2}=\rho_{A}$ 
        and $a_{3}=\cdots=a_{2^{n}}=y$. We use that $\Omc_{p} \models r_{0}^{2^{n}}\sqsubseteq r_{n}$ and 
        $\Omc_{p}\models \exists r_{n}.B\sqsubseteq A$. 
\end{example}

We next show Part~1 of Lemma~\ref{lem:deriv0}.

\medskip
\noindent
{\bf Proof of Part 1 of Lemma~\ref{lem:deriv0}.}
Let $\Omc$ be an $\mathcal{ELRO}_{u}$-ontology in normal form and 
$\Amc$ a finite $\text{sig}(\Omc)$-ABox. Assume $(x,A)$ with $x\in \text{ind}(\Amc)$ and $A\in \Theta$ is given.
It is straightforward to show by induction that if there is a derivation tree for $(x,A)$ in $\Omc,\Amc$, then $\Omc,\Amc\models A(x)$. We construct a sequence of ABoxes $\Amc_0,\Amc_1,\ldots$ as follows.
Define $\A_0$ as the union of $\Amc$ and all assertions $\{a\}(a)$ with $a$
an individual name in $\Omc$ and $\top(x)$ with $x\in \text{ind}(\Amc)$. Let $\Amc_{i+1}$ be obtained from $\Amc_i$ by
	applying one of the following rules: 
	\begin{enumerate}
		\item if $A(b)\in \Amc_{i}$, then add $A(A)$ to $\Amc_{i}$;
		\item if $C'(b)\in \Amc_{i}$ and $\Omc\models C' \sqsubseteq \exists u.A$, then add $A(A)$ to $\Amc_{i}$;
		\item if $C_{1}(a),C_{2}(a) \in \Amc_{i}$ and $\Omc\models C_{1} \sqcap C_{2} \sqsubseteq C$, then add $C(a)$ to $\Amc_{i}$;
		\item if $C(b), \{c\}(a), \{c\}(b)\in \Amc_{i}$, then add $C(a)$ to $\Amc_{i}$;
		\item if there is a sequence  $a_{1},\ldots,a_{2k}$ of elements of $\Delta$ and a sequence 
                      $r_{2},r_{4},\ldots,r_{2k-2}$ of role names such that $a=a_{1}$ and for every $a_{2j+1}$ either 
		$a_{2j+1}=a_{2j+2}$ or there is $c$ with $\{c\}(a_{2j+1}), \{c\}(a_{2j+2})\in \Amc_{i}$ 
		such that for every $a_{2j}$:
		\begin{itemize}
			\item $r_{2j}(a_{2j},a_{2j+1})\in \Amc$; 
			or
			\item $a_{2j+1}\in \NI \cap \text{sig}(\Omc)$ and there exists 
                              $C_{2j} \in (\NC\cup \NI)\cap \text{sig}(\Omc)$
			such that $C_{2j}(a_{2j})\in \Amc_{i}$ and 
			$\Omc\models C_{2j} \sqsubseteq \exists r_{2j}.\{a_{2j+1}\}$; or
			\item $a_{2j+1}\in \NC \cap \text{sig}(\Omc)$ and there exists $C_{2j} \in (\NC\cup \NI)\cap \text{sig}(\Omc)$ 
			such that $C_{2j}(a_{2j})\in \Amc_{i}$ and 
			$\Omc\models C_{2j} \sqsubseteq \exists r_{2j}.a_{2j+1}$
		\end{itemize}
		and there exist $C'\in (\NC\cup \NI)\cap \text{sig}(\Omc)$ and a role name $r$ 
		such that $C'(a_{2k})\in \Amc_{i}$, $\Omc\models \exists r.C' \sqsubseteq A$, and 
		$\Omc\models r_{2}\circ r_{4} \circ \ldots \circ r_{2k-2}\sqsubseteq r$, then add $A(a)$ to $\Amc_{i}$.
               \item if $C'(b)\in \Amc_{i}$ and $\Omc\models \exists u.C'\sqsubseteq C$, then add $C(a)$ to $\Amc_{i}$.	
\end{enumerate}
	Note that the sequence is finite, and denote by $\Amc^*$ the final ABox.
	
	\smallskip\noindent\textit{Claim.} There is a model $\Imc,v$ of
	$\Amc^*$ and \Omc such that for all $x\in \text{ind}(\Amc)$ and $A\in\NC$,
	$v(x)^\Imc \in A^\Imc$ implies $A(x)\in \Amc^*$.
	
	\smallskip\noindent\textit{Proof of the Claim.}
	For all $a,b \in \mn{ind}(\Amc^*)$,
	we write $a \sim b$ if $a=b$ or $\{c\}(a), \{c\}(b) \in \Amc^*$ for some $c$.
	Notice that due to Rule 4, $a \sim b$ implies $C(a) \in \Amc^*$ if and only
	if $C(b) \in \Amc^*$. It follows that $\sim$ is an equivalence relation.
        We let $[a]$ denote the equivalence class of~$a$.
	Start with an interpretation $\Imc_0$ defined by:
	\begin{align*}
		\Delta^{\Imc_0} & = \mn{ind}(\Amc^*)/{\sim} \\
		A^{\Imc_0} & = \{[a]\mid A(a)\in \Amc^*\} \\
		a^{\Imc_0} & = \{[a]\} \\
		r^{\Imc_0} & = \{([a],[b])\mid \exists a' \in [a], b' \in [b].\ 
		r(a',b')\in \Amc^*\} \, .
	\end{align*}
	
	By definition, $\Imc_{0}$ satisfies all CIs in $\Omc$ that do not involve role names or the universal role. 
        We next extend $\Imc_{0}$ by adding pairs of the form $([a],[b])$ with $b\in \NC \cup \NI$ to the 
        interpretation of role names.
	In detail, if $[a]\in \Delta^{\Imc_{0}}$ and there exist $C\in \NC\cup \NI$ with $a\in C^{\Imc_{0}}$ and 
        $c\in \NI$ with $c\in [b]$ such that $\Omc\models C \sqsubseteq \exists r.\{c\}$, 
        then add $([a],[b])$ to $r^{\Imc_{0}}$. 
        Also, if $[a]\in \Delta^{\Imc_{0}}$ and there exist $C\in \NC\cup \NI$ with $a\in C^{\Imc_{0}}$ and 
        $A\in \NC$ with $A\in [b]$ such that $\Omc\models C \sqsubseteq \exists r.A$, then add $([a],[b])$ to $r^{\Imc_{0}}$. 
        Finally, add any pair $([a],[b])$ to $r^{\Imc_{0}}$ if there exists an RI
	$r_{1} \circ \cdots \circ r_{n} \sqsubseteq r$ that follows from $\Omc$ such that 
	$([a],[b])$ is in relation $r_{1} \circ \cdots \circ r_{n}$ under the updated interpretations of $r_{1},\ldots,r_{n}$. 
        This defines an interpretation $\Imc$. By Rule~2 all CIs of the form $A \sqsubseteq \exists r.B$ are satisfied in 
        $\Imc$. By definition, all RIs in $\Omc$ are satisfied in $\Imc$. By Rules~5 and 6, all CIs of the form 
        $\exists r.B \sqsubseteq A$ are satisfied as well. This finishes the proof of the claim.
	
	\medskip Now suppose $\Omc,\Amc\models A_0(x_0)$.
	By the Claim, we have $A_0(x_0)\in \Amc^*$.
	Since the six rules to construct $\Amc_0,\Amc_1,\ldots$ are in
	one-to-one correspondence with Conditions~(1)--(6) from the
	definition of derivation trees, we can inductively construct a
	derivation tree for $A_0(x_0)$ in \Amc w.r.t.\ \Omc.
      
        The remaining claims made in Part~1 of Lemma~\ref{lem:deriv0} have been shown in the main paper already.
\qed

We next come to Part~2 of Lemma~\ref{lem:deriv0}. The following example illustrates
how one can construct from a derivation tree of $A(x)$ in $\Omc,\Amc$ a derivation tree in 
$\Omc,\Amc^{u}$ with $\Amc^{u}$ the directed unfolding of $\Amc$. The derivation tree has the same depth 
but the outdegree might be exponential. 

\begin{example}
	Recall the ontology $\Omc_{p}$ and concept name $A$ from Example~\ref{exm:deriv}. We consider
        the $\Sigma$-reduct $\Amc_{|\Sigma}$ of the ABox $\Amc$ corresponding to the canonical model $\Imc_{\Omc_{p},A}$.
        It is defined by $\Amc_{|\Sigma}=\{r_{0}(\rho_{A},y),r_{0}(y,y),B(y)\}$. The directed unfolding 
	$\Amc_{|\Sigma}^{u}$ has individuals 
$$
\rho_{A}, \quad \rho_{A}r_{0}y, \quad \rho_{A}r_{0}yr_{0}y,\quad \ldots
$$
and the assertions
$$
B(\rho_{A}r_{0}y), \quad B(\rho_{A}xr_{0}yr_{0}y),\quad \ldots
$$
$$
r_{0}(\rho_{A},\rho_{A}r_{0}y),\quad r_{0}(\rho_{A}r_{0}y,\rho_{A}r_{0}yr_{0}y),\quad \ldots
$$
In a derivation tree $(T',V')$ for $A(\rho_{0})$ in $\Omc_{p},\Amc_{|\Sigma}$ we require 
that $\epsilon$ has $2^{n}$ successors labeled with:
	$$
	(\rho_{A}r_{0}y,B),\quad (\rho_{A}r_{0}yr_{0}y,B),  
	\ldots, \quad (\rho_{A}(r_{0}y)^{2n},B).
	$$
\end{example}
We now give the general construction of the derivation tree in the directed unfolding from a derivation tree
in the original ABox.

\medskip
\noindent
{\bf Proof of Part 2 of Lemma~\ref{lem:deriv0}.}
Assume that $(T,V)$ is a derivation tree for $A(x)$ in $\Omc,\Amc$ of at most exponential size.
We obtain a very similar derivation tree $(T',V')$ for $A(x)$ in $\Omc,\Amc^{u}$ with $\Amc^{u}$ the directed unfolding of $\Amc$ modulo $\Sigma=\text{sig}(\Amc) \cap \NI$. In fact, with the exception of 
Condition~5, the construction is identical. For Condition~5, one potentially has to introduce "copies"
of the nodes in $T$ which correspond to the fresh individuals introduced in the unfolded ABox.

In the following construction of $(T',V')$ the following holds: if the label of $n$ in $(T,V)$ is $(a,C)$, 
then the label of copies $n'$ of $n$ in $(T',V')$ takes the form $(w,C)$ with $\text{tail}(w)=a$. Moreover,
if $\{b\}(a)\in \Amc$ for some $b\in \Sigma\cap \NI$ or $a\in \NI \cup \NC$, then the label of $n'$ 
is identical to the label of $n$. Note $V'$ is a mapping form $T'$ to $\Delta'\times \Theta$ with
$$
\Delta'=\text{ind}(\Amc^{u}) \cup ((\NC \cup \NI) \cap \text{sig}(\Omc))
$$
In detail, we define $(T',V')$ as follows from $(T,V)$, starting with the 
root by setting $V'(\epsilon):=V(\epsilon)=(x,A)$. 
	
Assume inductively that $m$ is a copy of $n$, $V(n)=(a,C)$, and $V'(m)=(w,C)$. 
To define the successors of $m$ and their labelings we consider the possible derivation steps for 
$(a,C)$ in $\Omc,\Amc$: (i) if $a\in \text{ind}(\Amc)$ and $C=\top$, then $w\in \text{ind}(\Amc^{u})$ and $C=\top$;
(ii) if $C(a)\in \Amc$, then $C(w)\in \Amc^{u}$;
(iii) if $a\in \NI$ and $C=\{a\}$, then $V'(m)=(a,\{a\})$. We next consider the cases 1 to 6:    
	\begin{enumerate}  
		\item $a=C=A$ for a concept name $A$ and $n$ has a successor $n'$ with $V(n') = (b,A)$: 
                take a copy $m'$ of $n'$ as the only successor of $m$ and set $V'(m')=(b,A)$.
		\item $a=C=A$ for a concept name $A$ and $n$ has a successor $n'$ such that $V(n')= (b,C')$ and 
		$\Omc\models C' \sqsubseteq \exists u.A$: take a copy $m'$ of $n'$ as the only successor of 
                $m$ and set $V'(m')=(b,C)$. 
		\item $n$ has successors $n_1,n_{2}$ with $V(n_i) = (a,C_i)$ and 
		and $\Omc\models C_{1} \sqcap C_{2} \sqsubseteq C$: take copies $m_{1},m_{2}$ of $n_{1},n_{2}$ 
		as the successors of $m$ and set $V'(m_i) = (w,C_i)$.
		\item $n$ has successors $n_1,n_2,n_3$ with $V(n_1) = (b,C)$,
		$V(n_2) = (a,\{c\})$, and $V(n_3) = (b,\{c\})$: 
		take copies $m_{1},m_{2},m_{3}$ of $n_{1},n_{2},n_{3}$ as successors of $m$ and set $V(m_1) = (b,C)$,
		$V(m_2) = (w,\{c\})$, and $V(m_3) = (b,\{c\})$.
		\item Suppose that $n$ has successors such that the conditions of Point~5 for derivation trees 
                hold for $r_{2},\ldots,r_{2k-2},r$ and members $a=a_{1},\ldots,a_{2k}$ of $\Delta$.
                We define the new members $b_{1},\ldots,b_{2k}$ of $\Delta'$ and relevant successors of $m$ with labeling 
                by induction. We set $b_{1}=w$.
                Assume that $b_{2i+1}$ has been defined and $(b_{2i+1},D)$ is the label of a copy of a successor of 
                $n$ with label $(a_{2i+1},D)$.

                Case 1. $a_{2i+1}=a_{2i+2}$. Then we set $b_{2i+2}:=b_{2i+1}$.

                Case 2. There exists $c\in \NI$ and successors $n_{1},n_{2}$ of $n$ with $V(n_{1})=(a_{2i+1},\{c\})$ 
                        and $V(n_{2})=(a_{2i+2},\{c\})$. Then we let $b_{2i+2}:=a_{2i+2}$ and we introduce copies 
                        $m_{1},m_{2}$ of $n_{1},n_{2}$ with $V'(m_{1})=(b_{2i+1},\{c\})$ and $V'(m_{2})=(a_{2i+2},\{c\})$.  

                Now assume that $b_{2i}$ has been defined and $(b_{2i},D)$ is the label of a copy of a successor 
                of $n$ with label $(a_{2i},D)$.

                Case 1. $r_{2i}(a_{2i},a_{2i+1})\in \Amc$ and $V(n')=(a_{2i+1},D')$ for some successor $n'$ of $n$. 
                If $\{b\}(a_{2i+1})\in \Amc$ for some $b\in \NI$, 
                then we set $b_{2i+1}=a_{2i+1}$ and introduce a copy $m'$ of $n$
                and set $V'(m')=(a_{2i+1},D')$. Observe that $r_{2i}(b_{2i},a_{2i+1})\in \Amc^{u}$.
                Otherwise (if no $b$ with $\{b\}(a_{2i+1})\in \Amc$ exists), we set
                $b_{2i+1}=b_{2i}r_{2i}a_{2i+1}$ and introduce a copy $m'$ of $n'$
                and set $V'(m')=(b_{2i+1},D')$.

                Case 2. $a_{2i+1}\in (\NI \cup \NC) \cap \text{sig}(\Omc)$, $V(n_{1})=(a_{2i+1},D')$ for some successor 
                $n_{1}$ of $n$, and $V(n_{2})=(a_{2i},F)$ for a successor $n_{2}$ of $n$ and $\Omc\models F \sqsubseteq \exists r_{2i}.a_{2i+1}$ (if $a_{2i+1}\in \NC$) or $\Omc\models F \sqsubseteq \exists r_{2i}.\{a_{2i+1}\}$ (if $a_{2i+1}\in \NI$), respectively.
                Then we introduce copies $m_{1},m_{2}$ of $n_{1},n_{2}$ and set 
                $b_{2i+1}=a_{2i+1}$, $V'(m_{1})=(b_{2i+1},D')$, and $V'(m_{2})=(b_{2i},F)$. 
           \item $n$ has a successor $n'$ with $V(n')=(b,C')$ and $\Omc\models \exists u.C'\sqsubseteq C$: then
                introduce a copy $m'$ of $n'$ and set $V'(m')=(b,C)$.  
	\end{enumerate}
	Then $(T',V')$ is a derivation tree for $A(x)$ in $\Omc,\Amc^{u}$ satisfying the 
        conditions of the lemma.
\qed

The proof of ``2. $\Rightarrow$ 3.'' of Theorem~\ref{thm:critele} is now as sketched in the main paper. Note also that we can construct $\Amc$ in exponential time since we can construct the derivation tree for $B$ in $\Amc_{\Omc_{1}\cup \Omc_{2},A}^{\Sigma}$ in exponential time, then lift it to a derivation tree in its unfolding in exponential time, and from that derivation tree obtain the individuals in the ABox $\Amc$ in exponential time.

A proof of the statement of Theorem~\ref{thm:critele} for interpolants without the universal role is obtained from the proof above in a straightforward way.

\bigskip

We conclude this section with a deferred proof of Theorem~\ref{th:safety}.
%***********************
        \begin{figure*}[t]
		\begin{eqnarray*}
			\Delta^{\Jmc_{i+1}} & = & \Delta^{\Jmc_{i}},\\
			A^{\Jmc_{i+1}} & = & A^{\Jmc_{i}}, \textrm{for all $A \in \NC$},\\
			r^{\Jmc_{i+1}} & = & r^{\Jmc_{i}}\cup 
			\left\{(d_1,d_{n+1})\left|
			\begin{array}{l}
				r_1\circ\dots\circ r_n\sqsubseteq r\in\Omc_1\cup\Omc_2\\
				\{d_1,\dots,d_{n+1}\}\subseteq \Delta^{\Jmc_i},
				(d_1,d_{n+1})\notin r^{\Jmc_i}\\ 
				(d_k,d_{k+1})\in r_k^{\Jmc_i} \textrm{ for all $1\leq k\leq n$}
			\end{array}
			\right. \right\}
		\end{eqnarray*}
        \caption{Definition of $\Jmc_{i+1}$.\label{fig:jmc}}
        \end{figure*}
%***********************

\medskip
\noindent
{\bf Theorem~\ref{th:safety}.}
{\em 
 	Let $\Omc_{1}, \Omc_{2}$ be $\mathcal{EL}$-ontologies with RIs,
 	$C_{1},C_{2}$ be $\mathcal{EL}$-concepts, and set $\Sigma=\text{sig}(\Omc_{1},C_1) \cap \text{sig}(\Omc_{2},C_2)$.
 	Assume that the set of RIs in $\Omc_1\cup\Omc_2$ is safe for $\Sigma$ and 
 	$\Omc_1\cup\Omc_2\models C_1\sqsubseteq C_2$.
 	Then an $\mathcal{EL}$-interpolant for $C_1\sqsubseteq C_2$ under
 	$\Omc_1$, $\Omc_2$ exists.
}
\begin{proof}
	For convenience of notation, we assume w.l.o.g., by Lemma~\ref{lem:normal}, that 
    $\Omc_1$ and $\Omc_2$ are in normal form,
    $A\in\sig(\Omc_1)$,
	$B\in\sig(\Omc_2)$ and $\{A,B\}\cap \Sigma=\emptyset$.
	Suppose for a proof by contradiction that $\Omc_1\cup\Omc_2\models
	A\sqsubseteq B$ but there exists no $\mathcal{EL}$-interpolant for
	$A\sqsubseteq B$. Then $\Omc_{1}\cup \Omc_{2},\Amc_{\Omc_{1}\cup
		\Omc_{2},A}^{\downarrow\Sigma}\not\models B(\rho_{A})$.
	Moreover, since the language under consideration contains neither
	nominals nor the universal role, this strengthens to
	$\Omc_{1}\cup \Omc_{2},\Amc_{\Omc_{1}\cup \Omc_{2},A}^{\Sigma}
	\not\models B(\rho_{A})$.

	Let ${\Jmc_0}$ be the canonical model of $\Omc_1\cup\Omc_2$ and
	$\Amc_{\Omc_{1}\cup \Omc_{2},A}^{\Sigma}$.  
	In what follows, we identify the domain of $\Imc_{\Omc_{1}\cup
		\Omc_{2},A}^{\Sigma}$ and individuals of $\Amc_{\Omc_{1}\cup
		\Omc_{2},A}^{\Sigma}$, and consider both to be subsets of the domain of
	${\Jmc_0}$.
	By the properties of the
	canonical model, we then have $\rho_A\notin B^{\Jmc_0}$.  
	Furthermore,
	as $\Imc_{\Omc_{1}\cup \Omc_{2},A}$ is a model for both $\Omc_1\cup\Omc_2$ and
	$\Amc^\Sigma_{\Omc_{1}\cup \Omc_{2},A}$,
	there exists a $\sig(\Omc_1\cup\Omc_2)$-simulation $S$ between $\Jmc_{0}$ and
	$\Imc_{\Omc_{1}\cup \Omc_{2},A}$ such that $(x,x)\in S$ 
	for all $x\in\Delta^{\Imc_{\Omc_{1}\cup \Omc_{2},A}}$.

	Consider an interpretation $\Jmc_1$ defined as follows:
	\begin{eqnarray*}
		\Delta^{\Jmc_1} & = & \Delta^{\Jmc_0},\\
		P^{\Jmc_1} & = & P^{\Jmc_0}\cup P^{\Imc_{\Omc_{1}\cup \Omc_{2},A}}, 
		\textrm{for all $P \in (\sig(O_1)\setminus{\Sigma})$},\\
		P^{\Jmc_1} & = & P^{\Jmc_0}, \textrm{for all $P \notin
			(\sig(O_1)\setminus{\Sigma})$},
	\end{eqnarray*}
	where $P$ is a concept or role name. 
	If $\Jmc_1\models \Omc_1\cup\Omc_2$ we immediately derive a contradiction as we
	then have $\rho_A\in\A^{\Jmc_1}$ and $\rho_A\notin B^{\Jmc_1}$, contradicting
	$\Omc_1\cup\Omc_2\models A\sqsubseteq B$.

	\begin{itemize}
        \item If $\Omc_1\cup\Omc_2$ does not contain RIs, as ${\Jmc_0}$ and
        $\Jmc_1$ are identical on all elements except
        $\Delta^{\Imc_{\Omc_{1}\cup \Omc_{2},A}}$, for all
		$x\in\Delta^{\Imc_{\Omc_{1}\cup \Omc_{2},A}}$
		the relation $S$ is a $\sig(\Omc_1\cup\Omc_2)$-simulation 
		between 
		$\Jmc_1$ and $\Imc_{\Omc_{1}\cup \Omc_{2},A}$. Conversely, the embedding of
		$\Imc_{\Omc_{1}\cup \Omc_{2},A}$ into $\Jmc_1$ generates a simulation, that is 
		$
		(\Imc_{\Omc_{1}\cup \Omc_{2},A}, x)
		\preceq_{\mathcal{EL},\sig(\Omc_1)}
		(\Jmc_1, x)
		$ for all $x\in\Delta^{\Imc_{\Omc_{1}\cup \Omc_{2},A}}$.
		By Lemma~\ref{lem:simulationel}, for any $\sig(O_1)$-$\mathcal{EL}$-concept $C$
		and for all $x\in\Delta^{\Imc_{\Omc_{1}\cup \Omc_{2},A}}$ we have $x\in
		C^{\Jmc_1}$ if, and only if $x\in C^{\Imc_{\Omc_{1}\cup \Omc_{2},A}}$.  Thus,
		$\Jmc_1$ is a model of CIs in $\Omc_1$.
		By construction $\Jmc_1\models \Omc_2$.
		
		\item 
        Suppose that $\Omc_1\cup\Omc_2$ contains RIs.
        Since the interpretation $\Jmc_1$ may not satisfy some RIs, we consider
        a sequence of interpretations $\Jmc_i$ obtained by extending the
        interpretations of roles in
		$\Jmc_1$ to satisfy RIs.  
        We give the construction of 
        $\Jmc_{i+1}$, for $i\geq 1$, in Figure~\ref{fig:jmc}.

		A simple inductive argument shows that 
		by the safety condition and the fact that $(d_1,d_{n+1})\notin
		r^{\Jmc_i}$ we have that $\{r_1,\dots,r_n, r\}\subseteq\sig(\Omc_1)$. 
		
		Furthermore, we prove by induction that
		the relation $S$  is
		a $\sig(\Omc_1\cup\Omc_2)$-simulation between
		$\Jmc_{i+1}$ and $\Imc_{\Omc_{1}\cup \Omc_{2},A}$.
		For $i=1$ this has been established above. 
		For the induction stop it suffices to consider $r$-successors of $d_1$ in
		$\Jmc_{i+1}$, where $r$ is from the definition of $\Jmc_{i+1}$ above.
		By the induction hypothesis, $S$ is a a $\sig(\Omc_1\cup\Omc_2)$-simulation
		between $\Jmc_{i}$ and $\Imc_{\Omc_{1}\cup \Omc_{2},A}$.
		Then there exist
		$\{v_2,\dots,v_{n+1}\}\subseteq\Delta^{\Imc_{\Omc_{1}\cup \Omc_{2},A}}$ with
		$(d_{j+1},v_{j+1})\in S$ and $(v_1,v_{n+1})\in {r_i}^{\Imc_{\Omc_{1}\cup \Omc_{2},A}}$ 
		for $j\in\{1,\dots,n\}$.
		As ${\Imc_{\Omc_{1}\cup \Omc_{2},A}}$ is a model of $\Omc_1$, we have 
		$(v_1,v_{n+1})\in r^{\Imc_{\Omc_{1}\cup \Omc_{2},A}}$ and 
		$(d_{n+1},v_{n+1})\in S$ as required.

		As $\mathcal{EL}$ canonical models defined in this paper are finite, there exists
		$N>0$ such that for all $i>N$, $\Jmc_i = \Jmc_N$. 
		It can be seen that $\Jmc_N$ satisfies all RIs in $\Omc_1\cup\Omc_2$ and
		the satisfaction of CIs is proved similarly to the case above.  Then $\Jmc_N$
		is a model of $\Omc_1\cup\Omc_2$ with $\rho_A\in\A^{\Jmc_{N}}$ and
		$\rho_A\notin B^{\Jmc_{N}}$, contradicting $\Omc_1\cup\Omc_2\models
		A\sqsubseteq B$.
		
	\end{itemize}
\end{proof}

\section{Proofs for Section~\ref{sec:eli}}
\label{sec:proofs_eli}
The section is organized as follows. We first introduce canonical models
and derivation trees for $\mathcal{ELIO}_{u}$. We then give the automata based 
proof of the \ExpTime upper bound for interpolant existence. We then show the double exponential lower bound on the size of explicit definitions, the implication 
``2. $\Rightarrow$ 3.'' of Theorem~\ref{thm:critele2}, and that interpolants can be computed in double exponential time.

\paragraph{Canonical Models.}
Assume $\Omc$ is an $\mathcal{ELIO}_{u}$ ontology in normal form
and $A$ a concept name with $A \in \text{sig}(\Omc)$.
We introduce the canonical model $\Imc_{\Omc,A}$.
Let $\text{sub}(\Omc)$ denote the set of subconcepts of concepts in $\Omc$,
and denote by  $\text{sub}^{\exists}(\Omc)$ the set of $\exists r.\{a\}$ with 
$r$ or $r^{-}$ a role name in $\text{sig}(\Omc)$ and $a\in \text{sig}(\Omc)$.
We may assume that $\exists u.A \in \text{sub}(\Omc)$.
An \emph{$\Omc$-type} is a subset $\tau$ of $\text{sub}(\Omc)\cup \text{sub}^{\exists}(\Omc)$ 
such that $\Omc \models \bigsqcap_{C \in \tau} C \sqsubseteq C'$ implies
$C' \in \tau$ for all concepts $C'\in \text{sub}(\Omc)\cup \text{sub}^{\exists}(\Omc)$.
We sometimes identify $\tau$ and $\bigsqcap_{C \in \tau} C$.
For a role $r$, we write $\tau_1 \rightsquigarrow_r \tau_2$ if $\tau_2$
is a maximal (w.r.t.\ inclusion) $\Omc$-type such that
$\Omc \models \tau_1 \sqsubseteq \exists r. \tau_2$.
Note that the set of all $\Omc$-types and relation $\rightsquigarrow_r$ 
can be computed in exponential time.

For any concept name $B$, $\tau_{B}$ denotes the minimal $\Omc$-type containing 
$B$ and $\exists u.A$. Similarly, for any individual $a$, $\tau_{a}$ denotes
the minimal $\Omc$-type containing $\{a\}$ and $\exists u.A$.
Let $S = S_{C} \cup S_{N}$ with $
S_{C}= \{\tau_{B} \mid \Omc \models A \sqsubseteq \exists u.B\}$
and $S_{N} = \{ \tau_{a} \mid a \in \NI \cap \sig(\Omc)\}$.
The \emph{canonical model} $\Imc_{\Omc,A}$ of $\Omc$ and $A$ is defined
as follows:
\begin{eqnarray*}
\Delta^{\Imc_{\Omc,A}} 
& =  & \{ \tau_0 r_1 \tau_1 \cdots r_n \tau_n \mid 
	\tau_0 \in S, \tau_{1},\ldots,\tau_{n}\not\in S_{N}, \\
&  & r_{1},\ldots,r_{n}\in \NR\cup \NR^{-}, \tau_i \rightsquigarrow_{r_{i+1}} \tau_{i+1}\} \\
	a^{\Imc_{\Omc,A}} & = & \tau_{a} \\
	B^{\Imc_{\Omc,A}}
	& = & \{ w \mid w \in \Delta^{\Imc_{\Omc,A}}, B \in \text{tail}(w) \} \\
	r^{\Imc_{\Omc,A}}
	& = & \{ (w, w r \tau) \mid w,w r \tau \in \Delta^{\Imc_{\Omc,A_1}} \}
	\cup \\
        &   & \{ (w {r^-} \tau, w) \mid w,w r^- \tau \in \Delta^{\Imc_{\Omc,A_1}} \} 
        \cup \\
       &  & \{ r(w,\tau_{a}) \mid \exists r.\{a\}\in \text{tail}(w)\} \cup \\ 
       &  & \{ r(\tau_{a},w) \mid \exists r^{-}.\{a\}\in \text{tail}(w)\}         
\end{eqnarray*}
We also use $\rho_A$ to denote $\tau_A$.
The following properties of canonical models can be proved in a standard~way.

\begin{lemma}
	For all $\mathcal{ELIO}_u$-ontologies $\Omc$ in normal form and concept
	names $A \in \text{sig}(\Omc)$:
	\begin{enumerate}
		\item $\Imc_{\Omc,A}$ is a model of $\Omc$;
		\item for every model $\Jmc$ of $\Omc$ and any
                  $d \in \Delta^\Jmc$ with $d \in A^\Jmc$,
                   $(\Imc_{\Omc,A},\rho_{A}) \preceq_{\mathcal{ELIO}_u,\Sigma} (\Jmc,d)$;
		\item for every $\mathcal{ELIO}_u(\sig(\Omc))$-concept $C$, 
                $\Omc \models A \sqsubseteq C$
		if and only if $\rho_{A} \in C^{\Imc_{\Omc,A}}$.
	\end{enumerate}
\end{lemma}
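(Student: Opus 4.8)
The plan is to mirror the proof of Lemma~\ref{lem:canonicalel}, upgrading it to the tree-shaped (and possibly infinite) canonical models forced by inverse roles, and then to deduce Part~3 from Parts~1 and~2 together with preservation of positive concepts under simulations. For Part~1 I would verify each CI of $\Omc$ (assumed in normal form) in $\Imc_{\Omc,A}$ by a case distinction on its shape. The inclusions $\top\sqsubseteq B$ and $B_1\sqcap B_2\sqsubseteq B$ hold because every $\text{tail}(w)$ is an $\Omc$-type and hence closed under $\Omc$-consequences. For $B\sqsubseteq\exists r.B'$, with $r$ a role or inverse role: if $B\in\text{tail}(w)$ then $\exists r.B'\in\text{tail}(w)$, and choosing a maximal $\Omc$-type $\tau$ with $B'\in\tau$ and $\Omc\models\text{tail}(w)\sqsubseteq\exists r.\tau$ yields $\text{tail}(w)\rightsquigarrow_{r}\tau$; the successor $wr\tau$ (if $\tau\notin S_N$), or the jump edge to $\tau_a$ if $\tau=\tau_a$ — in which case $\exists r.\{a\}\in\text{tail}(w)$, so the edge $r(w,\tau_a)$ is present — witnesses $w\in(\exists r.B')^{\Imc_{\Omc,A}}$. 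For $\exists r.B'\sqsubseteq B$: every $r$-successor of $w$, whether produced via $\rightsquigarrow_{r}$ or via a jump edge, realises a type containing $B'$ for which $\Omc\models\text{tail}(w)\sqsubseteq\exists r.(\cdot)$, so $\exists r.B'\in\text{tail}(w)$ and hence $B\in\text{tail}(w)$. The nominal inclusions $B\sqsubseteq\{a\}$ and $\{a\}\sqsubseteq B$ are handled using that $\{a\}^{\Imc_{\Omc,A}}=\{\tau_a\}$ and that $\tau_a$ is the minimal $\Omc$-type containing $\{a\}$ and $\exists u.A$; in particular one has to check that no element other than $\tau_a$ satisfies $\{a\}$. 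Finally, the universal role is interpreted as the full relation, so $B\sqsubseteq\exists u.B'$ and $\exists u.B'\sqsubseteq B$ are satisfied using the presence of a root $\tau_{B'}\in S_C$ whenever $\Omc\models A\sqsubseteq\exists u.B'$.

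For Part~2, fix a model $\Jmc$ of $\Omc$ and $d\in A^{\Jmc}$, and define $S\subseteq\Delta^{\Imc_{\Omc,A}}\times\Delta^{\Jmc}$ by $(w,e)\in S$ iff $e\in C^{\Jmc}$ for every $C\in\text{tail}(w)$. I would first show that $S$ is total on $\Delta^{\Imc_{\Omc,A}}$ by induction on the length of $w$: a root $\tau_{B}\in S_C$ is matched using $\Omc\models A\sqsubseteq\exists u.B$ and $d\in A^{\Jmc}$; a root $\tau_a\in S_N$ is matched by $a^{\Jmc}$; and a successor $wr\tau$ or a jump target $\tau_a$ is matched from $(w,e)\in S$ using $\exists r.\tau\in\text{tail}(w)$ (resp.\ $\exists r.\{a\}\in\text{tail}(w)$) and $\Jmc\models\Omc$. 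That $S$ is then an $\mathcal{ELIO}_u(\sig(\Omc))$-simulation is immediate: the concept-name, nominal and (inverse-)role clauses come from type-closure and the successor/jump construction, the totality condition required of $\mathcal{ELIO}_u$-simulations is exactly what was just proved, and the universal-role clause holds because $u^{\Jmc}$ is total. Since $\rho_A=\tau_A$ and $d$ satisfies $A\sqcap\exists u.A$ and hence all of $\tau_A$, we have $(\rho_A,d)\in S$, giving $(\Imc_{\Omc,A},\rho_A)\preceq_{\mathcal{ELIO}_u,\Sigma}(\Jmc,d)$ for every signature $\Sigma$.

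For Part~3, the direction ``$\Omc\models A\sqsubseteq C$ implies $\rho_A\in C^{\Imc_{\Omc,A}}$'' follows from Part~1 together with $\rho_A\in A^{\Imc_{\Omc,A}}$ (as $A\in\tau_A=\text{tail}(\rho_A)$). For the converse, assume $\rho_A\in C^{\Imc_{\Omc,A}}$ and let $\Jmc$ be any model of $\Omc$ with $d\in A^{\Jmc}$; using the simulation $S$ from Part~2, a routine induction on $C$ shows that $w\in C'^{\Imc_{\Omc,A}}$ and $(w,e)\in S$ imply $e\in C'^{\Jmc}$ for all $\mathcal{ELIO}_u$-concepts $C'$ — the easy direction of Lemma~\ref{lem:simulationel}, which extends verbatim to $\mathcal{ELIO}_u$. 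Taking $w=\rho_A$, $e=d$, $C'=C$ gives $d\in C^{\Jmc}$, and since $\Jmc,d$ were arbitrary, $\Omc\models A\sqsubseteq C$.

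The step I expect to be most delicate is Part~1, and within it the nominal bookkeeping: ruling out spurious elements that satisfy $\{a\}$, checking that the jump edges $r(w,\tau_a)$ and $r(\tau_a,w)$ interact correctly with the existential restrictions, and confirming that the ABox $\Amc_{\Omc,A}$ corresponding to $\Imc_{\Omc,A}$ is indeed tree-shaped modulo $\NI\cap\sig(\Omc)$. Everything else is a direct adaptation of the $\mathcal{EL}$ case and of standard $\mathcal{ELI}$ canonical-model constructions~\cite{IJCAI05-short,DBLP:books/daglib/0041477}.
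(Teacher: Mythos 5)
Your overall strategy is the one the paper intends: it gives no proof of this lemma beyond the remark that it ``can be proved in a standard way'', and the standard way is exactly to redo Lemma~\ref{lem:canonicalel} and Lemma~\ref{lem:can2} for the tree-shaped model. Parts~1 and~3 of your proposal are fine (Part~3 is precisely the argument of Lemma~\ref{lem:can2}, using Part~1 for one direction and Part~2 plus the easy half of Lemma~\ref{lem:simulationel} for the other), and you correctly single out the nominal bookkeeping in Part~1 as the delicate point.

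There is, however, a genuine gap in Part~2. You define $S$ by $(w,e)\in S$ iff $e\in C^{\Jmc}$ for every $C\in\text{tail}(w)$ and claim the inverse-role clause of the simulation is immediate. It is not: this purely type-based relation is the one used in Lemma~\ref{lem:canonicalel}, and it works there only because $\mathcal{ELO}_u$-simulations have no backward condition. With inverse roles it fails. Take $\Omc=\{A\sqsubseteq\exists r.B\}$, so that $\Delta^{\Imc_{\Omc,A}}$ contains $w=\tau_A r\tau_1$ with $B\in\tau_1$ and $(w,\tau_A)\in(r^-)^{\Imc_{\Omc,A}}$, and let $\Jmc$ consist of $d\in A^{\Jmc}$ with an $r$-successor in $B^{\Jmc}$ together with an isolated element $e\in B^{\Jmc}$ that has no $r$-predecessor. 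Then $e$ satisfies every concept in $\tau_1$ (including $\exists u.A$), so $(w,e)\in S$, but the simulation condition for $r^-$ at $(w,e)$ demands an $r$-predecessor of $e$ satisfying $\tau_A$, which does not exist. The fix is standard but must be stated: do not take the full type-based relation, but rather the relation generated by your own totality induction, i.e.\ $S$ contains exactly the pairs $(w,e)$ in which $e$ is the witness chosen when matching the path $w$ step by step from $d$ (and $(\tau_a,a^{\Jmc})$ for the nominal roots). For such pairs the backward condition along a tree edge $w\,r\,\tau$ is discharged by the element matched to the parent $w$, and the backward condition along a jump edge into $\tau_a$ is discharged by the element $e$ matched to the source $w$ of that edge, using $(e,a^{\Jmc})\in r^{\Jmc}$ from $\exists r.\{a\}\in\text{tail}(w)$. (A small related imprecision: $\exists r.\tau$ for a full type $\tau$ is not a member of $\text{tail}(w)$; the witness for the forward step comes from $\Omc\models\text{tail}(w)\sqsubseteq\exists r.\tau$, which holds by the definition of $\rightsquigarrow_r$, together with $\Jmc\models\Omc$.)
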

We use $\Amc_{\Omc,A}$ to denote the ABox associated with the canonical model 
$\Imc_{\Omc,A}$, and $\Amc_{\Omc,A}^\Sigma$ its $\Sigma$-reduct.
We denote the individuals $x_{\tau_{a}}$ and $x_{\tau_{B}}$ by $x_{a}$ and 
$x_{B}$, respectively and observe that $x_{a}=x_{b}$ iff $\Omc \models \{a\} \sqcap \exists u.A
\sqsubseteq \{b\}$ and $x_{B}=x_{a}$ iff $\Omc \models B \sqcap \exists u.A
\sqsubseteq \{a\}$.

\paragraph{Undirected Unfolding of an ABox.}
We give a precise definition of the undirected unfolding of an ABox.
Let $\Amc$ be a $\Sigma$-ABox and $\Gamma = \NI \cap \Sigma$.
The \emph{undirected unfolding of $\Amc$} into a tree-shaped ABox 
$\Amc^{\ast}$ modulo $\Gamma$ is defined as follows.
The individuals of $\Amc^{\ast}$ are 
the set of words $w=x_{0}r_{1}\cdots r_{n}x_{n}$ with $r_{1},\ldots,r_{n}$ roles and $x_{0},\ldots x_{n}\in \text{ind}(\Amc)$ such that $\{a\}(x_{i})\not\in \Amc$ for any $i\not=0$ and $a \in \Gamma$, and $r_{i+1}(x_{i},x_{i+1})\in \Amc$ if $r_{i+1}$ is a role name and
$r_{i+1}^{-}(x_{i+1},x_{i})\in \Amc$ if $r_{i+1}$ is an inverse role, for all $i<n$.
We set $\text{tail}(w)=x_{n}$ and let 
\begin{itemize}
	\item $A(w)\in \Amc^{\ast}$ if $A(\text{tail}(w))\in \Amc$, for $A\in \NC$;
	\item $r(w,wrx)\in \Amc^{\ast}$ if $r(\text{tail}(w),x)\in \Amc$ and
	$r(w,x)\in \Amc^{\ast}$ if $\{a\}(x)\in \Amc$ for some $a\in \Gamma$ and $r(\text{tail}(w),x)\in \Amc$, for $r\in \NR$;
	\item $r(wr^-x,w)\in \Amc^{\ast}$ if $r(x,\text{tail}(w))\in \Amc$ and
	$r(x,w)\in \Amc^{\ast}$ if $\{a\}(x)\in \Amc$ for some $a\in \Gamma$ and $r(x,\text{tail}(w))\in \Amc$, for $r\in \NR$;
	\item $\{a\}(x)\in \Amc^{\ast}$ if $\{a\}(x)\in \Amc$, for $a\in \Gamma$ and $x \in \text{ind}(\Amc)$.
\end{itemize}

\paragraph{Derivation Trees.}
Fix an $\mathcal{ELIO}_{u}$-ontology $\Omc$ in normal form and an ABox~\Amc,
$x_0 \in \mn{ind}(\Amc)$ and $A_0 \in \NC$.
Let $\Theta_1 = \mn{ind}(\Amc) \cup (\NI \cap \sig(\Omc))$, and
$\Theta_2 = \NC \cap \sig(\Omc) \cup \{\{a\} \mid a \in \NI \cap \sig(\Omc)\} \cup \{\exists u. A \mid A \in \NC \cap \sig(\Omc)\}$.
A \emph{derivation tree} for the assertion $A_0(x_0)$ in $\Omc,\Amc$
is a finite $\Theta_1 \times \Theta_2$-labeled tree $(T,V)$, where $T$ is a set
of nodes and $V : T \to \Theta_1 \times \Theta_2$ the labeling function, such
that:
\begin{itemize}
	
	\item $V(\varepsilon)=(x_0,A_0)$;
	
	\item If $V(n) = (x,C)$ with $x \in \mn{ind}(\Amc)$, then
          $C(x) \in \Amc$ or 
	$\Omc\models \top \sqsubseteq C$ or
	\begin{enumerate}
		\item\label{dt1} $n$ has successors $n_1,\ldots,n_k$, $k\geq 1$ with
		$V(n_i) = (a_i,C_i)$,
		such that $a_i = x$ or $a_i \in \NI \cap \sig(\Omc)$ for all $i$, and
		defining $C'_i = C_i$ if $a_i = x$, and
		$C'_i = \exists u. (\{a_i\} \sqcap C_i)$ otherwise,
		we have $\Omc\models C'_1\sqcap \ldots\sqcap C'_k\sqsubseteq C$; or
		
		\item\label{dt2} $C = \exists u.A$ and $n$ has a single successor $n'$ with
		$V(n')=(y,\exists u.A)$; or
		
		\item\label{dt3} $n$ has a single successor $n'$ with $V(n')=(y,A)$ such
		that $r(x,y)\in \Amc$ and $\Omc \models \exists r.A\sqsubseteq C$
		(where $r$ is a role name or an inverse role).
	\end{enumerate}
	
	\item If $V(n) = (a,C)$ with $a \in \NI \cap \sig(\Omc)$, then
	$C = \{a\}$ or:
	\begin{enumerate}[resume]    
		\item\label{dt4} There exists $x \in \mn{ind}(\Amc)$ such that $n$ has successors
		$n_1,\ldots,n_k$, $k\geq 1$ with $V(n_i)=(a_i,C_i)$ and $a_i = x$ or
		$a_i \in \NI \cap \sig(\Omc)$ for all $i$,
		and, defining $C'_i = C_i$ if $a_i = x$, and
		$C'_i = \exists u. (\{a_i\} \sqcap C_i)$ otherwise, we have
		$\Omc\models C'_1\sqcap \ldots\sqcap C'_k\sqsubseteq \exists u.(\{a\} \sqcap C)$.
	\end{enumerate}
\end{itemize}
Note that a special case of rule 1 is when $n$ has two successors labeled
$(x,\{a\})$ and $(a,C)$, and a special case of rule 4 is when $n$ has two
successors labeled $(x,\{a\})$ and $(x,C)$.

\medskip

We now prove the analogue of Lemma~\ref{lem:deriv0} for
$\mathcal{ELIO}_u$, except not considering the size of derivation trees.

\begin{lemma}\label{lem:derivELI}
Let $\Omc$ be an $\mathcal{ELIO}_{u}$-ontology in normal form and 
$\Amc$ a finite $\text{sig}(\Omc)$-ABox. Then 
\begin{enumerate}
\item $\Omc,\Amc \models A_0(x_0)$ if and only if there is a derivation tree for $A_0(x_0)$ in $\Omc,\Amc$.
\item If $(T,V)$ is a derivation tree for $A_0(x_0)$ in $\Omc,\Amc$, then one
  can construct a derivation tree $(T',V')$ for $A_0(x_0)$ in $\Omc, \Amc^{*}$,
  with $\Amc^*$ the undirected unfolding of $\Amc$, and such that $T = T'$.
\end{enumerate}
\end{lemma}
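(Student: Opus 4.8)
The plan for Part~1 is the standard soundness/completeness argument for a consequence-based calculus, adapted to inverse roles and nominals, mirroring the proof of Lemma~\ref{lem:deriv0}. For the ``if'' direction I would induct on the structure of a derivation tree $(T,V)$ for $A_0(x_0)$ in $\Omc,\Amc$, establishing the invariant that every node $n$ with $V(n)=(x,C)$ and $x\in\text{ind}(\Amc)$ satisfies $\Omc,\Amc\models C(x)$, while every node with first component an individual name $a$ satisfies $a^{\Imc}\in C^{\Imc}$ in every model $\Imc$ of $\Omc\cup\Amc$. The base cases ($C(x)\in\Amc$, $\Omc\models\top\sqsubseteq C$, $C=\{a\}$) are immediate, and each of rules~\ref{dt1}--\ref{dt4} follows directly from the semantics of the constructors, since all side conditions are $\Omc$-entailments. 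For the ``only if'' direction I would argue by contraposition: let $\Amc^{+}$ be the closure of $\Amc$ under rules~\ref{dt1}--\ref{dt4} (i.e.\ add $C(x)$ whenever $(x,C)$ is derivable), assume $(x_0,A_0)$ is not derivable, and build a model $\Imc$ of $\Omc\cup\Amc$ with $x_0^{\Imc}\notin A_0^{\Imc}$. I would obtain $\Imc$ from $\Amc^{+}$ by the construction used in this appendix for the canonical model: each ABox individual $x$ is assigned the $\Omc$-type generated by its $\Amc^{+}$-facts, individual names are factorised according to the nominal consequences of $\Omc$ and $\Amc^{+}$, and tree-shaped branches are attached for existentials via $\rightsquigarrow_r$, exactly as for $\Imc_{\Omc,A}$. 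It then remains to check that $\Imc\models\Omc$ --- only the CIs $\exists r.B\sqsubseteq A$ and the nominal CIs need work, and at ABox individuals they hold precisely because $\Amc^{+}$ is saturated under rule~\ref{dt3} (for every role, hence also inverse roles) and under rules~\ref{dt1} and~\ref{dt4}, while at tree individuals the argument is that already used for $\Imc_{\Omc,A}$ --- and that the concept names true at an ABox individual $x$ in $\Imc$ are exactly those $A$ with $A(x)\in\Amc^{+}$, whence $x_0^{\Imc}\notin A_0^{\Imc}$.

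For Part~2 I would keep $T'=T$ and define $V'$ by recursion from the root, maintaining the invariant that if $V(n)=(x,C)$ with $x\in\text{ind}(\Amc)$ then $V'(n)=(w_n,C)$ for a word $w_n$ of $\Amc^{*}$ with $\text{tail}(w_n)=x$, and if $V(n)=(a,C)$ with $a\in\NI\cap\text{sig}(\Omc)$ then $V'(n)=V(n)$; I set $w_\varepsilon=x_0$, the length-one word. For rules~\ref{dt1}, \ref{dt2} and~\ref{dt4} each successor either has the same first component as $n$ (and thus inherits the word $w_n$) or has an individual name as first component (and thus keeps its label); in rule~\ref{dt2} a successor may carry a new first component $y$, to which I give the length-one word ``$y$'' when $y\in\text{ind}(\Amc)$ --- legitimate because the subtree there derives only the global concept $\exists u.A$. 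For rule~\ref{dt3}, whose successor is labelled $(y,A)$ with $r(x,y)\in\Amc$, I take the new word to be the bare individual $y$ if $\{a\}(y)\in\Amc$ for some $a\in\text{sig}(\Amc)\cap\NI$, and $w_n\,r\,y$ otherwise; in either case the undirected unfolding $\Amc^{*}$ contains the matching $r$-edge (also when $r$ is an inverse role), and the side condition $\Omc\models\exists r.A\sqsubseteq C$ is unaffected. Since all side conditions mention only $\Omc$, and the auxiliary concepts $C_i'$ of rules~\ref{dt1} and~\ref{dt4} depend on the successor labels only through their concept components and through the dichotomy ``first component equal to that of $n$ versus an individual name'', every rule application of $(T,V)$ remains valid in $(T',V')$ over $\Omc,\Amc^{*}$, and $V'(\varepsilon)=(x_0,A_0)$. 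Crucially, the outdegree is preserved --- unlike in the $\mathcal{EL}$-with-role-inclusions case of Lemma~\ref{lem:deriv0} --- simply because no rule traverses more than one ABox edge.

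I expect the main obstacle to be the ``only if'' part of Part~1: one must confirm that the interpretation built over the saturated ABox is genuinely a model of $\Omc$, and --- what is delicate for $\mathcal{ELI}$ --- that its tree-shaped and nominal fragments never propagate a concept name back onto an ABox individual beyond those already derivable. This is precisely what allowing inverse roles $r$ in rule~\ref{dt3} is there for, and it is where the completeness of the calculus for inverse roles really resides. Part~2, once the word-tracking invariant is set up, is routine bookkeeping, the only delicate points being the handling of individual-name first components and of the non-expanded nominal ``sinks'' in the unfolding.
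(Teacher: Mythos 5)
Your proposal matches the paper's own proof in both structure and substance: Part~1 is handled by saturating the ABox under rules corresponding to the derivation-tree conditions and building a countermodel over the saturated ABox by factorising nominal-equivalent individuals and attaching canonical tree branches, and Part~2 keeps $T'=T$ and relabels top-down with words of the unfolding whose tails track the original first components. Your explicit case split in rule~3 for individuals carrying a nominal (which stay as bare individuals rather than being unfolded) is a detail the paper's write-up glosses over, but it is the correct reading of the unfolding definition and does not change the argument.
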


\begin{proof}
  We start with the proof of Part 1.
	$(\Leftarrow)$ is straightforward.	
	For $(\Rightarrow)$,
	we construct a sequence of ABoxes $\Amc_0,\Amc_1,\ldots$ generalized
	with assertions of the form $(\exists u.A)(x)$.
	Take $\Amc_0 = \A \cup \{ \{a\}(x_a) \mid a \in \NI \cap \sig(\Omc)\}$
	where the $x_a$'s are fresh individual variables.
	Let $\Amc_{i+1}$ be obtained from $\Amc_i$ by applying one of the following
	rule, where $C$ is a concept of the form $C \in \NC$ or $C = \{a\}$ or
	$C = \exists u.A$, and $x,y \in \mn{ind}{(\Amc_i)}$:
	\begin{enumerate}
		
		\item\label{dt-i1} if $C_1(x_1),\ldots,C_k(x_k)\in \Amc_i$, with
		$x_i=x$ or $x_i = x_{a_i}$ for some $a_i \in \NI \cap \sig(\Omc)$,
		and $\Omc\models C'_1\sqcap \ldots\sqcap C'_k\sqsubseteq C$,
		where $C'_i = C_i$ if $x_i = x$ and $C'_i = \exists u.(\{a_i\} \sqcap C_i)$
		if $x = x_{a_i}$, then add $C(x)$;
		
		\item\label{dt-i2'} if $(\exists u.A)(y) \in \Amc_i$ then add
		$(\exists u.A)(x)$;
		
		\item\label{dt-i2} if $r(x,y),A(y) \in \Amc_i$ and $\Omc \models \exists r.A \sqsubseteq C $,
		then add~$C(x)$;
		
		\item\label{dt-i3} if $C_1(x_1),\ldots,C_k(x_k)\in \Amc_i$, with
		$x_i=x$ or $x_i = x_{a_i}$ for some $a_i \in \NI \cap \sig(\Omc)$,
		and $\Omc\models C'_1\sqcap \ldots\sqcap C'_k\sqsubseteq
		{\exists u. (\{a\} \sqcap C)}$,
		where $C'_i = C_i$ if $x_i = x$ and $C'_i = \exists u.(\{a_i\} \sqcap C_i)$
		if $x = x_{a_i}$, then add $C(x_a)$.
	\end{enumerate}
	Note that the sequence is finite, and denote by $\Amc^*$ the final ABox.
	
	\smallskip\noindent\textit{Claim.} There is a model $\Imc,v$ of
	$\Amc^*$ and \Omc such that for all $x\in\mn{ind}(\Amc)$ and $A\in\NC$,
	$v(x)^\Imc \in A^\Imc$ implies $A(x)\in \Amc^*$.
	
	\smallskip\noindent\textit{Proof of the Claim.}
	For all $x,y \in \mn{ind}(\Amc^*)$,
	we write $x \sim y$ if $\{a\}(x), \{a\}(y) \in \Amc^*$ for some
	$a \in \NI \cap \sig(\Omc)$.
	Notice that if $\{a\}(x), \{a\}(y), C(x) \in \Amc^*$, then
	$C(x_a) \in \Amc^*$ by rule \ref{dt-i3}, and $C(y) \in \Amc^*$
	by rule \ref{dt-i1}.
	Therefore, $x \sim y$ implies $C(x) \in \Amc^*$ if and only
	if $C(y) \in \Amc^*$.
	In particular, $\sim$ is an equivalence relation.
	We let $[x]$ denote the equivalence class of~$x$.
	Start with an interpretation $\Imc_0$ defined by:
	\begin{align*}
		\Delta^{\Imc_0} & = \mn{ind}(\Amc^*)/{\sim} \\
		A^{\Imc_0} & = \{[x]\mid A(x)\in \Amc^*\} \\
		a^{\Imc_0} & = [x_a] \\
		r^{\Imc_0} & = \{([x],[y])\mid r(x,y)\in \Amc^*\} \, .
	\end{align*}
	Let $C_x$ denote the conjunction of all concepts of the form
	$C \in \NC$, $C = \{a\}$, $C = \exists u. A$, or
	$C = \exists u. (\{a\} \sqcap A)$ such that $\Amc^* \models C(x)$.
	Let $\Imc_x$ denote the canonical model for $\Omc$ and~$C_x$ rooted
	at $[x]$.
	Due to rule~\ref{dt-i1} and the universality of $\Imc_x$,
	for every concept name or nominal~$C$, we have $[x] \in C^{\Imc_0}$
	if and only if $[x] \in C^{\Imc_x}$.
	Similarly, because of rule \ref{dt-i3}, for every $a \in \NI \cap \sig(\Omc)$,
	$a^{\Imc_x} \in C^{\Imc_x}$ if and only if $a^{\Imc_0} \in C^{\Imc_0}$.
	
	We can now define $\Imc$ as follows: $\Delta^{\Imc}$ is the disjoint union of
	$\Delta^{\Imc_0}$ and all elements in domains $\Delta^{\Imc_x} \setminus
	(\{[x]\} \cup \{a^{\Imc_x} \mid a \in \NI \cap \sig(\Omc)\})$.
	Interpretations of concept names and nominals are inherited from the
	$\Imc_0$ or $\Imc_x$ each element comes from.
	Finally, $r^\Imc$ is obtained by taking the union of $r^{\Imc_0}$ and all
	$r^{\Imc_x}$ after replacing edges to/from $a^{\Imc_x}$ with edges to/from
	$a^{\Imc_0}$.
	It is clear that for the variable assignment $v(x) = [x]$,
	$\Imc_0,v$ satisfies $\Amc^*$, and thus so does $\Imc,v$.
	
	By rule \ref{dt-i1}, all concept inclusions of $\Omc$ of the form
	$\top \sqsubseteq A$, $A_1 \sqcap A_2 \sqsubseteq B$, $A \sqsubseteq \{a\}$
	and $\{a\} \sqsubseteq A$ are satisfied by $\Imc_0$. They are also satisfied
	by every $\Imc_x$ (since $\Imc_x$ is a model of $\Omc$), and thus by $\Imc$.
	Now consider a concept inclusion $A \sqsubseteq \exists r.B \in \Omc$,
	where $r$ is a role name or an inverse role.
	Recall that for every $a$ and $x$, $a^\Imc \in B^\Imc$ if and only if
	$a^{\Imc_x} \in B^{\Imc_x}$. Therefore, for all $d \in \Delta^{\Imc_x}$,
	$d \in (\exists r.B)^{\Imc_x}$ implies $d \in (\exists r.B)^{\Imc}$.
	The case $A \sqsubseteq \exists u. B$ is similar.
	Since every $\Imc_x$ satisfies $A \sqsubseteq \exists r.B$, so does $\Imc$.
	Similarly, every concept inclusion $\exists r.B \sqsubseteq A \in \Omc$
	is satisfied in $\Imc$: if the witness pair for $\exists r.B$ is part of
	$\Imc_0$, this follows from rule~\ref{dt-i2}, and if not, then it is part
	of some $\Imc_x$, which is by definition a model of $\Omc$.
	For concept inclusions of the form $\exists u. B \sqsubseteq A \in \Omc$,
	we can observe that if there exists some $d' \in \Delta^\Imc$ such that
	$d' \in B^\Imc$, then $(\exists u.B)$ is in  $C_x$ for some $x$, i.e., by
	rule~\ref{dt-i2'}, for all $x$.
	
	Finally, for all $x\in\mn{ind}(\Amc)$ and $A\in\NC$,
	$[x]^\Imc \in A^\Imc$ implies $[x]^{\Imc_0} \in A^{\Imc_0}$, i.e., $A(x)\in \Amc^*$.
	This concludes the proof of the claim.
	
	\medskip Now suppose $\Omc,\Amc\models A_0(x_0)$.
	By the Claim, we have $A_0(x_0)\in \Amc^*$.
	Since the four rules to construct $\Amc_0,\Amc_1,\ldots$ are in
	one-to-one correspondence with Conditions~(1)--(4) from the
	definition of derivation trees, we can inductively construct a
	derivation tree for $A_0(x_0)$ in \Amc w.r.t.\ \Omc.
        This concludes the proof of Part 1.

        \bigskip
        
	The proof of Part 2 is similar to that of Lemma~\ref{lem:deriv0}.
	We define $(T,V')$ as follows from $(T,V)$, starting with the root by
	setting $V'(\epsilon)=V(\epsilon)=(x_{0},A_{0})$.
	At each step, if $V(n)=(a,C)$ then $V'(n)=(w,C)$ for some $w$
        such that $\text{tail}(w)=a$.
	To define the labelings of the successors of $n$, we consider the possible
	derivation steps for $(a,C)$ in $\Amc$.
	\begin{enumerate}
		\item $a = x \in \mn{ind}(\Amc)$, and $n$ has successors
		$n_1,\ldots,n_k$, $k\geq 1$ with $V(n_i) = (a_i,C_i)$,
		such that $a_i = x$ or $a_i \in \NI \cap \sig(\Omc)$ for all $i$, and
		defining $C'_i = C_i$ if $a_i = x$, and
		$C'_i = \exists u. (\{a_i\} \sqcap C_i)$ otherwise,
		we have $\Omc\models C'_1\sqcap \ldots\sqcap C'_k\sqsubseteq C$.
		Take $V'(n_i) = (w,C_i)$ if $x_i = x$, and $V'(n_i) = (a_i,C_i)$
		if $a_i \in \NI \cap \sig(\Omc)$.
		
		\item $C = \exists u.A$ and $n$ has a single successor $n'$ with
		$V(n')=(y,\exists u.A)$.
		Take $V'(n') = (y,\exists u.A)$.
		
		\item  $n$ has a single successor $n'$ with $V(n')=(y,A)$ such
		that $r(a,y)\in \Amc$ and $\Omc \models \exists r.A\sqsubseteq C$
		(where $r$ is a role name or an inverse role).
		Take $V'(n') = (w r y, A)$.
		
		\item $a \in \NI \cap \sig(\Omc)$ and there exists
		$x \in \mn{ind}(\Amc)$ such that $n$ has successors
		$n_1,\ldots,n_k$, $k\geq 1$ with $V(n_i)=(a_i,C_i)$ and $a_i = x$ or
		$a_i \in \NI \cap \sig(\Omc)$ for all $i$,
		and, defining $C'_i = C_i$ if $a_i = x$, and
		$C'_i = \exists u. (\{a_i\} \sqcap C_i)$ otherwise, we have
		$\Omc\models C'_1\sqcap \ldots\sqcap C'_k\sqsubseteq \exists u.(\{a\} \sqcap C)$.
		Take $V'(n_i) = (x,C_i)$ if $x_i = x$, and $V'(n_i) = (a_i,C_i)$
		if $a_i \in \NI \cap \sig(\Omc)$.
	\end{enumerate}
	Then $(T,V')$ is a derivation tree for $A_{0}(x_{0})$ in $\Amc^{*}$
	w.r.t.~$\Omc$.
\end{proof}

\paragraph{Tree Automata.}
A \emph{tree} is a non-empty set $T \subseteq (\mathbb{N} \setminus \{0\})^\ast$
closed under prefixes and such that $n \cdot (i+1) \in T$ implies
$n \cdot i \in T$.
It is \emph{$k$-ary} if $T \subseteq \{1,\ldots,k\}^\ast$.
The node $\varepsilon$ is the \emph{root} of $T$.
As a convention, we take $n \cdot 0 = n$ and $(n \cdot i) \cdot -1 = n$.
Note that $\varepsilon \cdot -1$ is undefined.
Given an alphabet $\Theta$, a \emph{$\Theta$-labeled tree} is a pair $(T,L)$
consisting of a tree $T$ and a node-labeling function $L : T \to \Theta$.

A \emph{non-deterministic tree automaton (NTA)} over finite $k$-ary trees
is a tuple $\Amf=(Q,\Theta,I,\Delta)$, where $Q$ is a set of states,
$\Theta$ is the input alphabet, $I \subseteq Q$ is the set of initial states,
and $\Delta\subseteq Q\times \Theta\times \bigcup_{0 \le \ell \le k} Q^\ell$ is the transition relation.
A \emph{run} of an NTA $\Amf=(Q,\Theta,I,\Delta)$ over a
$k$-ary input $(T,L)$ is a $Q$-labeled tree $(T,r)$ such that
for all $x \in T$ with children $y_1, \ldots, y_\ell$, 
$(r(w),L(w),r(y_1),\ldots,r(y_\ell))\in \Delta$.
It is \emph{accepting} if $r(\varepsilon) \in I$.
The language accepted by $\Amf$, denoted $L(\Amf)$, is the set of all finite
$k$-ary $\Theta$-labeled trees over which $\Amf$ has an accepting run.

A \emph{two-way alternating tree automaton over finite $k$-ary trees (2ATA)}
is a tuple $\Amf = (Q,\Theta,q_0,\delta)$ where $Q$ is a finite set
of {\em states}, $\Theta$ is the {\em input alphabet}, $q_0\in Q$ is
the {\em initial state}, and $\delta$ is a {\em transition function}.
The transition function $\delta$ maps every state $q$ and input letter
$\theta \in \Theta$ to a positive Boolean formula $\delta(q,\theta)$ over
the truth constants $\mn{true}$ and $\mn{false}$ and \emph{transition atoms}
of the form $(i,q)\in [k]\times Q$, where $[k]=\{-1,0,1,\ldots,k\}$.
The semantics is given in terms of \emph{runs}.
More precisely, let $(T,L)$ be a finite $k$-ary $\Theta$-labeled tree and
$\Amf=(Q,\Theta,q_0,\delta)$ a 2ATA.
An {\em accepting run of \Amf over $(T,L)$} is a $(T\times Q)$-labeled tree
$(T_r,r)$ such that: 
\begin{enumerate}
	
	\item $r(\varepsilon)=(\varepsilon,q_0)$, and 
	
	\item for all $y\in T_r$ with $r(y)=(x,q)$, there is a
	subset $S\subseteq [k]\times Q$ such that $S\models
	\delta(q,L(x))$ and for every $(i,q')\in S$, there is some
	successor $y'$ of $y$ in $T_r$ with $r(y)=(x\cdot i,q')$.
	
\end{enumerate}
The language accepted by $\Amf$, denoted $L(\Amf)$, is the set of all finite
$k$-ary $\Theta$-labeled trees $(T,L)$ for which there is an accepting run. 

From a 2ATA $\Amf$, one can compute in exponential time an NTA $\Amf'$
whose number of states is exponential in the number of states of $\Amf$
and such that $L(\Amf) = L(\Amf')$ \cite{DBLP:conf/icalp/Vardi98}.

\paragraph{Interpolant Existence.}
We now give the proof that Point~2 in Theorem~\ref{thm:critele2} entails
an exponential time upper bound for deciding the existence of an interpolant.
We focus on the case of $\mathcal{ELIO}_u$.
Let $\Omc_1, \Omc_2$ be $\mathcal{ELIO}_u$-ontologies in normal form,
$A,B \in \NC$, and $\Sigma = \sig(\Omc_1,A) \cap \sig(\Omc_2,B)$.
We can assume that $A \in \sig(\Omc_1)$ and $B \in \sig(\Omc_2)$.

\newcommand\ap{{\Lambda}}

As our proof relies on tree automata, let us first explain how we represent
ABoxes that are tree-shaped modulo 
$\NI \cap \Sigma$ as trees over the alphabet $2^{\ap}$, where
\begin{align*}
  \ap = {}
  & \NC \cap \Sigma \cup {} \\
  & \{ \{a\} \mid a \in \NI \cap \Sigma \} \cup {} \\
  & \{r,r^- \mid r \in \NR \cap \Sigma \} \cup {} \\
  & \{\exists r.\{a\} \mid r \in \NR \cap \Sigma \land a \in \NI \cap \Sigma\}
    \, \cup {} \\
  & \{\exists r^-.\{a\} \mid r \in \NR \cap \Sigma 
    \land a \in \NI \cap \Sigma\} \, .
\end{align*}

Intuitively, the nodes of the tree correspond to the individual variables of the ABox;
labels $C \in \NC, \{a\}, \exists r. \{a\}, \exists r^-. \{a\}$ indicate concepts that
hold at the current node, while labels $r$ or $r^-$ are used to indicate which
roles (if any) connect a node to its parent.
Note that there need not be such a label $r$ or $r^-$, so connected nodes
in the tree representation are not necessarily connected in the ABox.

More precisely, we associate with every $2^\ap$-labeled tree $(T,L)$ the
following ABox, where $x_a$ are fresh individual variables:
\begin{align*}
	\Amc_{(T,L)} = {}
	& \{\top(x) \mid x \in T\} \cup {} \\
	& \{ \{a\}(x_a) \mid \exists x \in T: \{a\} \in L(x) \}
          \cup {} \\
  	& \{\{a\}(x) \mid x \in T \land \{a\} \in L(x)\} \cup {} \\
	& \{B(x) \mid x \in T \land B \in L(x)\} \cup {} \\
	& \{r(x,x \cdot i) \mid x \cdot i \in T \land r \in \NR \land r \in L(x \cdot i)\}
	\cup {} \\
	& \{r(x \cdot i,x) \mid x \cdot i \in T \land r \in \NR \land r^- \in
	L(x \cdot i)\} \cup {} \\
	& \{r(x,x_a) \mid x \in T \land \exists r. \{a\} \in L(x)\} \cup {} \\
	&  \{r(x_a,x) \mid x \in T \land \exists r^-. \{a\} \in L(x)\} \, .
\end{align*}
Notice that $\Amc_{(T,L)}$ is tree-shaped modulo $\NI \cap \Sigma$.
Conversely, for every ABox $\Amc$ that is tree-shaped modulo
$\NI \cap \Sigma$, there exists
a (not necessarily unique) tree $(T,L)$ such that $\Amc = \Amc_{(T,L)}$.
In addition, if the degree of $G_{\Amc}^{u}$ is less than $k$, then
there exists a $k$-ary tree $(T,L)$ such that $\Amc = \Amc_{(T,L)}$.
For instance, $\Amc_{\Omc_{1}\cup \Omc_{2},A}^{\Sigma}$ can be represented
by a $k$-ary tree for any $k$ larger than the number of concept inclusions
in $\Omc_1 \cup \Omc_2$.

We also denote by $\Amc_{(T,L)}^\Sigma$ the $\Sigma$-reduct of $\Amc_{(T,L)}$.

\smallskip

We describe below an NTA $\Amf_1$ with exponentially many states accepting
trees that represent prefix-closed finite subsets of
$\Amc_{\Omc_1 \cup \Omc_2,A}^\Sigma$,
and a 2-ATA $\Amf_2$ with polynomially many states accepting trees $(T,L)$
such that $\Amc_{(T,L)} \models B(\epsilon)$.
The existence of an interpolant then reduces to the non-emptiness of
$L(\Amf_1) \cap L(\Amf_2)$.

\paragraph{Definition of $\Amf_1$.}
We represent the canonical model for $\Omc_1 \cup \Omc_2$ and $A$ by
a tree with $\tau_{A}$ at the root of the tree,
other $\tau \in S$ inserted at arbitrary positions in the tree,
and $\tau_0 r_1 \tau_1 \cdots r_n \tau_n$ below
$\tau_0 r_1 \tau_1 \cdots r_{n-1} \tau_{n-1}$ if $n>0$.
We want $\Amf_1$ to accept finite subsets of $\Amc_{\Omc_1 \cup \Omc_2,A}^\Sigma$
obtained by keeping a prefix-closed finite subset of nodes,
and possibly removing some concepts and relations from the labels
(including all concepts and relations not in $\Sigma$).
To do so, the automaton will simply guess in its state the type of
each node, and check that all guesses are locally consistent
by allowing only transitions that match the definition of canonical models.
Concretely, the states of the automaton consist of a pair
of $\Omc$-types, where state $(\tau,\tau')$ should be interpreted as
the parent node having type $\tau$ and the current node type $\tau'$.

To keep the definition simple, the automaton also accepts trees where,
compared to the canonical model, some nodes are duplicated
(that is, we do not require that the node corresponding to some
$\tau \in \Delta^{\Imc_{\Omc,A}} \cap S$ is unique). This does not change
the set of concepts entailed at the root.

We take $\Amf_1=(Q_1,2^\ap,I_1,\Delta_1)$, where
\begin{itemize}
\item $Q_1 = (S \cup \{\bot\}) \times S$,
  where $S$ is the set of $\Omc$-types introduced in the
  definition of $\Imc_{\Omc,A}$;
\item $I_1 = \{(\bot,\tau_A)\}$;
\item For states $q = (\tau,\tau'), q_1 = (\tau_1,\tau_1'),\ldots,
  q_\ell = (\tau_\ell,\tau_\ell') \in Q_1$ and input letter
  $\alpha \subseteq \ap$,
  $(q,\alpha,q_1,\ldots,q_\ell) \in \Delta_1$ if the following conditions
  are satisfied, for all $1 \le i \le \ell$:
  \begin{itemize}
  \item the current state and label are consistent with the definition
    of the canonical model:
    for all $r \in \alpha$, $\tau \rightsquigarrow_r \tau'$;
  \item the set of concepts associated with $\alpha$ is a subset of the
    $\Omc$-type $\tau'$:
    $\alpha \cap (\sub(\Omc) \cup \sub^\exists(\Omc)) \subseteq \tau'$;
  \item the current type $\tau'$ is stored in the state of all child nodes:
    for all $1 \le i \le \ell$, $\tau_i = \tau'$.
  \end{itemize}
      \end{itemize}

      Note that $\Amf_1$ can be computed in exponential time.

      \begin{lemma}\label{lem:Amf1}
        $\Omc_1 \cup \Omc_2, \Amc_{\Omc_1 \cup \Omc_2,A}^\Sigma \models B(\rho_A)$
        if and only if there exists $(T,L) \in L(\Amf_1)$ such that
        $\Omc_1 \cup \Omc_2, \Amc_{(T,L)} \models B(\epsilon)$, where
        $\epsilon$ is the root of $(T,L)$.
      \end{lemma}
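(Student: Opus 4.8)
The plan is to prove the two directions of the biconditional separately; the only genuinely non-trivial passage is from the (possibly infinite) canonical ABox to a finite accepted tree in the left-to-right direction, and everything else is bookkeeping.

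For the direction from right to left, fix $(T,L)\in L(\Amf_1)$ together with an accepting run that labels each node $x$ with a state $(\tau^-_x,\tau_x)$, so that $\tau_\epsilon=\tau_A=\rho_A$. The key step is to read off from the run a homomorphism $h$ from $\Amc_{(T,L)}$ into the $\Sigma$-reduct $\Amc_{\Omc_1\cup\Omc_2,A}^\Sigma$ of the canonical model (equivalently, into $\Imc_{\Omc_1\cup\Omc_2,A}$, since $\Amc_{(T,L)}$ uses only symbols from $\Sigma$) with $h(\epsilon)=\rho_A$. One defines $h$ by induction over $T$: put $h(\epsilon)=\tau_A$; and for a child $y$ of $x$, if $L(y)$ contains a role name $s$ (resp.\ an inverse role $s^-$) — which means that $y$ is an ABox $s$-successor (resp.\ $s$-predecessor) of $x$ — then the transition conditions of $\Amf_1$ give $\tau_x\rightsquigarrow_s\tau_y$, so $h(x)$ can be extended by one $s$-step inside the canonical model: if $\tau_y\notin S_N$ this uses the definition of $\Delta^{\Imc_{\Omc_1\cup\Omc_2,A}}$, and if $\tau_y=\tau_a$ it uses the edges to nominal elements in the definition of $r^{\Imc_{\Omc_1\cup\Omc_2,A}}$ together with the observation that $\tau_x\rightsquigarrow_s\tau_a$ forces $\exists s.\{a\}\in\tau_x$; if $L(y)$ contains no role, set $h(y)$ to any canonical-model element of type $\tau_y$. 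Using the transition condition $L(x)\cap(\text{sub}(\Omc_1\cup\Omc_2)\cup\text{sub}^{\exists}(\Omc_1\cup\Omc_2))\subseteq\tau_x$, one checks that $h$ preserves all concept, role and nominal assertions of $\Amc_{(T,L)}$. Finally, a routine monotonicity argument — pulling back, along $h$, any model of $\Omc_1\cup\Omc_2$ and $\Amc_{(T,L)}$ — shows that $\Omc_1\cup\Omc_2,\Amc_{(T,L)}\models B(\epsilon)$ implies $\Omc_1\cup\Omc_2,\Amc_{\Omc_1\cup\Omc_2,A}^\Sigma\models B(h(\epsilon))=B(\rho_A)$.

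For the direction from left to right, assume $\Omc_1\cup\Omc_2,\Amc_{\Omc_1\cup\Omc_2,A}^\Sigma\models B(\rho_A)$. By Part~1 of Lemma~\ref{lem:derivELI} there is a finite derivation tree for $B(\rho_A)$ in $\Omc_1\cup\Omc_2,\Amc_{\Omc_1\cup\Omc_2,A}^\Sigma$ (first-order compactness would do just as well); let $X$ be the finite set of individuals occurring in it, and let $X'$ be the closure of $X\cup\{\rho_A\}$ under taking tree-predecessors in the tree representation of the canonical model described just before the definition of $\Amf_1$. Then $X'$ is finite, the restriction $\Amc$ of $\Amc_{\Omc_1\cup\Omc_2,A}^\Sigma$ to $X'$ is again tree-shaped modulo $\NI\cap\Sigma$, and we may fix a finite tree $(T,L)$ with $\Amc=\Amc_{(T,L)}$ in which $\epsilon$ names $\rho_A$. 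It remains to verify that (i) $(T,L)\in L(\Amf_1)$, which holds by taking as accepting run the labelling that assigns to each node the pair consisting of the canonical-model type of its tree-predecessor and its own canonical-model type, all transition conditions of $\Amf_1$ then being immediate from the definitions of $\Imc_{\Omc_1\cup\Omc_2,A}$ and of $\rightsquigarrow$; and that (ii) $\Omc_1\cup\Omc_2,\Amc_{(T,L)}\models B(\epsilon)$, which is clear because the chosen derivation tree, after the renaming identifying $\rho_A$ with $\epsilon$, is still a derivation tree for $B(\epsilon)$ in $\Omc_1\cup\Omc_2,\Amc_{(T,L)}$ (all individuals it mentions lie in $X'$), so the soundness part of Lemma~\ref{lem:derivELI}(1) applies.

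The routine but error-prone part of both directions — and the step I expect to cost the most care — is the bookkeeping around the nominal elements $\tau_a\in S_N$: in the tree representation of $\Imc_{\Omc_1\cup\Omc_2,A}$ such elements may sit at arbitrary positions and occur in several copies, and the edges to and from them (the last clauses of the definition of $r^{\Imc_{\Omc_1\cup\Omc_2,A}}$, and, on the tree side, the assertions generated by labels $\exists r.\{a\}$ and $\exists r^-.\{a\}$) must be matched up correctly when reading off $h$ in one direction and when exhibiting the accepting run in the other.
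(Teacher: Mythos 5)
Your proof is correct and follows essentially the same route as the paper's: for the right-to-left direction you read off a homomorphism from an accepting run into $\Amc_{\Omc_1\cup\Omc_2,A}^\Sigma$ and transfer entailment along it, and for the left-to-right direction you extract a finite (prefix-closed) portion of the canonical ABox via a finite derivation tree/compactness and label it with the full canonical types to obtain an accepting run. The paper's proof is just a terser version of the same argument, and your extra care with the nominal elements fills in details the paper leaves implicit.
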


      \begin{proof}
        The run of $\Amf_1$ on some $(T,L) \in L(\Amf_1)$ can be used to
        define a homomorphism from $\Amc_{(T,L)},\epsilon$ to
        $\Amc_{\Omc_1 \cup \Omc_2,A}^\Sigma,\rho_A$.
        Therefore, if $\Omc_1 \cup \Omc_2, \Amc_{(T,L)} \models B(\epsilon)$
        then
        $\Omc_1 \cup \Omc_2, \Amc_{\Omc_1 \cup \Omc_2,A}^\Sigma \models B(\rho_A)$.
        Conversely, if $\Omc_1 \cup \Omc_2, \Amc_{\Omc_1 \cup \Omc_2,A}^\Sigma
        \models B(\rho_A)$ then there exists a finite subset $\Amc$ of
        $\Amc_{\Omc_1 \cup \Omc_2,A}^\Sigma$ such that
        $\Omc_1 \cup \Omc_2, \Amc \models B(\rho_A)$.
        Take as $(T,L)$ any finite prefix of an encoding of
        $\Amc_{\Omc_1 \cup \Omc_2,A}^\Sigma$ that contains all nodes corresponding
        to individuals in $\Amc$.
        Then the labeling of $(T,L)$ with the full types from the canonical
        model defines an accepting run of $\Amf_1$ on $(T,L)$,
        and $\Omc_1 \cup \Omc_2, \Amc_{(T,L)} \models B(\epsilon)$.
      \end{proof}

\paragraph{Definition of $\Amf_2$.}
The construction of $\Amf_2 = (Q_2,2^\ap,q_{B},\delta_2)$ relies
on derivation trees.
Intuitively, runs of $\Amf_2$ on some $(T,L)$ correspond to derivation trees
for $B(\varepsilon)$ in $\Omc_1 \cup \Omc_2, \Amc_{(T,L)}$.
The states of $\Amf_2$ are
\begin{align*}
	Q_2 = {} 
	& \{q_{A'} \mid A' \in \NC \cap \sig(\Omc_1,\Omc_2)\} \cup {} \\
	& \{q_{\{a\}} \mid a \in \NI \cap \sig(\Omc_1,\Omc_2)\} \cup {} \\
	& \{q_{\exists r.A'}, q_{\exists r^-. A'} \mid r \in \NR \cap \sig(\Omc_1,\Omc_2), \\
	& \hphantom{\{q_{\exists r.A'}, q_{\exists r^-. A'} \mid {}}
          A' \in \NC \cap \sig(\Omc_1,\Omc_2)\} \cup {} \\
	& \{q_{\exists u.A'} \mid A' \in \NC \cap \sig(\Omc_1,\Omc_2)\} \cup {} \\
	& \{q_{\exists u.(\{a\} \sqcap A')} \mid a \in \NI \cap \sig(\Omc_1,\Omc_2),\\
        & \hphantom{\{q_{\exists u.(\{a\} \sqcap A')} \mid {}}
          A' \in \NC \cap \sig(\Omc_1,\Omc_2)\} \cup {} \\
	& \{q_{\exists u.(\{a\} \sqcap \{b\})} \mid a, b \in \NI \cap \sig(\Omc_1,\Omc_2)\} \cup {} \\
	& \{q_r,q_{r^-} \mid r \in \NR \cap \Sigma\} \cup {} \\
	& \{q_{\exists r.\{a\}},q_{\exists r^-.\{a\}} \mid a \in \NI \cap \sig(\Omc_1,\Omc_2), \\
        & \hphantom{\{q_{\exists r.\{a\}},q_{\exists r^-.\{a\}} \mid {}}
	r \in \NR \cap \sig(\Omc_1,\Omc_2)\} \, .
\end{align*}
Intuitively, state $q_C$ is used to check that $C$ is entailed at the current
node. States $q_r$ and $q_{\exists r.\{a\}}$ are used to check the label of the current node.
The initial state is $q_{B}$, as we are trying to construct a derivation tree
for $B$ at the root.

Let us now define the transition relation.
From a state $q_r$ or $q_{\exists r.\{a\}}$, where $r \in \NR \cup \NR^-$
and $a \in \NI$, the automaton simply checks the current label:
\begin{align*}
	\delta_2(q_r,\alpha)
	& = \begin{cases}
		\mn{true} & \text{if } r \in \alpha \\
		\mn{false} & \text{if } r \notin \alpha
	\end{cases} \\
	\delta_2(q_{\exists r.\{a\}},\alpha)
	& = \begin{cases}
		\mn{true} & \text{if } \exists r.\{a\} \in \alpha \\
		\mn{false} & \text{if } \exists r.\{a\} \notin \alpha \, .
              \end{cases}
\end{align*}
From a state $q_{\exists r.A'}$, with $r \in \NR \cup \NR^-$ and $A' \in \NC$,
the automaton checks that the current node has an $r$-successor from which
there exists a run starting in $q_{A'}$. This $r$-successor can be
(i) the parent of the current node, i.e.~there is a run from $q_{r^-}$ from
the current node and a run from $q_{A'}$ from the parent node,
(ii) some $i$-th child of the current node, i.e.~there is a run from $q_{r}$
and one from $q_{A'}$ from the $i$-th child, or
(iii) an individual $a$, i.e.~there is a run from $q_{\exists r.\{a\}}$ and from
$q_{\exists u.(\{a\} \sqcap A')}$ from the current node:
\begin{align*}
  \delta_2(q_{\exists r.A'},\alpha) = {}
  & (0,q_{r^-}) \land (-1,q_{A'}) \lor {} \\
  & \bigvee_{1 \le i \le k} (i,q_{A'}) \land (i,q_r)  \lor {} \\
  & \bigvee_{a \in \NI \cap \Gamma} (0,q_{\exists r.\{a\}}) \land (0,q_{\exists u.(\{a\} \sqcap {A'})}) \, .
\end{align*}
From a state $q_{\exists u.A'}$, the automaton checks if
(i) condition~\ref{dt1} from derivation trees can be applied,
that is, there exist concepts
$C_1,\ldots,C_n$ of the form $B'$, $\{a\}$, $\exists u. (\{a\} \sqcap B')$
or $\exists u. (\{a\} \sqcap \{b\})$
such that
$\Omc_1 \cup \Omc_2 \models C_1 \sqcap \cdots \sqcap C_n \models \exists u.A'$
and there exists a run from each $q_{C_i}$ from the current node,
or
(ii) condition~\ref{dt2} from derivation trees can be applied, which can
be checked by propagating the search for a run from $q_{\exists u.A'}$
to all neighbouring nodes,
or
(iii) condition~\ref{dt3} from derivation trees can be applied, that is,
there exists $r,B'$ such that
$\Omc_1 \cup \Omc_2 \models \exists r.B' \sqsubseteq \exists u. A'$
and the automaton has a run from $q_{\exists r.B'}$ starting
from the current node:
\begin{align*}
	\delta_2(q_{\exists u.A'},\alpha) = {}
	& \bigvee_{\Omc_1 \cup \Omc_2 \models C_1 \sqcap \cdots \sqcap C_n \models \exists u.A'} \bigwedge_{1 \le i \le n} (0,q_{C_i}) \lor {} \\
	& \bigvee_{i \in \{-1,1,\ldots,k\}} (i,q_{\exists u.A'}) \lor {} \\
        & \bigvee_{\Omc_1 \cup \Omc_2 \models \exists r.B' \sqsubseteq \exists u.A'} (0,q_{\exists r.B'}) \, .
\end{align*}
From a state $q_{\exists u.C}$ where $C = \{a\} \sqcap A'$ or
$C = \{a\} \sqcap \{b\}$ with $b \neq a$, the automaton checks if
condition~\ref{dt4} from derivation trees can be applied either
(i) taking the current node as $x$, that is, there exist concepts
$C_1,\ldots,C_n$ of the form $B'$, $\{b\}$, $\exists u. (\{b\} \sqcap B')$
or $\exists u. (\{b\} \sqcap \{c\})$
such that
$\Omc_1 \cup \Omc_2 \models C_1 \sqcap \cdots \sqcap C_n \models \exists u.C$
and there exists a run from each $q_{C_i}$ from the current node,
or
(ii) taking some other node as $x$, which can be checked by propagating the
search for a run from $q_{\exists u.C}$ to all neighbouring nodes:
\begin{align*}
	\delta_2(q_{\exists u.C},\alpha)
	& = 
	\bigvee_{\Omc_1 \cup \Omc_2 \models C_1 \sqcap \cdots \sqcap C_n \models \exists u.C} \bigwedge_{1 \le i \le n} (0,q_{C_i}) \lor {} \\
	& \hspace{10em}
          \bigvee_{i \in \{-1,1,\ldots,k\}} (i,q_{\exists u.C}) \, .
\end{align*}
We also set
\[
  \delta_2(q_{\exists u. (\{a\} \sqcap \{a\})},\alpha) = \mn{true} \, .
\]
For $C = \{a\}$ or $C \in \NC$, $\delta(q_C,\alpha) = \mn{true}$
if $C \in \alpha$ or $\Omc_1 \cup \Omc_2 \models \top \sqsubseteq C$,
and otherwise, the automaton checks if conditions~\ref{dt1} or~\ref{dt3}
from derivation trees can be applied:
\begin{multline*}
	\delta_2(q_C,\alpha) =
	\bigvee_{\Omc_1 \cup \Omc_2 \models C_1 \sqcap \cdots \sqcap C_n \models C}
	\bigwedge_{1 \le i \le n} (0,q_{C_i}) \lor {} \\
	\bigvee_{\Omc_1 \cup \Omc_2 \models \exists r.B' \sqsubseteq C} (0,q_{\exists r.B'}) \, .
\end{multline*}

\begin{lemma}\label{lem:Amf2}
  For all finite $k$-ary $2^\ap$-labeled trees $(T,L)$,
  we have $(T,L) \in L(\Amf_2)$ if and only if
  $\Omc_1 \cup \Omc_2, \Amc_{(T,L)} \models B(\epsilon)$.
\end{lemma}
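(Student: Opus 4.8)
The plan is to reduce the statement to a correspondence between accepting runs of $\Amf_2$ and derivation trees. By Part~1 of Lemma~\ref{lem:derivELI}, $\Omc_1\cup\Omc_2,\Amc_{(T,L)}\models B(\varepsilon)$ holds if and only if there is a derivation tree for $B(\varepsilon)$ in $\Omc_1\cup\Omc_2,\Amc_{(T,L)}$; so it suffices to show that $(T,L)\in L(\Amf_2)$ iff such a derivation tree exists. The intuition, as stated before the lemma, is that an accepting run of $\Amf_2$ entering state $q_C$ at a node $x$ is a blueprint for a derivation tree of $C$ at the individual variable associated with $x$, while the auxiliary states $q_r,q_{r^-},q_{\exists r.\{a\}}$ merely certify that the corresponding entry occurs in the label $L(x)$, equivalently that the corresponding role edge occurs in $\Amc_{(T,L)}$. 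Throughout, accepting runs are read as finite (well-founded), which is the intended reading for this reachability-style automaton and makes the two inductions below terminate.

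For the ``if'' direction, given a derivation tree $(T_d,V_d)$ for $B(\varepsilon)$ I construct an accepting run of $\Amf_2$ by recursion on $(T_d,V_d)$, matching the four clauses in the definition of derivation trees to disjuncts of $\delta_2$. A derivation-tree node labelled $(x,C)$ with $x$ a non-nominal variable yields a run fragment starting in $(x,q_C)$: clause~\ref{dt1} is matched by the first disjunct of $\delta_2(q_C,\cdot)$ (branching to states $q_{C_i}$ at the same node, using $q_{\exists u.(\{a_i\}\sqcap C_i)}$ for nominal successors); clause~\ref{dt2} by iterating the propagation disjuncts $(i,q_{\exists u.A'})$ along a path to the node realising $\exists u.A'$; clause~\ref{dt3} by the disjunct $(0,q_{\exists r.B'})$ of $\delta_2(q_C,\cdot)$ followed, inside $\delta_2(q_{\exists r.B'},\cdot)$, by whichever of its three disjuncts reflects how the $r$-edge occurs in $\Amc_{(T,L)}$ (to the parent, recognised via $q_{r^-}$ at $x$ and $q_{B'}$ at the parent; to a child, via $q_r$ and $q_{B'}$ at that child; or to a nominal variable $x_a$, via $q_{\exists r.\{a\}}$ at $x$ and $q_{\exists u.(\{a\}\sqcap B')}$ at $x$) --- here the label-checking states succeed exactly because the edge is present in $\Amc_{(T,L)}$ and hence, by definition of $\Amc_{(T,L)}$, so is the corresponding label entry. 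Derivation-tree nodes labelled $(a,C)$ with $a$ a nominal are treated symmetrically through $\delta_2(q_{\exists u.(\{a\}\sqcap C)},\cdot)$, clause~\ref{dt4} being matched by the first disjunct when the witness $x$ is the current node and by the propagation disjuncts otherwise. By construction the result is an accepting run.

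For the ``only if'' direction, given a finite accepting run $(T_r,r)$ I read off a derivation tree for $B(\varepsilon)$, essentially inverting the above: run nodes in $q_r,q_{r^-},q_{\exists r.\{a\}}$ certify that the matching role edge is in $\Amc_{(T,L)}$ (by its definition), the $q_{\exists r.A'}$, $q_{\exists u.A'}$ and $q_{\exists u.(\{a\}\sqcap\cdot)}$-labelled subruns are inlined, and every surviving branching step becomes an application of the corresponding clause~\ref{dt1}--\ref{dt4}, where one keeps track of which individual variable each state refers to (in particular, a $q_{\exists u.A'}$-subrun certifies realisation of $\exists u.A'$ at the end of its propagation path, handled by clause~\ref{dt2}). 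Then a derivation tree for $B(\varepsilon)$ exists and the claim follows from Lemma~\ref{lem:derivELI}. Alternatively this direction can be argued purely semantically: by well-founded induction on $(T_r,r)$ one shows that every run node $(x,q_C)$ satisfies $v(x)\in C^{\Imc}$ for every model $\Imc,v$ of $\Omc_1\cup\Omc_2$ and $\Amc_{(T,L)}$, using that each disjunct of each transition of $\Amf_2$ is a sound inference.

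The main obstacle is the bookkeeping around the three shapes an $r$-edge can take in the tree-encoded ABox $\Amc_{(T,L)}$ (edge to the parent, encoded by an $r^-$ label on the child; edge to a child, an $r$ label on the child; edge to a nominal variable $x_a$, an $\exists r.\{a\}$ label) together with the fact that the nominal variables $x_a$ have no fixed position in the tree --- which is precisely why the clauses for $q_{\exists u.A'}$ and $q_{\exists u.(\{a\}\sqcap\cdot)}$ contain propagation disjuncts over all neighbours, and why matching these against clauses~\ref{dt2} and~\ref{dt4} of the derivation-tree definition, including the noted two-successor special cases involving a $\{a\}$ label, is where care is needed. Everything else is a routine check of $\delta_2$ against the definitions of $\Amc_{(T,L)}$ and of derivation trees.
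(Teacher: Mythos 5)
Your proposal is correct and follows essentially the same route as the paper: the paper's proof consists precisely of observing the correspondence between runs of $\Amf_2$ started in a state $q_C$ (respectively $q_{\exists u.(\{a\}\sqcap C)}$) and derivation trees for $(\epsilon,C)$ (respectively $(a,C)$) in $\Omc_1\cup\Omc_2,\Amc_{(T,L)}$, and then concluding via Lemma~\ref{lem:derivELI}. You merely spell out the clause-by-disjunct matching that the paper leaves as an observation, and your explicit remark that runs must be read as finite/well-founded (so that the propagation disjuncts cannot loop forever) is a sensible and correct reading of the intended 2ATA semantics.
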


\begin{proof}
  We observe that for all $(T,L)$,
  \begin{itemize}
  \item For all $\sig(\Omc_1,\Omc_2)$-concept $C$ of the form
    $C= A'$, $C = \{a\}$ or $C = \exists u.A'$ with $A' \in \NC$ and
    $a \in \NI$,
    $\Amf$ has a run starting from state $q_C$ on $(T,L)$ if and only if
    there exists a derivation tree for $(\epsilon,C)$ in
    $\Omc_1 \cup \Omc_2, \Amc_{(T,L)}$.
  \item For all $a \in \NI \cap \sig(\Omc_1,\Omc_2)$, for all
    $\sig(\Omc_1,\Omc_2)$-concept $C = \{b\}$ or $C = A' \in \NC$,
    $\Amf$ has a run starting from state $q_{\exists u.(\{a\} \sqcap C)}$ if
    and only if there exists a derivation tree for $(a,C)$ in
    $\Omc_1 \cup \Omc_2, \Amc_{(T,L)}$. \qedhere
  \end{itemize}
\end{proof}

From $\Amf_2$, one can construct an equivalent NTA $\Amf'_2$ with exponentially
many states \cite{DBLP:conf/icalp/Vardi98}.
By Lemmas~\ref{lem:Amf1} and~\ref{lem:Amf2}, we have 
$\Omc_1 \cup \Omc_2, \Amc_{\Omc_1 \cup \Omc_2}^\Sigma \models B(\rho_A)$
if and only if $L(\Amf_1) \cap L(\Amf_2') = \emptyset$, which can be
checked in exponential time.

\paragraph{Lower Bound for Explicit Definitions.}
We construct an $\mathcal{ELI}$-ontology $\Omc$, signature $\Sigma$, and concept name $A$ such that the smallest explicit $\mathcal{ELI}(\Sigma)$-definition of $A$ under $\Omc$ is of double exponential size in $||\Omc||$. $\Omc$ is a variant of ontologies constructed in~\cite{DBLP:journals/jsc/LutzW10,DBLP:journals/ai/NikitinaR14} and defined as follows. It contains $\top \sqsubseteq \exists r.\top \sqcap \exists s.\top$, 
\begin{align*}
	A  \sqsubseteq M \sqcap \overline{X_{0}} \sqcap  \ldots \sqcap \overline{X_{n}} &\\
	\exists \sigma^{-}. (\overline{X_{i}} \sqcap X_{0} \sqcap \ldots \sqcap X_{i-1}) \sqsubseteq X_{i} &\qquad \sigma \in \{r,s\}, i \leq n \\
	\exists \sigma^{-}. (X_{i} \sqcap X_{0} \sqcap \ldots \sqcap X_{i-1}) \sqsubseteq \overline{X_{i}} &\qquad \sigma \in \{r,s\}, i \leq n\\ 
	\exists \sigma^{-}. (\overline{X_{i}} \sqcap \overline{X_{j}}) \sqsubseteq \overline{X_{i}} &\qquad \sigma \in \{r,s\}, j < i \leq n\\
	\exists \sigma^{-}. (X_{i} \sqcap \overline{X_{j}}) \sqsubseteq X_{i} &\qquad \sigma\in \{r,s\}, j < i \leq n\\
	X_{0} \sqcap \ldots \sqcap X_{n} \sqsubseteq L & 
\end{align*}
and 
$$
L \sqsubseteq B, \quad \exists r.B \sqcap \exists s.B \sqsubseteq B, \quad B \sqcap M \sqsubseteq A.
$$
Let $\Sigma=\{M,r,s,L\}$. Note that $A$ triggers a marker $M$ and a binary tree of depth $2^{n}$ using counter concept names 
$X_{0},\ldots,X_{n}$ and $\overline{X_{0}},\ldots,\overline{X_{n}}$. 
A concept name $L$ is made true at the leafs. Conversely, if $L$ is true at the leafs of a binary tree of depth $2^{n}$ then $B$ is true at all nodes of the tree and $A$ is entailed by $M$ and $B$ at its root.
Define inductively 
$$
C_{0}= L, \quad C_{k+1}= \exists r.C_{k} \sqcap \exists s.C_{k}, \quad C=C_{2^{n}}\sqcap M.
$$
Then $C$ is the smallest explicit $\mathcal{ELI}(\Sigma)$-definition of $A$ under $\Omc$.

\paragraph{Transfer Sequences.}
For the proof of ``2. $\Rightarrow$ 3.'' of Theorem~\ref{thm:critele2} and the proof
that interpolants can be computed in double exponential time we require an extension of the notion of transfer sequences first introduced in \cite{DBLP:conf/ijcai/BienvenuLW13}
to logics with nominals.

\medskip

Assume that Condition~2 of Theorem~\ref{thm:critele2} holds. So we have $\mathcal{ELIO}_{u}$-ontologies $\Omc_{1},\Omc_{2}$ in normal form, concept names 
$A,B$, and $\Sigma=\text{sig}(\Omc_{1},A) \cap \text{sig}(\Omc_{2},B)$ such that
$\Omc_{1}\cup\Omc_{2},\Amc^{\Sigma}_{\Omc_{1} \cup \Omc_{2},A}\models B(\rho_{A})$.
Set $\Omc=\Omc_{1}\cup \Omc_{2}$. We use $\Amc_{\Omc,A}$ to denote the ABox associated with the canonical model $\Imc_{\Omc,A}$. We require some notation for the individuals
that occur in $\Amc_{\Omc,A}$. We set $a\sim b$ if $\Omc\models \{a\} \sqcap \exists u.A\sqsubseteq \{b\}$
and set $[a]=\{b\in \text{sig}(\Omc) \mid a\sim b\}$. We say that concept name $E$
is \emph{absorbed} by $a$ if $\Omc\models E \sqcap \exists u.A \sqsubseteq \{a\}$.
We denote the individual $x_{\tau_{a}}$ of $\Amc_{\Omc,A}$ by $x_{a}$ and the 
individuals $x_{\tau_{E}}$ of $\Amc_{\Omc,A}$ by $x_{E}$. 
Note that $x_{a}=x_{b}$ if $a\sim b$ and $x_{a}=x_{A}$ if $A$ is absorbed by $a$. 

Given $w\in \text{ind}(\Amc_{\Omc,A})$, we call the individuals 
of the form $ww'\in \text{ind}(\Amc_{\Omc,A})$ \emph{the subtree of $\Amc_{\Omc,A_1}$ rooted at $w$}.

By compactness we have a finite subset $\Amc$ of $\Amc_{\Omc,A}$ containing $x_{A}$ such that
$\Omc,\Amc_{|\Sigma}\models B(x_{A})$. 
We may assume that $\Amc$ is prefix closed and that $\Amc_{|\Sigma}$ contains
\begin{itemize}
  \item $\{a\}(x_{a})$ and $A(x_{A})$ for all $a,A\in \Sigma$;
  \item $\top (x_{A})$ and $\top (x_{a})$ for all $a,A\in \text{sig}(\Omc)\setminus \Sigma$;
\end{itemize}
We obtain the ABox $\Amc_{\Sigma}$ from $\Amc_{|\Sigma}$ by adding the assertions 
\begin{itemize}
	\item $\{a\}(x_{a,\text{new}})$ and $\top (x_{A,\text{new}})$, for all 
$a,A\in \text{sig}(\Omc) \setminus \Sigma$, 
where $x_{a,\text{new}}$ and $x_{A,\text{new}}$ are fresh individuals. 
\end{itemize}
Let $I$ denote the set of individuals $x_{a},x_{A}$ with $a,A\in \text{sig}(\Omc)$ and let
$I_{\text{new}}$ denote the set of individuals $x_{a,\text{new}}, x_{A,\text{new}}$ with $a,A\in \text{sig}(\Omc)\setminus \Sigma$. Observe that $\Omc,\Amc_{\Sigma}$ and $\Omc,\Amc_{|\Sigma}$ entail the same assertions $C(a)$ for $a\in \text{ind}(\Amc_{|\Sigma}$), so the additional individuals do not influence what is entailed. In fact, we introduce the individuals $I_{\text{new}}$ only to enable explicit bookkeeping about when in a transfer sequence (defined below) an assertion of the form $C(a)$ or $\exists u.A$ is derived. 

We aim to define a small subset $\Amc'$ of $\Amc_{\Sigma}$ such that 
$\Omc\models A \sqsubseteq C$ for the concept $C$ corresponding to $\Amc'$
and such that still $\Omc,\Amc'\models B(x_{A})$. If $\Amc'$ has at most exponential depth in the size of $\Omc$ then we are done, as then $\Amc'$ is of at most double exponential size in the size of $\Omc$. We obtain $\Amc'$ from $\Amc_{\Sigma}$ by determining $w$ and $ww'\in \text{ind}(\Amc)\setminus (I \cup I_{\text{new}})$ which behave `sufficiently similar' such that if we obtain $\Amc'$ from 
$\Amc$ by replacing the subtree rooted at
$w$ in $\Amc_{\Sigma}$ by the subtree rooted at $ww'$, then we still have 
$\Omc,\Amc'\models B(x_{A})$ and $\Omc\models A\sqsubseteq C$ for the concept 
$C$ defined by $\Amc'$. The replacement of subtrees is then performed exhaustively.
 
For $w$ and $ww'$ to be sufficiently similar, we firstly require that $\text{tail}(w)= \text{tail}(ww')$ (with $\text{tail}(w)$ the final type in $w$ for any $w$). 
This ensures that $\Omc\models A \sqsubseteq C$ 
for the concept $C$ corresponding to $\Amc'$. This also has the consequence that $\Amc'$
is (isomorphic) to a prefix closed subABox of $\Amc_{\Sigma}$.
For the second condition for being sufficiently similar, we apply the notion of transfer sequences~\cite{DBLP:conf/ijcai/BienvenuLW13}. 
To define transfer sequences, we consider derivations using $\Omc$ and \emph{intermediate} ABoxes
$\Bmc$ such that 
$$
I\cup I_{\text{new}}\subseteq \text{ind}(\Bmc) \subseteq \text{ind}(\Amc_{\Sigma})
$$ 
We admit $\Bmc$ to contain equations $x_{e}=x_{e'}$ for 
$x_{e},x_{e'}\in I \cup I_{\text{new}}$,
with the obvious semantics.
Consider such an intermediate $\Bmc$ and $w\in \text{ind}(\Bmc)\setminus (I \cup I_{\text{new}})$. 
Then the set $D_{\Bmc}(w)$ is defined as the set of assertions $\alpha$ with 
$\Omc,\Bmc'\models\alpha$ and $\alpha$ of the form 
\begin{itemize}
  \item $A(c)$ or $\{a\}(c)$ with $A,a\in \text{sig}(\Omc)$ and $c \in \{w\} \cup I \cup I_{\text{new}}$; or
\item $r(w,c)$ with $r\in \NR \cup \NR^{-}$ and $c\in I\cup I_{\text{new}}$;
\item $r(c,d)$ with $r\in \NR \cup \NR^{-}$ and $c,d\in I\cup I_{\text{new}}$;
\item $c=d$ with $c,d\in I\cup I_{\text{new}}$.
\end{itemize}
and 
$\Bmc' = \Bmc \cup \{ A(x_{A,\text{new}}) \mid \Omc,\Bmc\models \exists u.A\}$.
For $w\in \text{ind}(\Bmc)\setminus (I \cup I_{\text{new}})$, let 
\begin{itemize}
   \item $\Bmc_{w}^{\downarrow}$ denote the restriction of $\Bmc$ to the individuals in the 
         subtree of $\Bmc$ rooted at $w$ and $I \cup I_{\text{new}}$; and let
    \item $\Bmc_{w}^{\uparrow}$ be the ABox obtained from $\Bmc$ by dropping $\Bmc_{w}^{\downarrow}$ from $\Bmc$ except for $w$ 
          itself and $I \cup I_{\text{new}}$.
\end{itemize}
Define the \emph{transfer sequence $\Xmc_{0},\Xmc_{1},\ldots$ of $(\Amc_{\Sigma},w)$ w.r.t.\ $\Omc$} 
as follows: 
\begin{eqnarray*}
     \Xmc_{0}  & = & D_{(\Amc_{\Sigma})_{w}^{\downarrow}}(w)\\
     \Xmc_{1}  & = & D_{(\Amc_{\Sigma})_{w}^{\uparrow}\cup \Xmc_{0}}(w)\\
     \Xmc_{2}  & = & D_{(\Amc_{\Sigma})_{w}^{\downarrow}\cup \Xmc_{1}}(w)\\
     \Xmc_{3}  & = &  ...
\end{eqnarray*}
Intuitively, we first consider the set $\Xmc_{0}$ of assertions that are entailed by 
$\Omc$ and $\Amc_{\Sigma}$ at $\{w\}\cup I \cup I_{\text{new}}$ 
if we only use assertions in ${\Amc_{\Sigma}}_{w}^{\downarrow}$. We update $\Amc_{\Sigma}$ 
by those assertions. 
Next we consider the set $\Xmc_{1}$ of assertions
that are entailed by $\Omc$ and the updated $\Amc_{\Sigma}$ at $\{w\}\cup I \cup I_{\text{new}}$ 
if we only use assertions in the updated 
${\Amc_{\Sigma}}_{w}^{\uparrow}$. We update $\Amc_{\Sigma}$ again, and so on. 
It is not difficult to see that if $w,ww'\in \text{ind}(\Amc_{\Sigma})\setminus 
(I \cup I_{\text{new}})$ and 
\begin{itemize}
  \item the restrictions of $\Amc_{\Sigma}$ to $\{w\}\cup (I \cup I_{\text{new}})$ and 
$\{ww'\} \cup (I \cup I_{\text{new}})$ coincide 
       (modulo renaming $w$ to $ww'$) and
  \item the transfer sequences of $(\Amc_{\Sigma},w)$ w.r.t.\ $\Omc$ coincides with the transfer sequence of 
       $(\Amc_{\Sigma},ww')$ w.r.t.\ $\Omc$ (modulo renaming $w$ to $ww'$)
\end{itemize} 
then one can replace ${\Amc_{\Sigma}}_{w}^{\downarrow}$ by ${\Amc_{\Sigma}}_{ww'}^{\downarrow}$ in 
$\Amc_{\Sigma}$ and it still holds that $\Omc,\Amc'\models B(x_{A})$
for the resulting ABox $\Amc'$. If in addition we require that 
$\text{tail}(w)=\text{tail}(ww')$, then the 
resulting ABox is (isomorphic to) a prefix closed sub ABox of $\Amc_{\Sigma}$ and so
the concept corresponding to the ABox $\Amc'$ is still entailed by $A$ w.r.t.~$\Omc$.

By performing the above replacement exhaustively, we obtain a prefix closed subset $\Amc$ of $\Amc_{\Sigma}$ that is of depth $\leq 2^{q(||\Omc||)}$ with $q$ a polynomial and therefore has the properties required for Point~3 of Theorem~\ref{thm:critele2}. Such an $\Amc$ can be constructed in at most double exponential time since one can construct the canonical model $\Imc_{\Omc,A}$ up to nodes of depth $\leq 2^{q(||\Omc||)}$ in double exponential time.   

The claims stated in Theorem~\ref{thm:critele2} for interpolants without the universal 
role are shown by modifying the proof above in a straightforward way. 

\section{Proofs for Sections~\ref{sec:final} and~\ref{sec:discussion}}
We first complete the proof of Theorem~\ref{thm:horn} by showing that there is a Horn-$\mathcal{ALCI}$-simulation between the interpretations $\Imc$ and $\Imc'$ defined in Figure~\ref{fig:hornsim}.
The definition of Horn-simulations is as follows.
For any two sets $X$ and $Y$ and a binary relation $R$, we set 
\begin{itemize}
	\item $XR^{\uparrow}Y$ if for all $x\in X$ there exists $y\in Y$ with $(x,y)\in R$;
	\item $XR^{\downarrow}Y$ if for all $y\in Y$ there exists $x\in X$ with  $(x,y)\in R$.
\end{itemize}
A relation $Z \subseteq
\mathcal{P}(\Delta^{\Imc}) \times \Delta^{\Imc'}$ is a
\emph{Horn-$\mathcal{ALCI}(\Sigma)$-simulation} between $\Imc$ and $\Imc'$  if $(X,b) \in Z$ implies  $X\not=\emptyset$
and the following hold:
\begin{itemize}
	\item for any $A\in \Sigma$, if $(X,b) \in Z$ and $X\subseteq A^{\Imc}$, then $b\in A^{\Imc'}$;
	
	\item for any role $r$ in $\Sigma$, if $(X,b) \in Z$ and $X r^{\Imc\uparrow} Y$, then there exist $Y' \subseteq Y$ and $b'\in \Delta^{\Imc'}$ with $(b,b')\in r^{\Imc'}$ and $(Y',b') \in Z$;
	
	\item for any role $r$ in $\Sigma$, if $(X,b) \in Z$ and $(b,b')\in r^{\Imc'}$,
	then there is $Y \subseteq \Delta^{\Imc}$ with $X r^{\Imc\downarrow} Y$ and $(Y,b') \in Z$;
	
	\item if $(X,b) \in Z$, then $\Imc',b\preceq_{\ELI,\Sigma}\Imc,a$ for every $a\in X$ 
	(where $\preceq_{\ELI,\Sigma}$ indicates that we have a simulation that does not only respect role names in 
	$\Sigma$ but also the inverse of role names in $\Sigma$).
\end{itemize}
We write $\Imc,X \preceq_{\textit{horn},\Sigma}\Imc',b$ if there exists a Horn-$\mathcal{ALCI}(\Sigma)$-simulation $Z$ between $\Imc$ and $\Imc'$ such that $(X,b) \in Z$. It is shown in \cite{DBLP:conf/lics/JungPWZ19} that if $\Imc,X \preceq_{\textit{horn},\Sigma}\Imc',b$, then all Horn-$\mathcal{ALCI}(\Sigma)$-concepts true in all nodes in $X$ are also true in $b$. 

Now observe that the relation $Z$ between $2^{\Delta^{\Imc}}$ and $\Delta^{\Imc'}$ containing all pairs $(\{x\},x')$, $(\{b,c\},b'')$, and $(\{d,e\},d'')$ is a
Horn-$\mathcal{ALCI}(\Sigma)$-simulation between the interpretations $\Imc$ and $\Imc'$
defined in Figure~\ref{fig:hornsim}, as required.

We next observe that moving to the Horn fragment Horn-GF of the guarded fragment is not sufficient to obtain a logic in which interpolants/explicit definitions always exist.
To this end we modify the ontology given in the proof of Theorem~\ref{thm:horn}. 
In detail, let $\Omc'$ contain the following CIs:
	\begin{align*}
		A & \sqsubseteq B \\
		B & \sqsubseteq \forall r.F\\
		F & \sqsubseteq \exists r_{1}.D_{1} \sqcap \exists r_{2}.D_{2} \sqcap \exists r_{1}.M \sqcap \exists r_{2}.M \\
		A & \sqsubseteq \forall r.((F \sqcap \exists r_{1}.(D_{1}\sqcap M) \sqcap \exists r_{2}.(D_{2}\sqcap M)) \rightarrow E) \\
		B & \sqsubseteq \exists r. C \\
		C & \sqsubseteq F \sqcap \forall r_{1}.D_{1} \sqcap \forall r_{2}.D_{2} \	
	\end{align*}
	and also $B \sqcap \exists r. (C \sqcap E)  \sqsubseteq A$. Define the signature $\Sigma$ by setting $\Sigma = \{B,D_{1},D_{2},E,r,r_{1},r_{2}\}$. We note that, intuitively, the third and fourth CI should be read as
	\begin{align*}
		F & \sqsubseteq \exists r_{1}.D_{1} \sqcap \exists r_{2}.D_{2} \\
		A & \sqsubseteq \forall r.((F \sqcap \forall r_{1}.D_{1} \sqcap \forall  r_{2}.D_{2}) \rightarrow E) 
	\end{align*}	
	and the concept name $M$ is introduced to achieve this in a projective way as the latter CI is not in Horn-$\mathcal{ALCI}$.
	
	We first observe that $A$ is implicitly definable from $\Sigma$ under $\Omc'$ since 
	$$
	\Omc' \models A \equiv B \sqcap \forall r. (\forall r_{1}.D_{1} \sqcap \forall r_{2}.D_{2} \rightarrow E).
	$$
	We next sketch the proof that $A$ is not explicitly Horn-GF$(\Sigma)$-definable under $\Omc'$. For a definition of Horn-GF and Horn-GF simulations we refer the reader to 
	\cite{DBLP:conf/lics/JungPWZ19}. Now consider the interpretations $\Imc$ and $\Imc'$ defined in Figure~\ref{figure:figfinal}. Both $\Imc$ and $\Imc'$ are models of $\Omc'$, $a\in A^{\Imc}$, $a'\not\in A^{\Imc'}$, but $a\in F^{\Imc}$ implies $a'\in F^{\Imc'}$ holds for every Horn-GF$(\Sigma)$-formula $F$, and the claim follows. The latter can be proved by observing that there exists a Horn-GF$(\Sigma)$-simulation between $\Imc$ and $\Imc'$ \cite{DBLP:conf/lics/JungPWZ19} containing $(\{a\},a)$.  In fact, one can show that the relation $Z$ containing all pairs $(\{x\},x')$, $(\{b,c\},b'')$, and $(\{d,e\},d'')$ is a Horn-GF$(\Sigma)$-simulation.
	
	\begin{figure}
          		\begin{tikzpicture}[auto,->,
                  every label/.style={font=\small,inner sep=0pt},
                  el/.style={},
			pl/.style={font=\small,inner sep=2pt},
			r1/.style={},r2/.style={},yscale=0.85]
			
			\node[el,label=above:{$A,B$}] (1) {$a$} ;
			\node[el,label={[yshift=0.1cm]below right:{$C,E,F$}}] at ($(1)-(1,1.5)$) (2) {$b$} ;
			\path (1) edge[above left] node[pl] {$r$} (2) ;
			\node[el,label={[yshift=0.1cm]below right:{$F$}}] at ($(1)+(1,-1.5)$) (3) {$c$} ;
			\path (1) edge node[pos=0.25,pl] {$r$} (3) ;
			
			\node[el,label=below:{$D_1,M$}] at ($(2)-(0,1.5)$) (4) {$d$} ;
			\path[r1] (2) edge[left] node[pl] {$r_1$} (4) ;
			
			\node[el,label=below:{$D_1$}] at ($(3)+(-0.5,-1.5)$) (5) {$e$} ;
			\path[r1] (3) edge[above left] node[pl,pos=0.7] {$r_1$} (5) ;
			\node[el,label=below:{$M$}] at ($(3)+(0.5,-1.5)$) (6) {$f$} ;
			\path[r1] (3) edge node[pl,pos=0.7] {$r_1$} (6) ;
			
			\node[el,label=above:{$D_2,M$}] at ($(1)+(1.8,0)$) (7) {$g$} ;
			\path[r2] (2) edge[above left] node[pos=0.8,pl] {$r_2$} (7) ;
			\path[r2] (3) edge[below right] node[pl] {$r_2$} (7) ;

			\node[el,label=above:{$B$}] at ($(1)+(4.2,0)$) (1') {$a'$} ;
			\node[el,label={[yshift=0.1cm]below right:{$E,F$}}] at ($(1')-(1.5,1.5)$) (2') {$b'$} ;
			\path (1') edge[above left] node[pl] {$r$} (2') ;
			\node[el,label={[yshift=0.1cm]below right:{$C,F$}}] at ($(1')-(0,1.5)$) (25') {$b''$} ;
			\path (1') edge[above left] node[pl] {$r$} (25') ;
			\node[el,label={[yshift=0.1cm]below right:{$F$}}] at ($(1')+(1.5,-1.5)$) (3') {$c'$} ;
			\path (1') edge node[pos=0.25,pl] {$r$} (3') ;
			
			\node[el,label=below:{$D_1,M$}] at ($(2')-(0,1.5)$) (4') {$d'$} ;
			\path[r1] (2') edge[left] node[pl] {$r_1$} (4') ;
			
			\node[el,label=below:{$D_1,M$}] at ($(25')-(0,1.5)$) (45') {$d''$} ;
			\path[r1] (25') edge[left] node[pl] {$r_1$} (45') ;
			
			\node[el,label=below:{$D_1$}] at ($(3')+(-0.5,-1.5)$) (5') {$e'$} ;
			\path[r1] (3') edge[above left] node[pl,pos=0.7] {$r_1$} (5') ;
			\node[el,label=below:{$M$}] at ($(3')+(0.5,-1.5)$) (6') {$f'$} ;
			\path[r1] (3') edge node[pl,pos=0.75] {$r_1$} (6') ;
			
			\node[el,label=above:{$D_2,M$}] at ($(1')+(2.2,0)$) (7') {$g'$} ;
			\path[r2] (2') edge[above left] node[pos=0.8,pl] {$r_2$} (7') ;
			\path[r2] (25') edge[below right] node[pos=0.55,pl] {$r_2$} (7') ;
			\path[r2] (3') edge[below right] node[pl] {$r_2$} (7') ;
			
		\end{tikzpicture}
	\caption{Interpretations $\Imc$ (left) and $\Imc'$ (right) used for $\Omc'$.}
\label{figure:figfinal}

\end{figure}

We finally make a few observations regarding the Horn fragment of first-order logic. 
Recall that \emph{Horn-FO} is defined as the closure
of formulas of the form 
$R(\vec{t})$,
$$ 
R_{1}(\vec{t}_{1})\wedge \cdots \wedge R_{n}(\vec{t}_{n}) \rightarrow R(\vec{t}), \quad
R_{1}(\vec{t}_{1})\wedge \cdots \wedge R_{n}(\vec{t}_{n}) \rightarrow \bot
$$
under conjunction, universal quantification, and existential quantification, where 
$\vec{t}_{1},\ldots,\vec{t}_{n},\vec{t}$ are sequences of individual variables and individual names \cite{modeltheory}. According to Exercise 6.2.6 in~\cite{modeltheory} Horn-FO has the following property.
\begin{theorem} Let $\varphi,\psi$ be sentences in Horn-FO such that $\varphi \wedge \psi$ is not satisfiable. 
	Then there exists a sentence $\chi$ in Horn-FO such that $\text{sig}(\chi) \subseteq \text{sig}(\varphi)\cap \text{sig}(\psi)$, $\varphi \models \chi$, and $\chi \wedge \psi$ is not satisfiable.
\end{theorem}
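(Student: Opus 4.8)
The plan is to take $\chi$ to be a finite conjunction of the Horn consequences of $\varphi$ in the shared signature. Write $\Sigma=\text{sig}(\varphi)\cap\text{sig}(\psi)$ and set
$T=\{\,\vartheta\ \text{Horn-FO}\mid \text{sig}(\vartheta)\subseteq\Sigma\ \text{and}\ \varphi\models\vartheta\,\}$.
Since Horn-FO is closed under conjunction and each $\vartheta\in T$ satisfies $\varphi\models\vartheta$, it suffices to prove that $T\cup\{\psi\}$ is unsatisfiable: by compactness some finite $\{\vartheta_1,\dots,\vartheta_k\}\subseteq T$ then has $\{\vartheta_1,\dots,\vartheta_k,\psi\}$ unsatisfiable, and $\chi:=\vartheta_1\wedge\dots\wedge\vartheta_k$ is a Horn-FO sentence with $\text{sig}(\chi)\subseteq\Sigma$, $\varphi\models\chi$, and $\chi\wedge\psi$ unsatisfiable. (Degenerate cases, e.g.\ $\psi$ unsatisfiable by itself, are trivial.) First I would record two harmless reductions: $\text{sig}(\varphi)\setminus\Sigma$ and $\text{sig}(\psi)\setminus\Sigma$ are disjoint (automatic from the definition of $\Sigma$), and a structure witnessing satisfiability of $\varphi$ together with one for $\psi$ over a common $\Sigma$-reduct may be taken to share a universe.

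The heart of the argument is the claim that the class $\mathcal K=\{\,\mathcal A_{|\Sigma}\mid \mathcal A\models\varphi\,\}$ of $\Sigma$-reducts of models of $\varphi$ coincides, on $\Sigma$-structures, with $\mathrm{Mod}(T)$, the class of models of $T$. Granting this: if $\mathcal N\models T\cup\{\psi\}$, then $\mathcal N_{|\Sigma}\in\mathrm{Mod}(T)=\mathcal K$, so $\mathcal N_{|\Sigma}$ expands to some $\mathcal A\models\varphi$ with the same universe as $\mathcal N$. As $\text{sig}(\varphi)$ and $\text{sig}(\psi)$ overlap only on $\Sigma$, and $\mathcal A,\mathcal N$ agree on $\Sigma$, the two expansions amalgamate into one structure $\mathcal M$ with $\mathcal M_{|\text{sig}(\varphi)}=\mathcal A$ and $\mathcal M_{|\text{sig}(\psi)}=\mathcal N$; hence $\mathcal M\models\varphi\wedge\psi$, contradicting unsatisfiability of $\varphi\wedge\psi$. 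Thus $T\cup\{\psi\}$ is unsatisfiable and we are done.

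To prove the claim, $\mathcal K\subseteq\mathrm{Mod}(T)$ is immediate from the definition of $T$. For the converse I would invoke the preservation/characterisation theorem underlying the cited exercise: $\mathcal K$ is a pseudo-elementary ($\mathrm{PC}_{\Delta}$) class, it is closed under reduced products because every Horn-FO sentence is preserved under reduced products and forming reducts commutes with forming reduced products, and a $\mathrm{PC}_{\Delta}$ class closed under reduced products is axiomatised by a set $\Gamma$ of Horn-FO sentences (over $\Sigma$). Every $\gamma\in\Gamma$ satisfies $\varphi\models\gamma$, since every $\mathcal A\models\varphi$ has $\mathcal A_{|\Sigma}\in\mathcal K=\mathrm{Mod}(\Gamma)$; hence $\Gamma\subseteq T$, so any model of $T$ models $\Gamma$ and therefore lies in $\mathcal K$. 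This gives $\mathrm{Mod}(T)\subseteq\mathcal K$.

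I expect the main obstacle to be the precise invocation of the Horn-axiomatisability characterisation: it must be used in the form covering (i) full Horn-FO, including existential quantifiers and clauses with head $\bot$, (ii) classes that are only pseudo-elementary rather than elementary, and (iii) closure under reduced products rather than mere direct products — and one must verify that $\mathcal K$ genuinely satisfies these hypotheses. If one prefers not to cite that theorem as a black box, the alternative is to show directly that each model $\mathcal N_0\models T$ expands to a model of $\varphi$, via a reduced-power/ultraproduct construction over an index set enumerating the relevant finite fragments of $\varphi$ and of the diagram of $\mathcal N_0$; this is self-contained but amounts to reproving the core of the characterisation, so it is more involved. The remaining ingredients — compactness, closure of Horn-FO under conjunction, and the disjoint-signature amalgamation — are straightforward. (A proof-theoretic route, partitioning a Horn-resolution refutation of $\varphi\wedge\psi$ to read off $\chi$, is also conceivable but seems no simpler to carry out rigorously.)
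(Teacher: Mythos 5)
The paper gives no proof of this statement at all --- it is imported verbatim as Exercise~6.2.6 of \cite{modeltheory} --- so your proposal has to stand on its own. Its outer layer is fine: the reduction to showing that $T\cup\{\psi\}$ is unsatisfiable (compactness plus closure of Horn-FO under conjunction), and the amalgamation of two structures over a common $\Sigma$-reduct. The gap is exactly the step you yourself flagged as the main obstacle: the claim that $\mathrm{Mod}(T)$ coincides with the reduct class $\mathcal K=\{\mathcal A_{|\Sigma}\mid \mathcal A\models\varphi\}$, and with it the black-box theorem ``a $\mathrm{PC}_\Delta$ class closed under reduced products is Horn-axiomatisable''. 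Both are false. Take $\varphi=\forall x\exists y\,F(x,y)\;\wedge\;\forall x\forall y\,(F(x,y)\to P(y))\;\wedge\;\forall x\forall y\forall z\,(F(x,z)\wedge F(y,z)\to x=y)$, a Horn-FO sentence (equality atoms are admitted as heads of Horn clauses), and $\Sigma=\{P\}$. A $\{P\}$-structure is a reduct of a model of $\varphi$ iff its universe injects into $P$, so $\mathcal K$ is the class of structures with $|P^{\mathcal B}|=|\Delta^{\mathcal B}|$. This class is $\mathrm{PC}$ and closed under reduced products, yet it is not closed under elementary equivalence: a structure of size $\aleph_1$ with $|P|=\aleph_1$ and infinite complement is elementarily equivalent to one of size $\aleph_1$ with $|P|=\aleph_0$, and only the former lies in $\mathcal K$. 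Hence $\mathcal K$ is not elementary, a fortiori not Horn-axiomatisable, and $\mathrm{Mod}(T)\supsetneq\mathcal K$. Consequently, from $\mathcal N\models T\cup\{\psi\}$ you cannot conclude that $\mathcal N_{|\Sigma}$ expands to a model of $\varphi$, and the amalgamation never gets off the ground.

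A telling symptom is that your argument nowhere uses that $\psi$ is Horn, only that $T\cup\{\psi\}$ is satisfiable. The intended solution of the exercise cannot work with $\mathcal N_{|\Sigma}$ itself: one must replace $\mathcal N$ by a suitable reduced power --- legitimate precisely because $\psi$ is Horn and hence preserved --- and match it against a reduced product of models of $\varphi$ --- legitimate because $\varphi$ is Horn --- using the Keisler--Galvin analysis of which structures become isomorphic to reduced products of models of a given Horn theory. That is essentially the ``self-contained alternative'' you mention in your last paragraph, and it is where all of the work lies; as written, the proposal replaces it with a preservation theorem that fails in the generality required.
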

We directly obtain the following interpolation result.
\begin{theorem} Let $\Omc_{1},\Omc_{2}$ be Horn-$\mathcal{ALCIO}_{u}$-ontologies and let $C_{1}, C_{2}$ be 
	Horn-$\mathcal{ALCIO}_{u}$-concepts such that $\Omc_{1}\cup \Omc_{2}\models C_{1}\sqsubseteq C_{2}$. Then there exists
	a formula $\chi(x)$ in Horn-FO such that 
	\begin{itemize}
		\item $\text{sig}(\chi) \subseteq \text{sig}(\Omc_{1},C_{1})\cap \text{sig}(\Omc_{2},C_{2})$;
		\item $\Omc_{1} \models \forall x (C_{1}(x) \rightarrow \chi(x))$;
		\item $\Omc_{2} \models \forall x (\chi(x) \rightarrow C_{2}(x))$.
	\end{itemize}	
\end{theorem}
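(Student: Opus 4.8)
The plan is to reduce the claim to the Horn-FO interpolation theorem stated just above, via the standard first-order translation $\cdot^{\sharp}$ of $\mathcal{ALCIO}_{u}$ (a concept $C$ to a formula $C^{\sharp}(x)$ with one free variable, a CI $D\sqsubseteq E$ to $\forall x\,(D^{\sharp}(x)\to E^{\sharp}(x))$, an ontology to the conjunction of the translations of its CIs; the universal role $u$ is rendered by unrestricted $\exists$/$\forall$ and hence contributes no non-logical symbol, matching the convention of the paper). Fix a fresh constant $c$. The aim is to exhibit Horn-FO \emph{sentences} $\varphi,\psi$ with $\varphi\wedge\psi$ unsatisfiable, $\text{sig}(\varphi)=\text{sig}(\Omc_{1},C_{1})\cup\{c\}$ and $\text{sig}(\psi)\subseteq\text{sig}(\Omc_{2},C_{2})\cup\{c\}\cup N$ for some set $N$ of fresh auxiliary predicates disjoint from $\text{sig}(\varphi)$, such that $\varphi$ expresses ``$\Omc_{1}$ holds and $c\in C_{1}$'' and $\psi$ expresses ``$\Omc_{2}$ holds and $c\notin C_{2}$''. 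Then the interpolant $\chi_{0}$ supplied by the theorem satisfies $\text{sig}(\chi_{0})\subseteq\text{sig}(\varphi)\cap\text{sig}(\psi)\subseteq\bigl(\text{sig}(\Omc_{1},C_{1})\cap\text{sig}(\Omc_{2},C_{2})\bigr)\cup\{c\}$ (the predicates in $N$ drop out, being absent from $\text{sig}(\varphi)$), and $\chi(x):=\chi_{0}[c/x]$ is the required formula.

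The first technical ingredient is to show that the translations involved land in Horn-FO. The key lemma is: for every Horn-$\mathcal{ALCIO}_{u}$-ontology $\Omc$ the sentence $\Omc^{\sharp}$ is equivalent to a Horn-FO sentence over $\text{sig}(\Omc)$, and for every Horn-$\mathcal{ALCIO}_{u}$-concept $C$ and term $t$ the formula $C^{\sharp}(t)$ is equivalent to a Horn-FO formula over $\text{sig}(C)$. I would prove this by a simultaneous structural induction over the $R$-concepts and the $L$-concepts, using the following manoeuvres: $L$-concepts translate to positive existential formulas (with disjunctions coming from $\sqcup$), but occur only in antecedent position, where disjunctions vanish via $(\alpha\vee\beta)\to\gamma\equiv(\alpha\to\gamma)\wedge(\beta\to\gamma)$ and leading existentials via $(\exists z\,\alpha)\to\gamma\equiv\forall z\,(\alpha\to\gamma)$; an existential rule $\forall\vec{x}\,(\phi\to\exists\vec{y}\bigwedge_{i}\psi_{i})$, which is what axioms like $A\sqsubseteq\exists r.B$ produce, is brought into Horn-FO shape via $\forall\vec{x}\,(\phi\to\exists\vec{y}\bigwedge_{i}\psi_{i})\equiv\forall\vec{x}\,\exists\vec{y}\bigwedge_{i}(\phi\to\psi_{i})$ together with distribution of implications over conjunctive consequents. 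It is convenient to first pass to a Horn-$\mathcal{ALCIO}_{u}$ normal-form conservative extension (as already used for Theorem~\ref{thm:7}), so that only a handful of axiom shapes need be inspected. Granting this, $\varphi:=\Omc_{1}^{\sharp}\wedge C_{1}^{\sharp}(c)$ is Horn-FO with the stated signature; and given a suitable $\psi$, a model of $\varphi\wedge\psi$ would be a model of $\Omc_{1}\cup\Omc_{2}$ with $c\in C_{1}$ and $c\notin C_{2}$, contradicting $\Omc_{1}\cup\Omc_{2}\models C_{1}\sqsubseteq C_{2}$, so $\varphi\wedge\psi$ is unsatisfiable. The two conclusions then drop out: from $\varphi\models\chi_{0}$, i.e.\ $\Omc_{1}^{\sharp}\models C_{1}^{\sharp}(c)\to\chi_{0}$, and freshness of $c$, generalising on $c$ gives $\Omc_{1}^{\sharp}\models\forall x\,(C_{1}^{\sharp}(x)\to\chi(x))$, hence $\Omc_{1}\models\forall x\,(C_{1}(x)\to\chi(x))$; and if some model $\Imc$ of $\Omc_{2}$ had $d\in\chi^{\Imc}\setminus C_{2}^{\Imc}$, then interpreting $c$ as $d$ and expanding $\Imc$ to a model of $\psi$ (its $N$-free reduct still satisfying $\chi_{0}$, since $\chi$ does not mention $N$) would contradict unsatisfiability of $\chi_{0}\wedge\psi$, so $\Omc_{2}\models\forall x\,(\chi(x)\to C_{2}(x))$.

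The step I expect to be the real obstacle is the construction of $\psi$, i.e.\ a Horn-FO rendering of ``$\Omc_{2}$ holds and $c\notin C_{2}$''. Here $\Omc_{2}^{\sharp}$ is already Horn-FO, but ``$c\notin C_{2}$'' is the negation of the Horn-FO formula $C_{2}^{\sharp}(c)$ and need not itself be Horn (e.g.\ for $C_{2}=\exists r.(\forall s.A\sqcap\forall t.B)$ the negation fails to be preserved under direct products, so $\psi$ cannot simply be $\Omc_{2}^{\sharp}\wedge\neg C_{2}^{\sharp}(c)$). The line of attack I would take is: first rewrite $C_{2}$ as an equivalent conjunction $D_{1}\sqcap\cdots\sqcap D_{k}$, pushing $\sqcap$ inward as far as possible using $\forall r.(E\sqcap E')\equiv\forall r.E\sqcap\forall r.E'$, $L\to(E\sqcap E')\equiv(L\to E)\sqcap(L\to E')$ and $(L\sqcup L')\to E\equiv(L\to E)\sqcap(L'\to E)$, which reduces the task to treating each $D_{j}$ separately and taking for $\chi$ the conjunction of the resulting interpolants. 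For the indecomposable shapes one checks, by induction on the negation-normal form, when $\neg D_{j}^{\sharp}(c)$ is itself Horn-FO -- in which case $\psi:=\Omc_{2}^{\sharp}\wedge\neg D_{j}^{\sharp}(c)$ works directly; the remaining shapes (essentially those in which a disjunction survives underneath an $\exists$) have to be handled by introducing fresh concept names together with axioms that capture the negation \emph{projectively}, in the spirit of the auxiliary-concept ($M$) construction used in the proof of Theorem~\ref{thm:horn}, so that the enlarged sentence is Horn-FO while still being entailed by, and re-axiomatising conservatively, ``$\Omc_{2}$ holds and $c\notin D_{j}$''. Getting this reformulation right -- in particular ensuring that the class of models of $\Omc_{2}$ avoiding $D_{j}$ at $c$ is (a reduct of the models of) a Horn-FO sentence -- is where the bulk of the work lies; the translation, the signature bookkeeping, and the two entailments above are routine once it is in place.
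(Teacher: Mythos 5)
Your overall strategy coincides with the paper's: introduce a fresh constant $c$, encode ``$\Omc_{1}$ and $c\in C_{1}$'' as a Horn-FO sentence $\varphi$ and ``$\Omc_{2}$ and $c\notin C_{2}$'' as a Horn-FO sentence $\psi$, apply the Horn-FO interpolation theorem, and read off the two entailments by generalising on $c$. The parts you carry out --- the structural induction showing that $\Omc^{\sharp}$ and $C^{\sharp}(c)$ are Horn-FO, the unsatisfiability of $\varphi\wedge\psi$, and the derivation of the two bullet points from the interpolant --- match the paper and are fine. The gap is exactly where you locate it: you never actually produce a Horn-FO sentence $\psi$. The paper's device at this point is to introduce a fresh unary predicate $A$ and take $\psi$ to be the conjunction of $\Omc_{2}$, $\forall x\,(C_{2}(x)\leftrightarrow A(x))$ and $\neg A(c)$, asserting that the result is equivalent to a Horn-FO sentence; your proposal does not arrive at this (or any other) concrete construction, so as written it is not a proof.

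Moreover, the repair you sketch --- decompose $C_{2}$ and introduce auxiliary concept names so that ``models of $\Omc_{2}$ with $c\notin C_{2}$'' becomes the reduct class of a Horn-FO theory --- cannot succeed in general, by the very preservation argument you invoke. Reduct classes of Horn theories are closed under direct products, but for your own example $C_{2}=\exists r.(\forall s.A\sqcap\forall t.B)$ (already with $\Omc_{2}=\emptyset$) the class of pointed models with $c\notin C_{2}$ is not: take one model in which the unique $r$-successor of $c$ has an $s$-successor outside $A$ and no $t$-successors, and another in which the unique $r$-successor has a $t$-successor outside $B$ and no $s$-successors; in the direct product the unique $r$-successor of $c$ has no $s$- or $t$-successors at all, so $c$ belongs to $C_{2}$ there. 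Hence no choice of auxiliary predicates can make $\psi$ Horn while keeping its reducts exactly the models of $\Omc_{2}$ avoiding $C_{2}$ at $c$, and your plan is a dead end rather than merely unfinished. (The same observation shows that the paper's one-sentence claim that its $\psi$ is equivalent to a Horn-FO sentence also needs further justification for such $C_{2}$; but in any case your proposal does not establish the theorem.)
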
	
\begin{proof}
	Take a fresh unary relation symbol $A(x)$ and a fresh individual name $c$. Let $\varphi$ be the conjunction of all sentences in $\Omc_{1}\cup \{C_{1}(c)\}$ and let $\psi$ be the conjunction of all sentences in $\Omc_{2} \cup \{\forall x (C_{2}(x) \leftrightarrow A(x)), \neg A(c)\}$.
	Then $\varphi$ and $\psi$ are both equivalent to sentences in Horn-FO.
	By definition $\varphi \wedge \psi$ is not satisfiable. Thus there exists a Horn-FO sentence $\chi$ using only $c$ and symbols in  $ \text{sig}(\Omc_{1},C_{1})\cap \text{sig}(\Omc_{2},C_{2})$
	such that $\varphi \models \chi$ and $\chi \wedge \psi$ is not satisfiable. Thus:
	\begin{itemize}
		\item $\Omc_{1} \models C_{1}(c) \rightarrow \chi$;
		\item $\Omc_{2} \cup \{\forall x (C_{2}(x) \leftrightarrow A(x))\}\models \chi \rightarrow A(c)$.
	\end{itemize}	
	Replace $c$ by $x$ in $\chi, C_{1}(c)$, and $A(c)$. Then
	\begin{itemize}
		\item $\Omc_{1} \models \forall x (C_{1}(x) \rightarrow \chi(x))$;
		\item $\Omc_{2} \models \forall x (\chi(x) \rightarrow C_{2}(x))$,
	\end{itemize}	
	as required.
\end{proof}
Applied to Horn-$\mathcal{ALCI}$ ontologies and concepts we thus always obtain an interpolant in Horn-FO and an interpolant in $\mathcal{ALCI}$ (since $\mathcal{ALCI}$ enjoys the CIP~\cite{TenEtAl13}). 

It would be interesting to find out whether there exists an interpolant in the intersection of Horn-FO and $\mathcal{ALCI}$ and whether it is possible to give an informative syntactic description of that intersection.
\end{document}